\newcommand{\R}{\mathbb{R}}
\newcommand{\N}{\mathbb{N}}
\newcommand{\Z}{\mathbb{Z}}
\newcommand{\W}{\mathcal{W}}
\renewcommand{\S}{\mathcal{S}}
\newcommand{\Ls}[1]{\text{L}^{#1}}
\renewcommand{\L}[1]{\text{L}^{#1}(\Omega)}
\newcommand{\D}{\text{D}}
\newcommand{\id}{\textnormal{Id}}
\DeclareMathOperator*{\argmin}{\textnormal{argmin}}
\newcommand{\E}[1]{\mathbf{E}\left(#1\right)}
\newcommand{\Prob}{\mathbb{P}}
\newcommand{\eps}{\varepsilon}
\newcommand{\sign}{\text{sign}}
\renewcommand{\ker}{\textnormal{ker}}
\newcommand{\ran}{\textnormal{ran}}
\newcommand{\set}[1]{\left\{ #1 \right\}}
\newcommand{\abs}[1]{\left| #1 \right|}
\newcommand{\norm}[1]{\left\| #1 \right\|}
\newcommand{\inner}[2]{\left\langle #1, #2 \right\rangle}
\newcommand{\smallo}{\text{o}}
\newcommand{\bigo}{\mathcal{O}}
\newcommand{\ra}{\rightarrow}
 \def\vec#1{\ensuremath{\mathchoice
                     {\mbox{\boldmath$\displaystyle\mathbf{#1}$}}
                     {\mbox{\boldmath$\textstyle\mathbf{#1}$}}
                     {\mbox{\boldmath$\scriptstyle\mathbf{#1}$}}
                     {\mbox{\boldmath$\scriptscriptstyle\mathbf{#1}$}}}}%
\def\vec#1{\ensuremath{\mathchoice
                     {\mbox{\boldmath$\displaystyle#1$}}
                     {\mbox{\boldmath$\textstyle#1$}}
                     {\mbox{\boldmath$\scriptstyle#1$}}
                     {\mbox{\boldmath$\scriptscriptstyle#1$}}}}%
\theoremstyle{plain}
\newtheorem{thm}{Theorem}[section]
\newtheorem{prop}[thm]{Proposition}
\newtheorem{assa}{Assumption}
\theoremstyle{definition}
\newtheorem{rem}[thm]{Remark}
\newtheorem{lem}[thm]{Lemma}
\newtheorem{cor}[thm]{Corollary}
\newtheorem{example}[thm]{Example}
\newtheorem*{example*}{Example}
\newtheorem*{dfn*}{Definition}
\newtheorem*{alg*}{Algorithm}
\theoremstyle{remark}
\begin{document}    
         
\title[Statistical Multiresolution Dantzig Estimation in Imaging]{Statistical
Multiresolution Dantzig Estimation in Imaging: Fundamental Concepts and
Algorithmic Framework}

\author{Klaus Frick}
\address{Institute for Mathematical Stochastics\\
University of G{\"o}ttingen\\
Goldschmidtstra{\ss}e 7, 37077 G{\"o}ttingen}
\email{frick@math.uni-goettingen.de}
\thanks{Correspondence to frick@math.uni-goettingen.de }
  
\author{Philipp Marnitz}
\address{Institute for Mathematical Stochastics\\
University of G{\"o}ttingen\\
Goldschmidtstra{\ss}e 7, 37077 G{\"o}ttingen}
\email{marnitz@math.uni-goettingen.de}

\author{Axel Munk}
\address{Institute for Mathematical Stochastics\\
University of G{\"o}ttingen\\
Goldschmidtstra{\ss}e 7, 37077 G{\"o}ttingen\\
 and}
\address{Max Planck Institute for Biophysical Chemistry \\
Am Fa{\ss}berg 11, 37077 G{\"o}ttingen}
\email{munk@math.uni-goettingen.de}

\keywords{Alternating Direction Method of Multipliers (ADMM), Biophotonics,
Dantzig Selector, Dykstra's Projection Algorithm, Statistical Imaging, Statistical Multiscale Analysis, Statistical
Regularization, Local Adaption, Signal Detection}
      
\subjclass[2000]{Primary: 62G05, 90C06; secondary 68U10}        
                             
\begin{abstract}           
In this paper we are concerned with fully automatic and
locally adaptive estimation of functions in a ``signal $+$ noise''-model where the regression
function may additionally be blurred by a linear operator, e.g. by a
convolution. To this end, we introduce a general class of \emph{statistical
multiresolution estimators} and develop an algorithmic framework for computing those. By
this we mean estimators that are defined as solutions of convex optimization problems with
$\ell_\infty$-type constraints. We employ a combination of the alternating
direction method of multipliers with Dykstra's algorithm for
computing orthogonal projections onto intersections of convex sets and prove
numerical convergence. The capability of the proposed method is illustrated by
various examples from imaging and signal detection.  
\end{abstract}  
       
\maketitle   

\newlength{\imwidth}
\setlength{\imwidth}{12.5cm}

\section{Introduction}\label{intro} 

In numerous applications, the relation of
observable data $Y$ and the (unknown) signal of interest  $u^0$ can be
modeled as an inverse linear regression problem. We shall assume that the data
$Y = \set{Y_{\vec\nu}}$ is sampled on the 
equidistant grid $X = \set{1,\dots,m}^d$, with $m,d\in \N$ and that $u^0\in U$
for some linear space $U$, such as the Euclidean space or a Sobolev class of
functions. Hence the model can be formalized as
\begin{equation}\label{intro:lineqn}
  Y_{\vec\nu} = (Ku^0)_{\vec\nu} + \eps_{\vec\nu},\quad \vec\nu \in X.
\end{equation}
Here we assume that $\eps = \set{\eps_{\vec\nu}}_{\vec\nu \in X}$ are independent
and identically distributed r.v. with $\E{\eps_{\vec\nu}} = 0$ and
$\E{\eps_{\vec\nu}^2} = \sigma^2>0$ (white noise). Moreover,  $K:U\ra (\R^{m})^d$
denotes a linear operator that encodes the functional relation between the
quantities that are accessible by experiment and the underlying signal.
Often the operator $K$ does not have a continuous inverse (or its inverse is
ill-conditioned in a discrete setting, where $K$ is a matrix), that is
estimation of $u^0$ given the data $Y$ is an \emph{ill-posed problem}. As a consequence, estimators for $u^0$
can not be obtained by merely applying the inverse of $K$  to an
estimator of $Ku^0$, in general.  Instead, more sophisticated \emph{statistical
regularization}  techniques have to be employed that, loosely speaking, are
capable of simultaneously inverting $K$ and solving the regression problem.

The application we primarily have in mind is the reconstruction of low-dimensional
 signals (e.g. images) $u^0$ which are presumed to
 exhibit a strong neighborhood structure as it is characteristic for imaging or
 signal detection problems. These neighborhood
relations are often modeled by prior smoothness or structural assumptions on
$u^0$ (e.g. on the texture of an image).

The aim of this paper is twofold. First, we will introduce the broad class of
\emph{statistical multiresolution estimators (SMRE)}. We claim that numerous
regularization techniques, that were recently  proposed for different problems
in various branches of applied mathematics and statistics, can be considered as
special cases of these. Among others, this includes the \emph{Dantzig selector}
(see \cite{frick:BicRitTsy09,frick:CanTao07,frick:JamRadLv09} and references
therein) that was recently proposed in the context of high dimensional
statistics. Our prior focus, however, will be put on imaging problems and it
will turn out that the aforementioned neighborhood relations can be modeled
within our SMRE framework in a straightforward manner. This will result in
\emph{locally adaptive} and \emph{fully automatic} image reconstruction methods.

The high intrinsic structure of the signals that are typically under
consideration in imaging is in contrast to the usual situation in
high-dimensional statistics. Here $u^0$ is usually assumed to be unstructured
but to have a sparse representation with respect to some basis of $U$  (cf.
\cite{frick:Tib94, frick:CheDonSau01, frick:CanTao07}). Consequently, the
consistent estimation of $u^0$ is realized by minimizing a regularization functional which
fosters sparsity, such as the $\ell_1$-norm of the coefficients, subject to an
$\ell_\infty$-constraint on the coefficients of the residual, i.e.
\begin{equation}\label{intro:dantzig}
\inf_{u\in U}\norm{u}_1\quad\text{ s.t. }\quad \norm{K^*(Y - Ku)}_\infty\leq q.
\end{equation}
In order to apply this approach for image reconstruction, two modifications
become necessary: Often one aims to minimize other regularization functionals
such as the total variation semi-norm (cf.
\cite{frick:MamGee97,frick:MohBerGolOsh11})  or Sobolev norms, say. Hence, we
suggest to replace the $\ell_1$-norm in \eqref{intro:dantzig} by a general
convex functional $J$ that models the smoothness or texture information of
signals or images (cf. \cite{frick:AujAubBlaCha05,frick:Mey01}). Furthermore, we
relax the $\ell_\infty$-constraint such that neighborhood relations of the image
can be taken into account. This generalizes the Dantzig selector to
this task in a natural way and obviously increases estimation efficientcy.
As we will layout in Paragraph \ref{intro:algo}, this requires new algorithms to compute efficiently the resulting large scale optimization problem.

\subsection{Statistical Multiresolution Estimation}\label{intro:smre} 

We will now introduce the announced class of estimators. To this end, let $\S$
be some index set and $\mathcal{W} = \set{\omega^S~:~S\in\S}$ be a set of given
weight-functions on the grid $X=\set{1,\ldots,m}^d$.
A \emph{statistical multiresolution estimator (SMRE)} (or \emph{generalized
Dantzig selector}), is defined as a solution of the constrained optimization
problem
\begin{equation}\label{intro:smreeqn}
  \inf_{u\in U} J(u)\quad\text{ s.t. }\quad \max_{S\in\S} 
  \abs{\sum_{\vec\nu \in X}
  \omega^S_{\vec\nu}\left(\Lambda(Y-Ku)\right)_{\vec\nu}} \leq q.
\end{equation}
Here, $J:U\ra \R$ denotes a regularization functional that incorporates a priori
knowledge on the unknown signal $u^0$ (such as smoothness) and
$\Lambda:(\R^m)^d\ra(\R^m)^d$ a possibly non-linear transformation. The
constant $q$ can be considered as a universal \emph{regularization parameter}
that governs the trade-off between regularity and data-fit of the reconstruction. In
most practical situations $q$ is chosen to be the $\alpha$-quantile $q_\alpha$ of the
\emph{multiresolution (MR) statistic} $T(\eps)$, where $T:(\R^m)^d\ra \R$
encodes the inequality constraint in \eqref{intro:smreeqn}, i.e. 
\begin{equation}\label{intro:mrstateqn}
  T(v) = \max_{S\in\S} \abs{\sum_{\vec\nu \in X}
  \omega^S_{\vec\nu}\left(\Lambda(v)\right)_{\vec\nu}},\quad v\in(\R^m)^d.
\end{equation}
To this end, we assume the distribution of $T(\eps)$ to be (approximately)
known. This can either be obtained by simulations or in some cases the limiting
distribution can even be derived explicitly. The regularization parameter
$q$ then admits a sound statistical interpretation: Each solution $\hat
u_\alpha$ of \eqref{intro:smreeqn} satisfies
\begin{equation*}
  \Prob\left( J(\hat u_\alpha) \leq J(u^0)\right) \geq \alpha
\end{equation*}
i.e. the estimator $\hat u_\alpha$ is \emph{more regular} (in terms of $J$)
than $u^0$ with a probability of at least $\alpha$.  To see this simply observe
that the true signal $u^0$ satisfies the constraint in \eqref{intro:smreeqn}
with probability at least $\alpha$. 

For a given estimator $\hat u$ of $u^0$, the set $\mathcal{W}$ is assumed to be
rich enough in order to catch all relevant non-random signals that are visible
in the residual $Y - K\hat u$. Then, the average function 
\begin{equation} \label{intro:mean} 
  \mu_{S}(v) = \abs{\sum_{\vec\nu \in X}\omega^S_{\vec\nu} 
  \left(\Lambda(v)\right)_{\vec\nu}}
\end{equation}
evaluated at $v = Y-K\hat u$ is supposed to be significantly larger than $q$
for at least one $\omega\in \mathcal{W}$, whenever $Y-K\hat u$ fails to resemble
white noise. Put differently, the MR-statistic  $T(Y-K\hat u)$
is bounded by $q$, whenever $Y - K\hat u$ is accepted as white noise according
to the \emph{resolution} provided by $\mathcal{W}$. In fact, this is a key
observations that reveals numerous potential application areas of the
estimation method \eqref{intro:smreeqn}. The examples we have in mind are
mainly from \emph{statistical signal detection and imaging}, where the index set
$\S$ is typically chosen to be an overlapping (redundant) system of subsets of the grid $X$ and
$\omega^S$ is the normalized indicator function on $S\in\S$. Consequently the inequality constraint in \eqref{intro:smreeqn} guarantees that
the residual resembles white noise on all sets $S\in\S$. In other words, the
SMRE approach in \eqref{intro:smreeqn} yields a
reconstruction method that \emph{locally adapts the amount of regularization}
according to the underlying image features. We illustrate this in Section
\ref{appl} by various examples. 

Summarizing, the optimization problem in \eqref{intro:smreeqn} amounts to
choose the most parsimonious among all estimators $\hat u$ for which the
residual $Y - K\hat u$ resembles white noise according to the statistic $T$.
If $Y-K\hat u$ contains some non randon
 signal, i.e. $T(Y - K\hat u)$ is likely to be larger than $q$ and $u$ happens
 to lie outside the admissible domain of \eqref{intro:smreeqn}. Thus,  the
 multi-resolution constraint prevents too parsimonious reconstructions due to the minimization of $J$.   
   
\subsection{Algorithmic Challenges and Related Work}

\subsubsection{Multiresolution Methods} SMREs and related MR statistics
have recently been studied in various contexts.  We give a brief (but
incomplete) overview.

Classical MR statistics are obtained from the general form in
\eqref{intro:mrstateqn} by setting $U = (\R^m)^d$ and $\Lambda = \id$.
Moreover, one considers the system $\mathcal{W}$ to contain indicator
functions on cubes. To be more precise, define the index set $\S$ to be the
system of all $d$-dimensional cubes in $X$ and set $\omega^S =
 \chi_S\slash \sqrt{\#S}.$ Then, the MR-statistic in \eqref{intro:mrstateqn}
 reduces to
\begin{equation*}
  T(v) = \max_{S\in\S}\frac{1}{\sqrt{\#S}}\abs{\sum_{\vec\nu \in S}
  v_{\vec\nu}}.
\end{equation*}
This statistic was introduced in \cite{frick:SieYak00} (called
scanning statistic there) in order to detect a signal against a noisy background. It
was shown in \cite{frick:KabMun08} 
that 
\begin{equation*}
  \lim_{m\ra\infty} \frac{T(\eps)}{\sqrt{2d\log m}} = \sigma\quad\text{ a.s.}
\end{equation*}
If the system $\mathcal{S}$ is reduced to the set of all \emph{dyadic} squares,
then it was proved in \cite{frick:HotMarStiDavKabMun12} that (after suitable
transformations) $T$ also converges weakly to the Gumbel
distribution. There, the authors also established a method for locally
adaptive image denoising employing linear diffusion equations with spatially
varying diffusivity. SMREs \eqref{intro:smreeqn} have been studied
recently for the case $d=1$ in \cite{frick:DavKovMei09} and
\cite{frick:BoyKemLieMunWit09}, where total-variation penalty and the number of
jumps in piecewise constant regression were considered as regularization functional $J$,
respectively. In \cite{frick:FriMarMun10} consistency and convergence rates
for SMREs have been studied in a general Hilbert space setting.

SMREs with squared residuals, that is $\Lambda(v)_{\vec\nu} =
v_{\vec\nu}^2$, yield another class of estimators that have attracted much
attention. Above all, the situation where $\S$ consists
of the single set $X$ and $\omega^X$ is chosen to be the constant $1$ function is of special
interest, since then \eqref{intro:smreeqn} reduces to the \emph{penalized least
square estimation}. In particular \eqref{intro:smreeqn} then can be rewritten
into
\begin{equation}\label{intro:penleastsquare}
  \inf_{u\in U} J(u) + \lambda \sum_{\vec\nu \in X} (Ku-Y)_{\vec\nu}^2
\end{equation}
for a suitable multiplier $\lambda > 0$. If $J(u) = \norm{u}_1$ the
LASSO estimator will result (cf. \cite{frick:Tib94}). Recently, also non-trivial
choices of $\S$ were considered. In \cite{frick:BerCasRouSol03} $\S$ is chosen to consist of a partition of $G$ which is obtained beforehand by a Mumford-Shah segmentation.
In \cite{frick:DonHinRin11}, a subset $S\subset X$ is fixed and afterwards $\S$
is defined as the collection of all translates of $S$.

In \cite{frick:DueSpo01} MR-statistics are used for shape-constrained
estimation based on testing qualitative hypothesis in nonparametric regression
for $d=1$. Here, the weight functions $\omega^S$ incorporate qualitative features such as monotonicity or concavity.
Similarly, MR-statistics are used in \cite{frick:DueWal08} in order to detect
locations of local increase and decrease in density estimation. Much in the same
spirit is the work in \cite{frick:DueJoh04} where multiscale sign tests are
employed for computing confidence bands for isotonic median curves.
 
As mentioned previously, the \emph{Dantzig-selector} \cite{frick:CanTao07}
is also covered by the general SMRE framework in \eqref{intro:smreeqn}. To see
this, set $U = \R^p$ (with typically $p\gg m$), $\Lambda = \id$ and define the weights
\begin{equation*}
  \omega^S = K \chi_{S},\quad S\in \S.
\end{equation*} 
Then, each solution of \eqref{intro:smreeqn} can be considered as a generalized
Dantzig selector. The matrix $K\in\R^{m\times p}$ in this context is usually
interpreted as \emph{design matrix} of a high dimensional linear model. The
classical Dantzig selector as introduced in \cite{frick:CanTao07} then results
in the special case where $\S$ only consists of single-elemented subsets of
$\set{1,\ldots,p}$ and $J$ is chosen to be the $\ell_1$-regularization
functional
\begin{equation*}
  J(u) = \norm{u}_1 = \sum_{i=1}^p \abs{u_i}.
\end{equation*}
Hence LASSO and Dantzig selector are uni-scale estimators which take into
account the largest ($\S = \set{X}$) and smallest ($\S$ consists of all
singletons in $\set{1,\ldots,p}$) scales, respectively. In this sense, they
constitute two extreme cases of SMRE.
          
\subsubsection{Algorithmic Challenges}\label{intro:algo} From a computational
point of view, computing an SMRE amounts to solve the \emph{constrained
optimization problem} \eqref{intro:smreeqn} which can be rewritten into
\begin{equation}\label{intro:smreeqnp}
  \inf_{u\in U} J(u)\quad\text{ s.t. }\quad \mu_{S}(Y - Ku)
  \leq q, \; \forall(S\in\mathcal{S}).  
\end{equation}  
We note that in practical applications the number of constraints in
\eqref{intro:smreeqnp}, that is the cardinality of the index set $\mathcal{S}$,
can be quite large (in Section \ref{appl:denoising} denoising of a $512\times
512$ image results in more than $6$ million inequalities). Moreover, the inequalities
(even for the simplest case where $\Lambda = \id$) are mutually correlated. Both of these facts turn
\eqref{intro:smreeqnp} into a numerically challenging problem and standard
approaches (such as interior point or conjugate gradient methods) perform far from satisfactorily.
  
The authors in \cite{frick:BerCasRouSol03,frick:DonHinRin11,frick:HotMarStiDavKabMun12}
approach the numerical solution of \eqref{intro:smreeqnp}  by means of an analogon of 
\eqref{intro:penleastsquare} with spatially dependent multiplier
$\lambda\in(\R^m)^d$,  i.e. 
\begin{equation*}
  \inf_{u\in U} J(u) + \sum_{\vec\nu \in X}\lambda_{\vec\nu}
  (Ku-Y)_{\vec\nu}^2.
\end{equation*} 
Starting from a (constant) initial parameter $\lambda = \lambda_0$, the
parameter $\lambda$ is iteratively adjusted by increasing it in regions
which were poorly reconstructed before according to the MR-statistic $T$. 
This approach strongly depends on the special structure of $\S$ that allows a
straightforward identification of each set $S\in\S$ with a unique point in the
grid $X$. Put differently, 
it is not clear how to modify this paradigm  in order to solve
\eqref{intro:smreeqn} for highly redundant systems $\S$ as we have it in mind. 

Recently a general algorithmic framework was introduced in
\cite{frick:BecCanGra10} for the solutions of large-scale convex cone problems
\begin{equation*}
  \inf_{u\in U} J(u) \quad\text{ s.t. }\quad Ku-Y \in \mathcal{K}
\end{equation*}
where $\mathcal{K}$ is a convex cone in some Euclidean space. The approach was
realized in the software package \emph{Templates for First-Order Conic Solvers
(TFOCS)} \footnote{available at \url{http://tfocs.stanford.edu/}}. The above
formulation is very general and in order to
recover \eqref{intro:smreeqnp} one has to consider the cone
\begin{equation*}
  \mathcal{K} = \set{(v,q)\in (\R^m)^d\times \R ~:~ 
  \abs{\sum_{\vec\nu\in S} \Lambda(v)_{\vec\nu}} \leq q\;\forall(S\in\S)} 
\end{equation*}  
The approach in \cite{frick:BecCanGra10} employs the dual formulation of the
problem 
\begin{equation*}
\inf_{\xi \in V} J^*(K^* v) + \inner{Y}{v}\quad\text{ s.t. } v \in \mathcal{K}^*
\end{equation*}
which involves the computation of the dual cone $\mathcal{K}^*$ ($J^*$ denotes
the Legendre-Fenchel dual of $J$). This approach is particularly appealing for
the uni-scale Dantzig selector since in this situation the cone $\mathcal{K}$
coincides with the epi-graph of the $\ell^\infty$-norm and hence its dual cone is
straightforward to compute (it is the epi-graph of the $\ell^1$-norm). As it is
argued in \cite{frick:BecCanGra10}, this approach is capable of computing
Dantzig selectors for large scale problems in contrast to previous approaches
such as standard linear programming techniques
\cite{frick:CanTao07} or homotopy methods such as DASSO \cite{frick:JamRadLv09}
or \cite{frick:Rom08}. As the authors stress, their approach
works well in the case when $\mathcal{K}$ is the epi-graph of a
norm for which the projections onto $\mathcal{K}^*$ are tractable and
computationally efficient. However, for the applications we have in mind (such
as locally adaptive imaging reconstruction), the approach in
\cite{frick:BecCanGra10} is only of limited use: In contrast to the
aforementioned epi-graphs, the large number of (strongly dependent) constraints
in \eqref{intro:smreeqnp} brings about a cone $\mathcal{K}$ that on the one hand
exhibits a tremendous amount of faces compared to the dimension of the image
space $\dim(H) = md$ and that on the other hand is no longer symmetric w.r.t. to
the $q$-axis. Both of these facts turn the computation of dual cone
$\mathcal{K}^*$ (or the projections onto it) into a most
 challenging problem, even in the simplest case when $\Lambda$ is linear.

The aim of this paper is to develop a general algorithmic framework that makes
solutions of \eqref{intro:smreeqnp} numerically accessible for many
applications. In order to do so we propose to introduce a slack variable in
\eqref{intro:smreeqnp} and then use the \emph{alternating direction method of
multipliers}, an Uzawa-type algorithm that decomposes problem
\eqref{intro:smreeqnp} into a $J$-penalized least squares problem for the primal variable and a orthogonal
projection problem on the feasible set of \eqref{intro:smreeqnp} for the slack
variable. This approach has the appealing effect that once an implementation for
the projection problem is established, different regularization functionals $J$
can easily be employed without changing the backbone of the algorithm. Our work
is much in the same spirit as \cite{frick:LuPonZha10}, which considered an
alternating direction method for the computation of the Dantzig selector
recently. In this case the computation of the occurring orthogonal
projections are available in closed form, whereas in our applications this is
not the case due to the aforementioned dependencies.
  
In order to tackle the orthogonal projection problem we employ Dykstra's
projection method \cite{frick:BoyDyk86} which is capable of computing the
projection onto the intersection of convex bodies by merely using the
individual projections onto the latter. The efficiency of the proposed
method hence increases considerably if the index set $\S$ can be decomposed into
``few" partitions that contain indices of mutually independent inequalities in
\eqref{intro:smreeqnp}.  In particular,  by this approach we will be able to
compute classical SMRE  (as introduced in \cite{frick:DavKovMei09,
frick:FriMarMun10}) in $d=2$  space dimensions which to our knowledge has never
been done so far.  This puts us into the position to study the performance
of such estimators compared with  other benchmark methods in locally adaptive
signal recovery (such as \emph{adaptive weights smoothing} cf.
\cite{frick:PolSpo00}). As it will turn out in Section \ref{appl} it will outperform these visually as well as quantitatively. 

\subsection{Organization of the Paper}
 
The paper is organized as follows: In Section \ref{impl} we introduce a general
algorithmic approach for computing SMREs. We will rewrite \eqref{intro:smreeqnp}
into a linearly constrained problem and compute a saddle point of the
corresponding augmented Lagrangian by the alternating direction method of
multipliers in Paragraph \ref{impl:deco}. Under quite general
assumption, we prove convergence of the algorithm in Theorem \ref{impl:alaconv} and give some
qualitative estimates for the iterates in Theorem \ref{impl:alaconvcor}. One of
the occurring minimization steps amounts to the computation of an orthogonal
projection onto a convex set in Euclidean space. In Paragraph \ref{impl:proj},
this problem will be tackled by means of Dykstra's projection algorithm
introduced in \cite{frick:BoyDyk86}. Finally, we illustrate the performance of
some particular instances of SMREs in Section \ref{appl}: we study
problems in nonparametric regression, image denoising and deconvolution of
fluorescence microscopy images and compare our results to other methods by means
of simulations. 

\section{Computational Methodology}\label{impl}

In this section we will address the question on how to solve the linearly
constrained optimization problem \eqref{intro:smreeqnp}. After discussing some
notations and basic assumptions in Subsection \ref{review:assnot}, we will
reformulate the problem in Paragraph \ref{impl:deco} such that the alternating
direction method of multipliers (ADMM), a Uzawa-type algorithm,  can be employed
as a solution method. As an effect, the task of computing a solution of \eqref{intro:smreeqnp} is
replaced by alternating 
\begin{enumerate}[i)] 
  \item solving  an unconstrained
penalized least squares problem that is \emph{independent of the MR-statistic
$T$} and 
\item computing the orthogonal projection on a convex set in Euclidean
space that is \emph{independent of $J$}.
\end{enumerate}
This reveals an appealing modular nature of our approach: The regularization
functional $J$ can easily be replaced once a method for the projection problem
is settled. For the latter we will propose an iterative projection algorithm in
Paragraph \ref{impl:proj} that was introduced by Boyle and Dykstra in
\cite{frick:BoyDyk86}.

\subsection{Basic Assumptions and Notation}\label{review:assnot}

From now on, $X$ will stand for the $d$-dimensional grid
$\set{1,\ldots,m}^d$ and agree upon  $H = \R^X \simeq (\R^m)^d$ being the space
of all real valued functions $v:X\ra \R$. Moreover, we assume that $\S$ denotes
some index set and that $\mathcal{W} = \set{\omega^S~:~S\in\S}$ is a collection
of elements in $H$. For two elements $v,w \in H$ we will use the standard inner
product and norm
\begin{equation*}
  \inner{v}{w} = \sum_{\vec\nu \in X} v_{\vec\nu} w_{\vec\nu}\quad \text{ and
  }\quad \norm{v} = \sqrt{\inner{v}{v}}
\end{equation*}
respectively. Next, we assume that $\Lambda:H\ra H$ is continuous such that
$\Lambda(0) = 0$ and that for all $S\in\S$ the mapping
\begin{equation*}
  v\mapsto \inner{\omega^S}{\Lambda(v)}
\end{equation*}
is convex.  With this notation, we can rewrite the average
function in \eqref{intro:mean} in the compact form 
\begin{equation*} 
  \mu_{S}(v) = \abs{\inner{w^S}{\Lambda(v)}}.
\end{equation*}
We note, that it is not restricitve to consider more generaly $\Lambda:H\ra
\R^N$ with arbitrary $N\in\N$. This could e.g. be useful for augmenting the
constraint set of \eqref{intro:smreeqnp} with further constraints of different
type. For the signal and image detection problems as studied in this paper,
however, $\Lambda$ is always a pointwise transformation of the residuals. Hence, 
we will restrict our considerations on the case when $\Lambda:H\ra H$.

Furthermore, we define $U$ to be a separable Hilbert-space with inner product
$\inner{\cdot}{\cdot}_U$ and induced norm $\norm{\cdot}_U$. The operator $K:U\ra
H$ is assumed to be linear and  bounded and the functional $J:U\ra \R$ is
convex and lower semi-continuous, that is
\begin{equation*}
  \set{u_n}_{n\in\N}\subset U\text{ and } \lim_{n\ra\infty} u_n =: u\in U
  \quad\Longrightarrow \quad J(u) \leq \liminf_{n\ra\infty}J(u_n).
\end{equation*}
Recall the definition of the MR-statistic in \eqref{intro:mrstateqn}.
Throughout this paper we will agree upon the following 

\begin{assa}\label{review:assex}
\begin{enumerate}[i)]
  \item For all $y\in H$ there exists $u\in U$ such that
  $T(Ku-y)< q$.
\item For all $y\in H$ and $c\in \R$ the set
\begin{equation*}
  \set{u\in U ~:~ \max_{S\in \S}\mu_{S}(Ku-y)
   + J(u) \leq c}
\end{equation*}
is bounded.
\end{enumerate}
\end{assa}

Under Assumption \ref{review:assex} it follows from standard techniques in convex
optimization, that a solution of \eqref{intro:smreeqnp} exists. As we will
discuss in Section \ref{impl:deco} it even follows that a saddle point of the
corresponding Lagrangian exists (cf. Theorem \ref{impl:kktthm} below). In this
context Assumption \ref{review:assex} i) is often referred to as \emph{Slater's
constraint qualification} and is for instance satisfied if $K(U)$ is dense in
$H$. Moreover, Assumption \ref{review:assex} ii) will be needed in order to
guarantee convergence of the algorithm for computing such a solution, as it is
proposed in the upcoming section. This requirement is fulfilled if $J$ is
coercive i.e.
\begin{equation*}
  \lim_{\norm{u}_U\ra\infty} J(u)= \infty.
\end{equation*} 
In many applications $U$ is some function space and $J$ a gradient
based regularization method, such as the total variation semi-norm (cf. Section
\ref{appl:denoising}). Then a typical sufficient condition for Assumption
\ref{review:assex} ii) is that $K$ does not annihilate constant functions.

\subsection{Alternating Direction Method of Multipliers}\label{impl:deco}

By introducing a slack variable $v\in H$ we rewrite \eqref{intro:smreeqnp} to
the equivalent problem
\begin{equation}\label{impl:linconstr}
  \inf_{u\in U, v\in H} J(u) + G(v)  \quad\text{ subject to }\quad Ku +
  v = Y.
\end{equation}
Here, $G$ denotes the characteristic function on the feasible region
$\mathcal{C}$ of \eqref{intro:smreeqnp}, that is,
\begin{equation}\label{impl:feasible}
  \mathcal{C} = \set{v\in H~:~ \mu_{S}(v) \leq q
  \;\forall(S\in\S)}\quad \text{ and }\quad G(v) =
  \begin{cases}
0 & \text{ if } v \in\mathcal{C} \\
+\infty & \text{ else}.
\end{cases}
\end{equation}
Note that due to the assumptions on $\Lambda$, the set $\mathcal{C}$ is closed
and convex. The technique of rewriting \eqref{intro:smreeqnp} into
\eqref{impl:linconstr} is referred to as the \emph{decomposition-coordination
approach}, see e.g. Fortin \& Glowinski  
\cite[Chap. III]{frick:FG83}. There, Lagrangian multiplier methods are used for solving \eqref{impl:linconstr}. To this end, we recall the
definition of the \emph{augmented Lagrangian} of  Problem \eqref{impl:linconstr},
that is
\begin{equation}\label{impl:lagr}
  L_\lambda (u,v;p) = \frac{1}{2\lambda} \norm{Ku + v - Y}^2 + J(u) + G(v) -
  \inner{p}{Ku + v - Y},\quad \lambda > 0.
\end{equation}
The name stems from the fact that the ordinary Lagrangian
\begin{equation*}
  L(u,v;p) = J(u) + G(v) - \inner{p}{Ku + v - Y}
\end{equation*}
is augmented by the quadratic penalty term $(2\lambda)^{-1} \norm{Ku + v - Y}^2$
that fosters the fulfillment of the linear constraints in \eqref{impl:linconstr}.
The \emph{augmented Lagrangian method} consists in computing a saddle point
$(\hat u, \hat v, \hat p)$ of $L_\lambda$, that is
\begin{equation*}
  L_\lambda(\hat u,\hat v; p) \leq L_\lambda(\hat u, \hat v; \hat p) \leq
  L_\lambda( u,  v; \hat p),\quad \forall\left( (u,v,p)\in U\times H\times
  H\right) 
\end{equation*}
We note that each saddle point $(\hat u, \hat v, \hat p)$ of the augmented
Lagrangian $L_\lambda$ is already a saddle point of $L$ and vice versa and
that in either case the pair $(\hat u, \hat v)$ is a solution of
\eqref{impl:linconstr} (and thus $\hat u$ is a desired solution of
\eqref{intro:smreeqnp}). This follows e.g. from \cite[Chap 3. Thm.
2.1]{frick:FG83}. Sufficient conditions for the existence of saddle points are
usually harder to come up with. Assumption \ref{review:assex} summarizes a
standard set of such conditions.

\begin{thm}\label{impl:kktthm}
  Assume that Assumption \ref{review:assex} holds.  Then, there exists a saddle
  point $(\hat u, \hat v, \hat p)$ of $L_\lambda$.
\end{thm}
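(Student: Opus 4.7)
The plan is to first establish existence of a primal optimum $(\hat u, \hat v)$ for \eqref{impl:linconstr}, then use Slater's condition to exhibit a Lagrange multiplier $\hat p$, and finally invoke the equivalence of saddle points of $L$ and $L_\lambda$ that is already recorded in the paragraph preceding the theorem.

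For the primal existence, I would take a minimizing sequence $\{u_n\} \subset U$ for the equivalent problem \eqref{intro:smreeqnp}. Feasibility forces $\max_{S\in\S}\mu_{S}(Y - Ku_n) \leq q$, and $J(u_n)$ is bounded from above along the sequence, so Assumption~\ref{review:assex}~ii) yields boundedness of $\{u_n\}$ in $U$. Since $U$ is a separable Hilbert space, some subsequence converges weakly to $\hat u \in U$, and weak lower semicontinuity of $J$ (which follows from convexity combined with the assumed strong lsc, via Mazur's theorem) gives $J(\hat u) \leq \liminf_k J(u_{n_k})$. Because $K$ is bounded and $H$ is finite-dimensional, the weak convergence $u_{n_k} \rightharpoonup \hat u$ upgrades to strong convergence $K u_{n_k} \to K \hat u$, so continuity of $\Lambda$ preserves each inequality $\mu_S(Y - K\hat u) \leq q$. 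Setting $\hat v := Y - K\hat u \in \mathcal{C}$ then produces a minimizer of \eqref{impl:linconstr}.

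The core of the argument is producing the multiplier $\hat p$. Assumption~\ref{review:assex}~i) supplies a Slater point: some $u_0 \in U$ with $T(Y - Ku_0) < q$, hence $v_0 := Y - K u_0$ lies in the interior of $\mathcal{C}$, the pair $(u_0,v_0)$ is admissible for \eqref{impl:linconstr}, and $J(u_0) + G(v_0) < \infty$. Applying the Fenchel--Rockafellar duality theorem (in the form given, e.g., in \cite[Chap.~III]{frick:FG83}) to the proper convex lsc objective $F(u,v) := J(u) + G(v)$ under the linear constraint $Ku + v = Y$, the interior-point condition furnished by Slater ensures that the dual problem admits an optimum $\hat p \in H$ and that there is no duality gap. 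Standard KKT reasoning then identifies $(\hat u, \hat v, \hat p)$ as a saddle point of the ordinary Lagrangian $L$.

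The last step is immediate: as noted in the text, saddle points of $L$ and of $L_\lambda$ coincide, so $(\hat u, \hat v, \hat p)$ is a saddle point of $L_\lambda$ as required. The main obstacle I anticipate is the careful invocation of a constraint-qualified duality theorem in the Hilbert setting of $U$; however, since $H$ is finite-dimensional and the Slater point opens a full-dimensional neighborhood of feasibility in the slack variable, this reduces to a routine application once \eqref{intro:smreeqnp} has been cast in the decomposition-coordination form \eqref{impl:linconstr}.
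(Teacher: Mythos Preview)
Your proposal is correct and follows essentially the same route as the paper: verify Slater's qualification from Assumption~\ref{review:assex}~i), coercivity from Assumption~\ref{review:assex}~ii), and invoke a classical convex-duality result to obtain a saddle point of $L$ (hence of $L_\lambda$). The paper is simply terser---it cites \cite[Chap.~III, Prop.~3.1 and Prop.~4.2]{frick:ET76} directly, which packages primal existence and multiplier existence into a single statement, whereas you unpack these two steps separately.
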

\begin{proof}
  According to \cite[Chap. III, Prop. 3.1 and Prop. 4.2]{frick:ET76} 
  a saddle point of $L$ exists, if there  is an element $u_0\in U$ such
  that $G$ is continuous at $Ku_0-Y$ and that
  \begin{equation}\label{impl:coercivity}
       \lim_{\norm{u}_Q\ra\infty} J(u) + G(Ku-Y) =  \infty.
  \end{equation}  
  According to Assumption \ref{review:assex} i) and due to the continuity of
  $\Lambda$ the first requirement is clearly satisfied. Further, the coercivity
  assumption \eqref{impl:coercivity} is a consequence of Assumption
  \ref{review:assex} ii).  
\end{proof}
 
We will use the \emph{Alternating Diretion Method of Multipliers (ADMM) } (cf.
Algorithm \ref{impl:ala}) as proposed in \cite[Chap. III Sec. 3.2]{frick:FG83}
for the computation of a saddle point of $L_\lambda$ (and hence of a solution of
\eqref{intro:smreeqnp}): Successive minimization of the augmented Lagrangian
$L_\lambda$ w.r.t. the first and second variable followed by an explicit step
for maximizing w.r.t. the third variable is performed. Convergence of this
method is established in Theorem \ref{impl:alaconv} which is a generalization of
\cite[Chap. III Thm. 4.1]{frick:FG83}. We note that the proof, as presented in
the Appendix \ref{app} allows for \emph{approximate} solution of the individual
subproblems. For the sake of simplicity, we present the Algorithm in its exact
form.

\begin{algorithm}[!ht]\caption{Alternating Direction Method of Multipliers}
\label{impl:ala}
\begin{algorithmic}
\REQUIRE $Y\in H$ (data), $\lambda > 0$ (step size), $\tau \geq 0$ (tolerance).
\ENSURE $(u[\tau],
v[\tau])$ is an approximate solution of\eqref{impl:linconstr} computed in $k[\tau]$ iteration steps.
\STATE $u_0\leftarrow \vec 0_{U}$ and $v_0 = p_0\leftarrow \vec 0_{H}$
\STATE $r \leftarrow \norm{Ku_0 + v_0 - Y}$ and $k \leftarrow 0$.
\WHILE{$r > \tau$} 
\STATE $k\leftarrow k+1$.
\STATE $v_k  \leftarrow  \tilde v$ where $\tilde v\in \mathcal{C}$
satisfies
\begin{equation}\label{ala:noise}
  \norm{\tilde v - (Y + \lambda p_{k-1} - Ku_{k-1}) }^2 \leq  \norm{v - (Y
  + \lambda p_{k-1} - Ku_{k-1}) }^2  \quad \forall(v\in \mathcal{C}).
\end{equation}
\STATE $ u_k  \leftarrow  \tilde u$ where $\tilde u$ satisfies
\begin{equation}\label{ala:primal}
  \frac{1}{2}\norm{ K\tilde u - (Y + \lambda p_{k-1} - v_k) }^2 + \lambda
  J(\tilde u) \leq \frac{1}{2}\norm{ K u - (Y + \lambda p_{k-1} - v_k) }^2 + \lambda
  J( u) \quad\forall(u\in U).
\end{equation}
\STATE $p_k  \leftarrow  p_{k-1} -  (K u_k + v_k -
Y)\slash \lambda$.
\STATE $r\leftarrow \max(\norm{Ku_k + v_k - Y}, \norm{K(u_k-u_{k-1})})$.
\ENDWHILE
\STATE $u[\tau] \leftarrow u_k$ and $v[\tau] \leftarrow v_k$ and $k[\tau]
\leftarrow k$.
\end{algorithmic}
\end{algorithm}

\begin{thm}\label{impl:alaconv}
  Every sequence $\set{(u_k, v_k)}_{k\geq1}$ that is generated
  by Algorithm \ref{impl:ala} is bounded in $U\times H$ and every weak cluster point is a solution of \eqref{impl:linconstr}. Moreover,
  \begin{equation*}
    \sum_{k\in\N} \norm{Ku_k + v_k - Y}^2 + \norm{K(u_k - u_{k-1})}^2 < \infty.
  \end{equation*}
\end{thm}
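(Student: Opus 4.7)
The plan is to follow the Fortin--Glowinski template for ADMM convergence from \cite{frick:FG83}, strengthened to extract the extra summability of $\|K(u_k-u_{k-1})\|^2$. Theorem \ref{impl:kktthm} provides a saddle point $(\hat u,\hat v,\hat p)$ of $L_\lambda$, characterised by $K\hat u+\hat v=Y$, $K^*\hat p\in\partial J(\hat u)$, and $\hat p\in N_{\mathcal{C}}(\hat v)$. The first step would be to extract the first-order conditions from the two subproblems: the variational characterisation of the orthogonal projection in \eqref{ala:noise} yields $\pi_k:=p_{k-1}-(Ku_{k-1}+v_k-Y)/\lambda\in N_{\mathcal{C}}(v_k)$, and Fermat's rule applied to \eqref{ala:primal}, combined with the explicit $p$-update, yields $K^*p_k\in\partial J(u_k)$. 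Two algebraic identities will be used throughout: $\pi_k-p_k=K(u_k-u_{k-1})/\lambda$ and $Ku_k+v_k-Y=\lambda(p_{k-1}-p_k)$.

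Next I would add the monotonicity inequalities $\langle p_k-\hat p,K(u_k-\hat u)\rangle\geq 0$ and $\langle \pi_k-\hat p,v_k-\hat v\rangle\geq 0$ that arise from comparing the iterate inclusions with the saddle-point inclusions, substitute the two identities together with $K\hat u+\hat v=Y$, and apply the polarisation identity to convert the leading term $2\lambda\langle p_k-\hat p,p_{k-1}-p_k\rangle$ into the telescoping expression $\lambda(\|p_{k-1}-\hat p\|^2-\|p_k-\hat p\|^2-\|p_{k-1}-p_k\|^2)$. A cross term $(2/\lambda)\langle K(u_k-u_{k-1}),v_k-\hat v\rangle$ remains, and the crucial step is to absorb it into a Lyapunov functional. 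I would do this by coupling with the further monotonicity inequality $\langle K(u_k-u_{k-1}),p_k-p_{k-1}\rangle\geq 0$ (from $K^*p_k\in\partial J(u_k)$ and $K^*p_{k-1}\in\partial J(u_{k-1})$) and the substitution $v_k-\hat v=\lambda(p_{k-1}-p_k)-K(u_k-\hat u)$. The expected outcome is an estimate of the form
\begin{equation*}
  V_{k-1}-V_k \;\geq\; c_1\|Ku_k+v_k-Y\|^2+c_2\|K(u_k-u_{k-1})\|^2
\end{equation*}
with positive constants $c_1,c_2$ and a non-negative functional $V_k$ (essentially $\lambda\|p_k-\hat p\|^2$ plus a correction). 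Telescoping then yields the summability claim and, in particular, $Ku_k+v_k-Y\to 0$.

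For boundedness and identification of weak cluster points, the Lyapunov bound forces $\{p_k\}$ bounded; the sub-gradient inequality $J(u_k)\leq J(\hat u)+\langle p_k,K(u_k-\hat u)\rangle$ coming from $K^*p_k\in\partial J(u_k)$, together with $v_k\in\mathcal{C}$, the limit $Ku_k+v_k-Y\to 0$, and the continuity of $\Lambda$, permits controlling $J(u_k)+\max_S\mu_S(Y-Ku_k)$ uniformly in $k$; Assumption \ref{review:assex}~ii) applied with $y=Y$ then delivers $\{u_k\}$ bounded, hence $\{v_k\}$ bounded as well. For any weak cluster point $(u^*,v^*)$, closedness of $\mathcal{C}$ gives $v^*\in\mathcal{C}$, passage to the limit in $Ku_k+v_k-Y\to 0$ gives $Ku^*+v^*=Y$, and the lower semicontinuity of $J$ combined with the sub-gradient bound forces $J(u^*)\leq J(\hat u)$, so $(u^*,v^*)$ solves \eqref{impl:linconstr}. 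I expect the main obstacle to be the Lyapunov bookkeeping in the second step: choosing $V_k$ so that $\|K(u_k-u_{k-1})\|^2$ emerges with a strictly positive coefficient---rather than being eliminated by crude Young-type estimates that would sacrifice the summability claim---is the decisive calculation.
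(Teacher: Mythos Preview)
Your proposal is correct and follows essentially the same approach as the paper. The paper packages the core estimate as Lemma~\ref{app:lemma}, citing the Fortin--Glowinski computation rather than spelling it out, but the Lyapunov functional it uses, $\norm{p_k-\hat p}^2+\lambda^{-2}\norm{K(u_k-\hat u)}^2$, is (up to a factor~$\lambda$) exactly the $V_k$ your outlined monotonicity-plus-polarisation argument produces; the boundedness step via Assumption~\ref{review:assex}\,ii) and the identification of weak cluster points via $\limsup J(u_k)\leq J(\hat u)$ are likewise the same. One minor point: your use of $\langle K(u_k-u_{k-1}),p_k-p_{k-1}\rangle\geq 0$ requires $K^*p_{k-1}\in\partial J(u_{k-1})$, which is guaranteed by the $u$-update only from $k\geq 2$; the $k=1$ term can be handled separately or absorbed into the initial Lyapunov value.
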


\begin{rem}
  \begin{enumerate}[i)]
    \item Theorem \ref{impl:alaconv} implies, that each weak cluster
    point of $\set{u_k}_{k\geq1}$ is a solution of \eqref{intro:smreeqnp}.
    In particular, if the solution $u^\dagger$ of \eqref{intro:smreeqnp} is unique
    (e.g. if $J$ is strictly convex), then $u_k \rightharpoonup u^\dagger$.
    \item Note in particular that \eqref{ala:noise} is independent of the
    choice of $J$, while \eqref{ala:primal} is independent of the
    multiresolution statistic being used. This decomposition gives the proposed
    method a neat modular appeal: once an efficient solution method for the
    projection problem \eqref{ala:noise} is established (see e.g. Section
    \ref{impl:proj}), the regularization functional $J$ in \eqref{intro:smreeqn} can easily
    be replaced by providing an algorithm for the penalized least squares
    problem \eqref{ala:primal}. For most popular choices of $J$, problem 
    \eqref{ala:primal} is well studied and efficient computational methods are
    at hand (see \cite{frick:Vog02} for a extensive collection of algorithms and
    \cite{frick:KaiSom05} for an overview on MCMC methods).
  \end{enumerate}
\end{rem}

For a given tolerance $\tau > 0$, Theorem \ref{impl:alaconv} implies that
Algorithm \ref{impl:ala} terminates and outputs approximate solution $u[\tau]$
and $v[\tau]$ of \eqref{impl:linconstr}. However, the breaking condition in
Algorithm \ref{impl:ala} merely guarantees that the linear constraint in
\eqref{impl:linconstr} is approximated sufficiently well. Moreover, we know
from construction that $v[\tau] \in \mathcal{C}$, which implies
$G(v[\tau]) = 0$. So, it remains to evaluate the validity of $u[\tau]$:
  
\begin{thm}\label{impl:alaconvcor} 
Let $(\hat u, \hat v, \hat p) \in U\times H\times H$ be any saddle point of
$L_\lambda$. Moreover, let $\tau > 0$ and $u[\tau]\in U$ be returend by
Algorithm \ref{impl:ala}. Then,
\begin{equation*}
  0\leq J(u[\tau]) -  J(\hat u) - \inner{K^*\hat p}{u[\tau] - \hat u }_U\leq 
  \tau\left( 6\norm{\hat p} + \frac{4\norm{K\hat
  u} + 2\tau}{\lambda} \right) \quad\forall(\tau >
  0).
\end{equation*}
\end{thm}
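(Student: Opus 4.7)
The plan is to treat the two inequalities separately. The lower bound is a direct consequence of convexity of $J$: the saddle-point property of $(\hat u, \hat v, \hat p)$ for $L_\lambda$ entails, via primal feasibility $K\hat u + \hat v = Y$, that $K^*\hat p \in \partial J(\hat u)$, and the subgradient inequality $J(u) - J(\hat u) \geq \inner{K^*\hat p}{u - \hat u}_U$ evaluated at $u = u[\tau]$ gives nonnegativity.

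For the upper bound, write $u_k = u[\tau]$, $v_k = v[\tau]$ and $r_k = Ku_k + v_k - Y$, where $k = k[\tau]$ is the stopping index. The first-order optimality condition for the $u$-step \eqref{ala:primal} reads $K^* p_k \in \partial J(u_k)$ with $p_k = p_{k-1} - r_k/\lambda$. Applying the subgradient inequality for $J$ at $u_k$ and subtracting $\inner{K^*\hat p}{u_k - \hat u}_U$ yields
\begin{equation*}
J(u_k) - J(\hat u) - \inner{K^*\hat p}{u_k - \hat u}_U \leq \inner{p_k - \hat p}{Ku_k - K\hat u}.
\end{equation*}
Combining $K\hat u + \hat v = Y$ with $r_k = \lambda(p_{k-1} - p_k)$ produces the algorithmic identity $Ku_k - K\hat u = \lambda(p_{k-1} - p_k) + (\hat v - v_k)$, which splits the right-hand side into $\lambda\inner{p_k - \hat p}{p_{k-1} - p_k}$ and $\inner{p_k - \hat p}{\hat v - v_k}$.

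Each of these summands is controlled using the stopping criterion together with the saddle-point structure. For the first, Cauchy--Schwarz and $\norm{p_{k-1} - p_k} = \norm{r_k}/\lambda \leq \tau/\lambda$ (from $\norm{r_k}\leq\tau$) give the bound $\tau\norm{p_k - \hat p}$. For the second, $\hat v$ maximises $\inner{\hat p}{\cdot}$ over $\mathcal{C}$, so $\inner{\hat p}{\hat v - v_k}\geq 0$ by complementary slackness applied to the feasible iterate $v_k \in \mathcal{C}$; moreover the variational characterisation of the projection in \eqref{ala:noise}, after substituting $\lambda p_{k-1} = \lambda p_k + r_k$, reads
\begin{equation*}
\inner{\lambda p_k + K(u_k - u_{k-1})}{\hat v - v_k}\leq 0,
\end{equation*}
so that $\norm{K(u_k - u_{k-1})}\leq\tau$ delivers $\inner{p_k - \hat p}{\hat v - v_k}\leq\tau\norm{v_k - \hat v}/\lambda$. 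Summing,
\begin{equation*}
J(u_k) - J(\hat u) - \inner{K^*\hat p}{u_k - \hat u}_U \leq \tau\norm{p_k - \hat p} + \frac{\tau}{\lambda}\norm{v_k - \hat v}.
\end{equation*}

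The main obstacle is the last step: converting these $k$-dependent quantities into iteration-uniform bounds in terms of $\norm{\hat p}$, $\norm{K\hat u}$ and $\tau$. I would exploit the Lyapunov-type monotonicity that drives the convergence argument behind Theorem \ref{impl:alaconv} in Appendix \ref{app}: a functional of the form $\lambda\norm{p_k - \hat p}^2 + \norm{v_k - \hat v}^2/\lambda$ (or a close variant) is non-increasing along the ADMM iterates. Since $p_0 = v_0 = 0$ and since non-expansiveness of the projection onto $\mathcal{C}\ni\hat v$ at the very first iteration yields $\norm{v_1 - \hat v}\leq\norm{Y - \hat v} = \norm{K\hat u}$, one obtains $\norm{p_k - \hat p}$ controlled by a multiple of $\norm{\hat p}$ and $\norm{v_k - \hat v}$ controlled by a multiple of $\norm{K\hat u}$ plus $\tau$. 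Careful bookkeeping of the resulting prefactors produces the constants $6$, $4$, and $2$ of the statement.
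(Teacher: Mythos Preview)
Your argument follows essentially the same route as the paper's: the subgradient inequality for $J$ at $u_k$ coming from the $u$-step, the projection inequality for $v_k$, the complementary slackness $\inner{\hat p}{\hat v - v_k}\geq 0$, and finally the Lyapunov monotonicity from the convergence analysis. Your packaging via $\inner{p_k-\hat p}{Ku_k-K\hat u}$ is in fact a little cleaner than the paper's, which first bounds $J(u_k)-J(\hat u)$ and only afterwards subtracts the term $\inner{K^*\hat p}{u_k-\hat u}_U$.

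The one imprecision is in your last paragraph. For the update order of Algorithm~\ref{impl:ala} (first $v$, then $u$), the non-increasing quantity established in Lemma~\ref{app:lemma} is $\norm{\bar p_k}^2+\lambda^{-2}\norm{K\bar u_k}^2$, not one built from $\norm{v_k-\hat v}^2$. With the correct Lyapunov and $u_0=0$, $p_0=0$ one gets
\begin{equation*}
\norm{p_k-\hat p}\leq \norm{\hat p}+\lambda^{-1}\norm{K\hat u},\qquad
\norm{K(u_k-\hat u)}\leq \lambda\norm{\hat p}+\norm{K\hat u},
\end{equation*}
and then, since $K\bar u_k+\bar v_k=Ku_k+v_k-Y$,
\begin{equation*}
\norm{v_k-\hat v}\leq \norm{Ku_k+v_k-Y}+\norm{K(u_k-\hat u)}\leq \tau+\lambda\norm{\hat p}+\norm{K\hat u}.
\end{equation*}
Substituting these into your bound $\tau\norm{p_k-\hat p}+\tau\lambda^{-1}\norm{v_k-\hat v}$ actually yields
$\tau\bigl(2\norm{\hat p}+(2\norm{K\hat u}+\tau)/\lambda\bigr)$, which is sharper than the constants in the statement; the paper's $6$, $4$, $2$ arise from a slightly looser chain of Cauchy--Schwarz estimates.
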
 

The result in Theorem \ref{impl:alaconvcor} shows how the accuracy of the
approximate solution $u[\tau]$  depends on $\tau$. Moreover, it reveals that
choosing a small step size $\lambda$ in Algorithm \ref{impl:ala} possibly 
yields a slow decay of the objective functional $J$. However, it
follows from the definition of $L_\lambda$ in \eqref{impl:lagr} that a small
value for $\lambda$ fosters the linear constraint in \eqref{impl:linconstr}.
  
\begin{cor}\label{impl:alaconvcortwo}
Let the assumtions of Theorem \ref{impl:alaconvcor} be satisfied. Moreover,
assume that $J$ is a quadratic functional, i.e.  $J(u) =
\frac{1}{2}\norm{Lu}_V^2$, where $V$ is a further Hilbert-space and $L:U\supset
D \ra V$ is a linear, densely-defined and closed operator.  Then
\begin{equation*}
\norm{L(u[\tau] - \hat u)} = \bigo(\sqrt{\tau}) \quad\forall(\tau > 0).
\end{equation*}
\end{cor}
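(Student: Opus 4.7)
The plan is to exploit the quadratic structure of $J$ to identify the Bregman-type left-hand side $J(u[\tau])-J(\hat u)-\langle K^*\hat p,u[\tau]-\hat u\rangle_U$ in Theorem \ref{impl:alaconvcor} with $\tfrac12\|L(u[\tau]-\hat u)\|_V^2$, and then to read off the $\mathcal{O}(\sqrt\tau)$ decay directly from the upper bound of that theorem.

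The first step will be to extract a first-order optimality condition for $\hat u$. Since $(\hat u,\hat v,\hat p)$ is a saddle point, $\hat u$ minimizes $u\mapsto L_\lambda(u,\hat v;\hat p)$, and at the saddle point the primal feasibility $K\hat u+\hat v-Y=0$ annihilates the gradient of the quadratic penalty $(2\lambda)^{-1}\|Ku+\hat v-Y\|^2$. The remaining optimality condition therefore reads $K^*\hat p\in\partial J(\hat u)$. For the quadratic $J(u)=\tfrac12\|Lu\|_V^2$ (extended by $+\infty$ outside $D(L)$), which is convex and lower semi-continuous because $L$ is closed, a standard computation yields $\partial J(u)=\{L^*Lu\}$ when $u\in D(L^*L)$ and $\partial J(u)=\emptyset$ otherwise. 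Hence $\hat u\in D(L^*L)$ and $L^*L\hat u=K^*\hat p$.

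The second step is an elementary polarization argument in $V$. For any $u\in D(L)$, writing $Lu=L\hat u+L(u-\hat u)$ and using $\hat u\in D(L^*L)$ yields the Bregman identity
\begin{equation*}
  J(u)-J(\hat u)-\langle K^*\hat p,u-\hat u\rangle_U \;=\; \tfrac12\|L(u-\hat u)\|_V^2.
\end{equation*}
Finiteness of the right-hand side of Theorem \ref{impl:alaconvcor} forces $J(u[\tau])<\infty$, hence $u[\tau]\in D(L)$, so the identity applies at $u=u[\tau]$. Combining with Theorem \ref{impl:alaconvcor} then gives
\begin{equation*}
  \tfrac12\|L(u[\tau]-\hat u)\|_V^2 \;\le\; \tau\Bigl(6\|\hat p\| + \tfrac{4\|K\hat u\|+2\tau}{\lambda}\Bigr),
\end{equation*}
and taking square roots produces the claimed $\|L(u[\tau]-\hat u)\|_V=\mathcal{O}(\sqrt\tau)$ bound as $\tau\to 0^+$.

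The only delicate point is the subdifferential formula in step one for a possibly unbounded $L$, and the corresponding conclusion that the saddle-point inclusion $K^*\hat p\in\partial J(\hat u)$ actually places $\hat u$ in $D(L^*L)$. This is standard convex analysis for quadratic forms associated with closed operators, but it is worth pointing to an explicit reference rather than just asserting it. Once that is in place, everything else is a direct consequence of Theorem \ref{impl:alaconvcor}.
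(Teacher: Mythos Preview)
Your proposal is correct and follows essentially the same argument as the paper: use the saddle-point optimality condition $K^*\hat p\in\partial J(\hat u)$ (which the paper records as \eqref{impl:kkt}) together with the subdifferential formula $\partial J(\hat u)=\{L^*L\hat u\}$ to rewrite the Bregman-type quantity in Theorem~\ref{impl:alaconvcor} as $\tfrac12\|L(u[\tau]-\hat u)\|_V^2$, and then read off the $\mathcal{O}(\sqrt\tau)$ bound. The paper cites \cite[Lem.~2.4]{frick:FriSch10} for exactly the subdifferential computation you flag as the one delicate point, and otherwise your domain-care (checking $u[\tau]\in D(L)$ and $\hat u\in D(L^*L)$) is a welcome elaboration of details the paper leaves implicit.
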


\begin{example}[Dantzig selector]
  As already mentioned in the introduction, SMRE (i.e.\
  finding solutions of \eqref{intro:smreeqn}) reduces to the computation of
  \emph{Dantzig selectors} for the particular setting $d=1$, $U = \R^p$ (with usually
  $p\gg m$) and 
  \begin{equation*}
      J(u) = \norm{u}_{1}.	
  \end{equation*}
  When applying Algorithm \ref{impl:ala}
  the subproblem \eqref{ala:primal} amounts to compute  
  \begin{equation*}
      u_k \in \argmin_{u\in \R^p} \frac{1}{2}\norm{Ku - (Y+ \lambda p_{k-1} -
      v_k)}^2 + \lambda
      \norm{u}_{1}.
  \end{equation*}
 This is the well known \emph{least absolute shrinkage and selection operator
 (LASSO)} estimator \cite{frick:Tib94}.  For the classical Dantzig selector,
 one chooses $\S = \set{1,\ldots,p}$ and defines for $S\in\S$ the weight
 $\omega^S = K \chi_{\set{S}}$. Hence, the subproblem \eqref{ala:noise} in this  case
 consists in the orthonormal projection  of $Y_k = Y+ \lambda p_{k-1} -
 Ku_{k-1}$ onto the set  
 \begin{equation*}
 	\mathcal{C} = \set{v\in\R^m~:~ \abs{\sum_{1\leq j\leq m} \omega^S_j v_j}\leq
 	q\text{ for }1\leq S\leq p}.
 \end{equation*}
 The implications of Theorem \ref{impl:alaconvcor} in the present case are in
 general rather weak. If the saddle point $\hat u$ is known to be
 $S$-sparse and when $K$ restricted to the support of  $\hat u$ is injective,
 then it can be shown that $\abs{u[\tau] - \hat u}_{1} = \bigo(\tau)$.
 
 We finally note that for this particular situation a slightly
 different decomposition than proposed in \eqref{impl:deco} is favorable. To
 be more precise, define $\tilde K = K^T K$ and $\tilde Y = K^T Y$ and consider 
 \begin{equation*}
    J(u) + \tilde G(v)\quad\text{ subject to }\quad \tilde K u - v = \tilde Y.
 \end{equation*} 
 where $\tilde G$ is the characteristic function on the set $\set{v\in H~:~
 \norm{v}_\infty \leq q}$. Algorithm \ref{impl:ala} applied to this modified
 decomposition then results in the ADMM as introduced in
 \cite{frick:LuPonZha10}. In this case the projection in step \eqref{ala:noise}
 has a closed from.  
 \end{example}

\subsection{The Projection Problem}\label{impl:proj}

Algorithm \ref{impl:ala} resolves the constrained convex optimization problem
\eqref{intro:smreeqnp} into a quadratic program \eqref{ala:noise} and an
unconstrained optimization problem \eqref{ala:primal}. The quadratic program
\eqref{ala:noise} in the $k$-th step of Algorithm \ref{impl:ala} can be written
as a projection:
\begin{equation}\label{impl:quadprob}
  \inf_{v\in H}\norm{v - Y_k}^2 
  \quad\text{ subject to } \quad\mu_{S}(v) \leq q\;\forall(s\in \S)
\end{equation}
where $Y_k = Y + \lambda p_{k-1} -Ku_{k-1}$. We reformulate the side conditions
to
\begin{equation}\label{impl:sidecond}
  v\in \mathcal{C} = \bigcap_{S\in \S} C_{S} \quad \text{ where
  }
  \quad C_{S} = \set{v \in H: \mu_{S}(v) \leq q}.
\end{equation}
The sets $C_{S}$ are closed and convex and
problem \eqref{impl:quadprob} thus amounts to compute the projection
$P_{\mathcal{C}}(Y_k)$ of $Y_k$ onto the intersection
$\mathcal{C}$ of closed and convex sets. According to this interpretation, we use Dykstra's projection algorithm
as introduced in \cite{frick:BoyDyk86} to solve \eqref{impl:quadprob}. This
algorithm takes an element $v \in H$ and convex sets $D_1,\ldots,D_M \subset H$
as arguments. It then creates a sequence converging to the projection of $v$ onto
the intersection of the $D_j$ by successively performing projections onto
individual $D_j$'s. To this end, let $P_D(\cdot)$ denote the projection onto $D
\subset H$ and $S_D = P_D - \id$ be the corresponding projection step. Dykstra's
method is summarized in Algorithm \ref{impl:dyk}.

\begin{algorithm}[!ht]\caption{Dykstra's Algorithm}\label{impl:dyk}
\begin{algorithmic}
\REQUIRE $h \in H$ (data), $D_1,\ldots,D_M \subset H$ (closed and convex
sets) 
\ENSURE A sequence $\set{h_k}_{k\in\N}$ that converges strongly to
$P_\mathcal{D}(h)$ where $\mathcal{D} = \bigcap_{j=1,\ldots,M} D_j$

\STATE $h_{0,0} \leftarrow h$
\FOR{$j=1$ to $M$}
\STATE $h_{0,j} \leftarrow P_{D_j}(h_{0,j-1})$ and $Q_{0,j} \leftarrow
S_{D_j}(h_{0,j-1})$
\ENDFOR
\STATE $h_1\leftarrow h_{0,M}$ and $k \leftarrow 1$
\FOR{$k \geq 1$}
\STATE $h_{k,0} \leftarrow h_k$
\FOR{$j=1$ to $M$}
\STATE $h_{k,j} \leftarrow P_{D_j}(h_{k,j-1} - Q_{k-1,j})$ and $Q_{k,j}
\leftarrow S_{D_j}(h_{k,j-1} - Q_{k-1,j})$
\ENDFOR
\STATE $h_{k+1} \leftarrow h_{k,M}$ and $k \leftarrow k + 1$
\ENDFOR
\end{algorithmic}
\end{algorithm}

A natural explanation of the algorithm in a primal-dual framework as well as
a proof that the sequence $\set{h(M,k)}_{k\in\N}$ converges to
$P_{\mathcal{D}}(h)$ in norm can be found in \cite{frick:GM89,frick:DeuHun94}.
For the case when $\mathcal{D}$ constitutes a polyhedron even explicit error
estimates are at hand (cf. \cite{frick:Xu00}):

\begin{thm}
  Let $\set{h_k}_{k\in\N}$ be the sequence generated by Algorithm \ref{impl:dyk}
  and $P_\mathcal{D}(h)$ be the projection of the input $h$ onto $\mathcal{D}$. Then
  there exist constants $\rho > 0$ and $0\leq c< 1$ such that for all $k\in\N$
  \begin{equation*}
    \norm{h_k - P_\mathcal{D}(h)} \leq \rho c^k.
  \end{equation*}
\end{thm}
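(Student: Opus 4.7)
My plan is to follow the path laid out by Xu in \cite{frick:Xu00}, treating Dykstra's algorithm as a block coordinate ascent on a suitable dual problem and then invoking Hoffman's error bound for polyhedra to extract a geometric decay.

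First I would reformulate Dykstra's scheme via duality. Since $\mathcal{D}$ is polyhedral, I can (enlarging the family of $D_j$ if necessary) assume each $D_j$ is itself a half-space $D_j = \set{v \in H : \inner{a_j}{v} \leq b_j}$. The projection problem $\min_{v \in \mathcal{D}} \tfrac{1}{2}\norm{v - h}^2$ has the concave quadratic dual
\begin{equation*}
  \max_{\mu \geq 0} \; -\tfrac{1}{2}\Big\|\textstyle\sum_j \mu_j a_j\Big\|^2 + \sum_j \mu_j (\inner{a_j}{h} - b_j),
\end{equation*}
with primal--dual relation $v^\star = h - \sum_j \mu_j^\star a_j$. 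A direct computation then shows that one cyclic pass of Dykstra's algorithm corresponds exactly to one sweep of block coordinate ascent on this dual, each block update being the exact one-dimensional maximizer in $\mu_j \geq 0$, with the memory variables $Q_{k,j}$ in Algorithm \ref{impl:dyk} implementing precisely the bookkeeping required for this equivalence.

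Second, I would use the polyhedrality of $\mathcal{D}$ to invoke Hoffman's lemma: there is a constant $\kappa > 0$, depending only on the $a_j$ and $b_j$, such that every dual feasible $\mu$ satisfies
\begin{equation*}
  \textnormal{dist}(\mu, M^\star) \leq \kappa \cdot r(\mu),
\end{equation*}
where $M^\star$ is the set of dual maximizers and $r(\mu)$ is a natural KKT residual (for example $\norm{[\mu + \nabla f(\mu)]_+ - \mu}$ with $f$ the dual objective). In parallel I would establish a sufficient-ascent inequality for one full cycle of block coordinate ascent: the dual objective increases by at least a constant multiple of $r(\mu^{(k)})^2$. Combining these two ingredients yields a geometric decay $\textnormal{dist}(\mu^{(k+1)}, M^\star) \leq c\,\textnormal{dist}(\mu^{(k)}, M^\star)$ for some fixed $c \in [0,1)$.

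Finally, I would transfer the estimate from the dual to the primal. Because $h_k = h - \sum_j \mu_j^{(k)} a_j$ is an affine Lipschitz function of the dual iterates and $P_{\mathcal{D}}(h) = h - \sum_j \mu_j^\star a_j$ for any $\mu^\star \in M^\star$, the primal iterates inherit the geometric contraction, giving $\norm{h_k - P_{\mathcal{D}}(h)} \leq \rho c^k$ with $\rho$ depending on $h$ and the geometry of $\mathcal{D}$. The main obstacle is the Hoffman step: the dual objective is concave but typically \emph{not} strictly concave when the $a_j$ are linearly dependent (which is the generic situation for the redundant systems $\mathcal S$ we have in mind), so linear convergence cannot be obtained from standard strong-convexity arguments and one must carefully invoke the polyhedral error bound, tracking how $\kappa$ and the per-cycle ascent constant combine into the final contraction factor $c$.
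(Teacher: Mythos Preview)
The paper does not supply its own proof of this theorem; it merely states the result and cites \cite{frick:Xu00} as its source. Your proposal is precisely to reproduce the argument of \cite{frick:Xu00} (dual reformulation of Dykstra as block coordinate ascent, Hoffman-type error bound for the polyhedral dual, transfer to the primal), so your approach coincides with what the paper defers to and is correct.
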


\begin{rem}
  \label{impl:dyk_conv}
  The constant $c$ increases with the number $M$ of convex sets which
  intersection form the set $D$ that $h$ is to be projected on. The convergence
  rate therefore improves with decreasing $M$. For further details and estimates
  for the constants $\rho$ and $c$, we refer to \cite{frick:Xu00}.
\end{rem}

Note that application of Dykstra's algorithm is particularly appealing if the
projections $P_{D_j}$ can be
easily computed or even stated explicitly, as it is the case in the following
examples.

\begin{example}\label{impl:projexpllin}
  Assume that $\Lambda = \id$. Then the sets $C_{S}$ are the
  rectangular cylinders
  \begin{equation*}
      C_{S} = \set{v\in H~:~ \abs{\inner{\omega^S}{v}}\leq q}.
  \end{equation*}
  The projection can therefore be explicitly computed as
  \begin{equation*}
  P_{C_{S}}(v) = \begin{cases}
v -
\sign\left(\inner{\omega^S}{v}\right)\frac{\omega^S}{\norm{\omega^S}}\left(\frac{\abs{\inner{\omega^S}{v}}}{\norm{\omega^S}}
- q \right) &
\text{ if } 
\mu_{S}(v)  >
q\\ v & \text{ else} \end{cases}.
  \end{equation*}
  The left image in Figure \ref{impl:figadmissible} depicts an example for
  $\mathcal{C}$ for $H = \R^2$. For a detailed geometric interpretation of the
  MR-statistic we also refer to \cite{frick:Mil08}.  
  \end{example}
     
\begin{example}\label{impl:projexplquad}
  Assume that $\Lambda(v)_{\vec\nu} = v_{\vec\nu}^2$. Then, it follows that $v\mapsto \inner{\omega^S}{\Lambda(v)}$ is convex if
  and only if $\omega^S_{\vec\nu}\geq 0$ for all $\vec\nu \in X$. In this
  case, the sets $C_{S}$ are elliptic cylinders
  \begin{equation*}
      C_{S} = \set{v\in H~:~ \sum_{\vec\nu \in X} \omega^S_{\vec\nu}
      v_{\vec\nu}^2\leq q}.
  \end{equation*}    
  Moreover, if $\omega^S_{\vec\nu} \in \set{0,1}$ for all $\vec\nu \in X$, then
  the projection $P_{C_S}$ can be explicitly computed as
  \begin{equation*}
  P_{C_{S}}(v) = \begin{cases}
     \frac{q}{\inner{\omega^S}{\Lambda(v)}} v \chi_{\set{\omega^S = 1}} + v
     \chi_{\set{\omega^S=0}} &
\text{ if }
\mu_{S}(v)  >
q\\ v & \text{ else} \end{cases}.
  \end{equation*}
  The right image in Figure \ref{impl:figadmissible} depicts an example of
  $\mathcal{C}$ for $H = \R^2$.
\end{example}

\begin{figure}[h!]  
\begin{center} 
{\scriptsize
\psfrag{g}{$Y_k$}
\psfrag{pg}{$\text{P}_{\mathcal{C}}(Y_k)$}
\psfrag{C}{$\mathcal{C}$}
\psfrag{Cs}{$C_S$}    
\psfrag{w1}{$\frac{\omega^S}{\norm{\omega^S}}$}  
\psfrag{w}{$q\frac{\omega^S}{\norm{\omega^S}}$}  
\includegraphics[width =
0.4\imwidth]{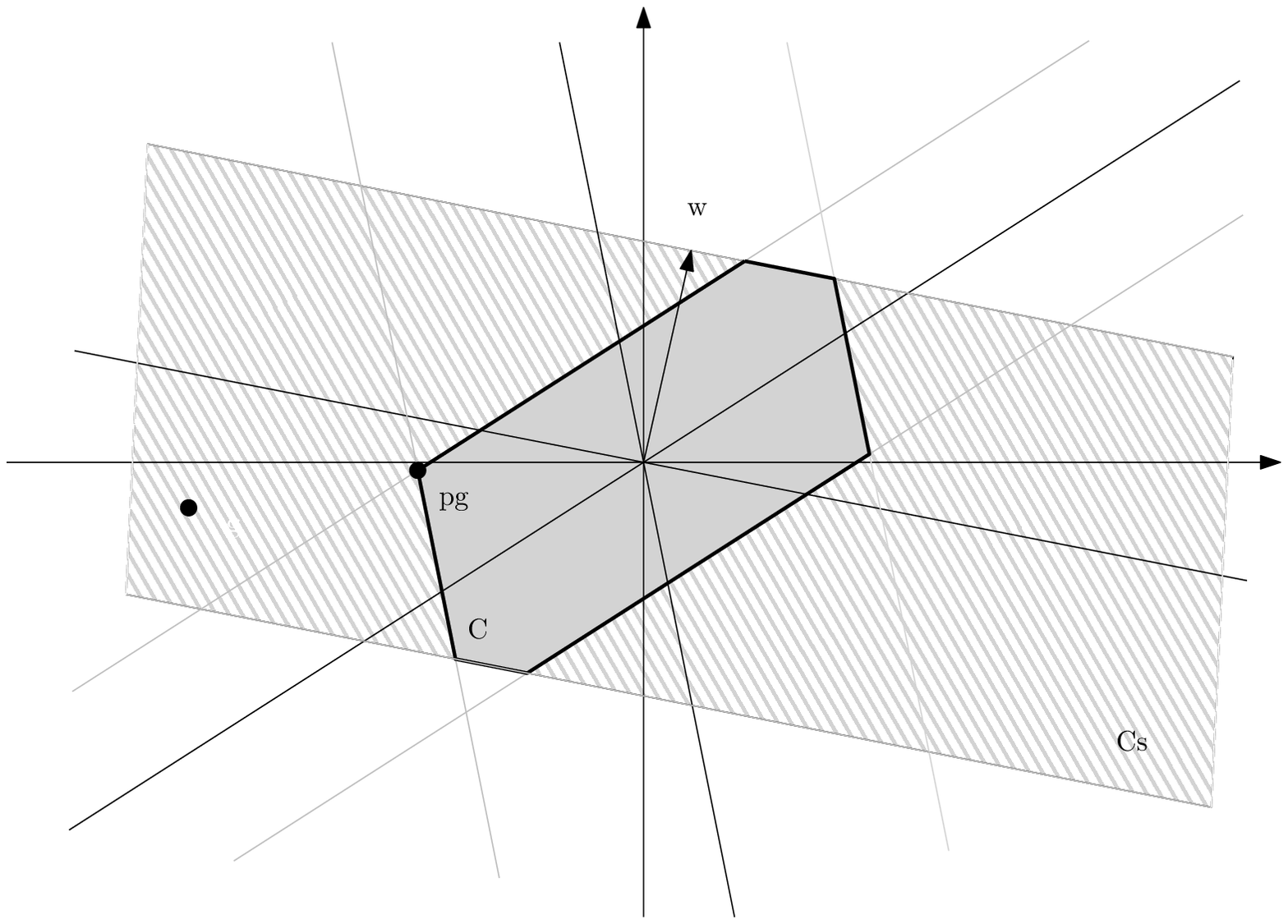}
\hspace{0.05\imwidth}           
\includegraphics[width =
0.4\imwidth]{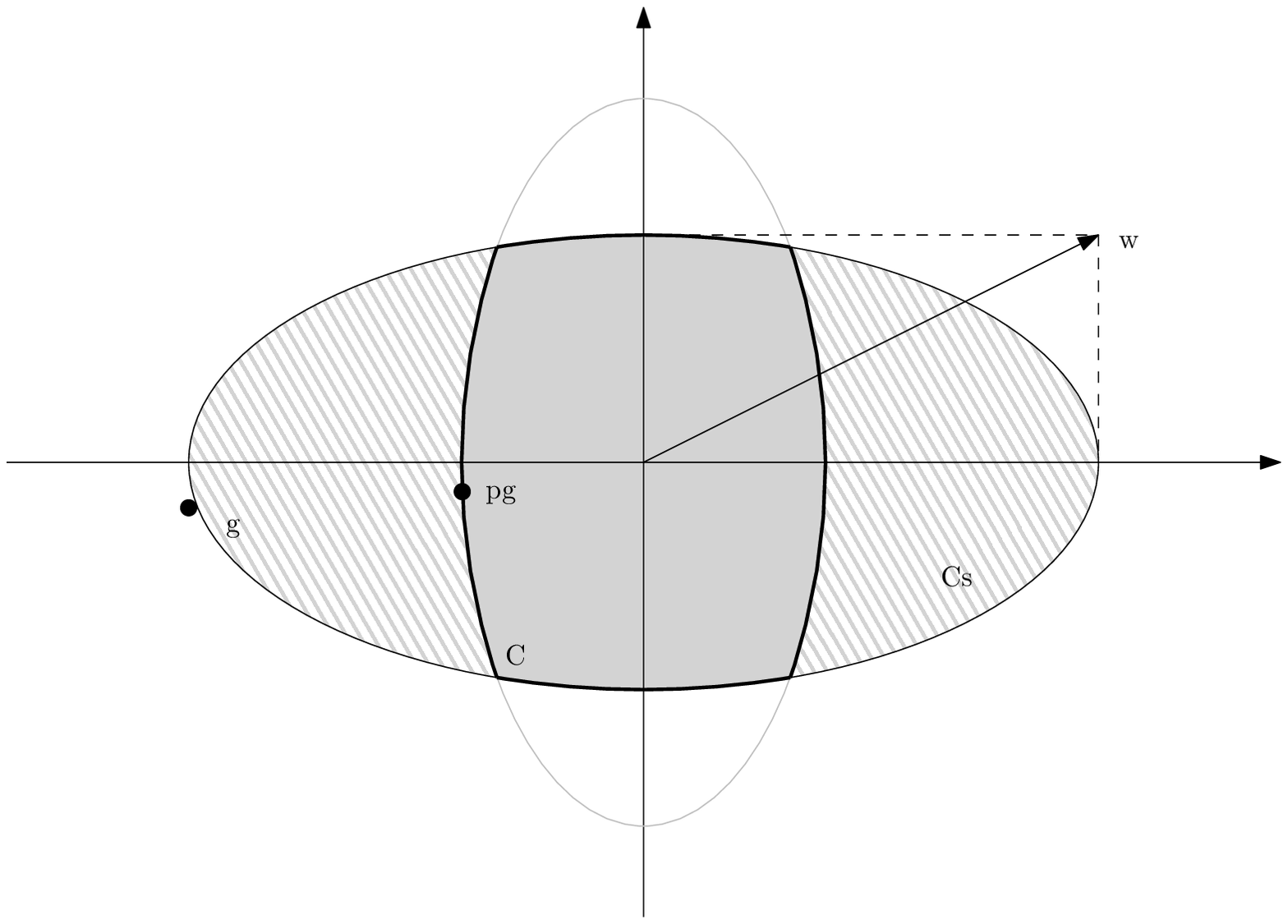}}
\caption{Admissible set $\mathcal{C}$ for the projection problem
\eqref{impl:quadprob} as in Example \ref{impl:projexpllin} (left) and Example
\ref{impl:projexplquad} (right).}\label{impl:figadmissible}
\end{center}
\end{figure} 
 
A first approach to use Dykstra's algorithm
to solve \eqref{impl:quadprob} is to set $M = \#\S$ and
identify $D_j$ with $C_{S}$ for all $j=1,\ldots,M$.  In view of Remark
\ref{impl:dyk_conv}, however, it is clearly desirable to decrease the number $M$
of convex sets that enter Dykstra's algorithm. In order to do so, we take a
slightly more sophisticated approach than the one just presented. We partition
the set $\S$ into $\S_1,\ldots,\S_M$ such that for all $S\not=\tilde S \in
  \S_j$
\begin{equation}\label{impl:orthogonal}
  \omega^S \bot\, \omega^{\tilde S} \quad\text{ and }\quad
  \frac{\partial}{\partial v_{\vec\nu}}\Lambda(\cdot)_{\tilde{\vec\nu}} \equiv
  0,\forall(\vec\nu\in S, \tilde{\vec\nu}\in \tilde S)
\end{equation}
and regroup $\set{C_{S}}_{s\in \S}$ into $\set{D_1,\ldots,D_M}$ with
\begin{equation}\label{impl:regroup} 
  D_j = \bigcap_{S \in \S_j} C_{S}.
\end{equation}
Given the projections $P_{C_{S}}$,  the projection onto $D_j$ can be easily
computed: For $v\in H$ identify the set
\begin{equation*}
  V_j = \set{S\in \S_j : \mu_{S}(v) > q}
\end{equation*}
of indices for which $v$ violates the side condition \eqref{impl:sidecond} and
set
\begin{equation*}
  P_{D_j}(v) = v- \sum_{S \in V_j} \left(P_{\mathcal{C}_{S}} -
  \id\right) v
\end{equation*}

To keep $M$ small, we choose $\S_1 \subset \set{1,\ldots,N}$ as
the biggest set such that \eqref{impl:orthogonal} holds for all
$S,\tilde S \in \S_1$. We then choose $\S_2 \subset \S \setminus
\S_1$ with the same property and continue in this way until all indices are
utilized. While this procedure does not necessarily result into $M$ being minimal with the desired
property, it still yields a distinct reduction of $N$ in many practical
situations. We will illustrate this approach for SMREs in imaging in
Section \ref{appl}. 


\section{Applications}\label{appl}

In this section we will illustrate the capability of Algorithm \ref{impl:ala} for
computing SMREs in some practical situations: in Section \ref{appl:reg}
we will study a simply one-dimensional regression problem as it was also studied
in \cite{frick:DavKovMei09}, yet with a different penalty function $J$. In
Section \ref{appl:denoising} we illustrate how SMREs performs in image
denoising. In both cases we compare our results to other methods. Finally, we
will apply the SMRE technique to the problem of image deblurring in confocal fluorescence microscopy in Section
\ref{appl:deblurring}.

Before we study the aforementioned examples, we clarify some common notation. We
will henceforth assume that $U = H = (\R^m)^d$ with $d=1$ (Section
\ref{appl:reg}) and $d=2$ (Sections \ref{appl:denoising} and
\ref{appl:deblurring}), respectively. Moreover, we will employ gradient based
regularization functionals of the form
\begin{equation}\label{appl:tvpenalty}
  J(u) = TV_p(u) := \frac{1}{p}\sum_{\vec\nu\in X} \abs{\D
  u_{\vec\nu}}_2^p\quad\text{ with }p\in\set{1,2}
\end{equation}
where $\abs{\cdot}_2$ is the Euclidean norm in $\R^d$ and $\D$ denotes the
forward difference operator defined by
\begin{equation*}
  (\D u_{\vec\nu})_i = \begin{cases}
u_{\vec\nu + \vec {e_i}} - u_{\vec\nu} & \text{ if }
1\leq \nu_i\leq n-1 \\
0 & \text{ else.}
\end{cases}
\end{equation*}

For the case $p = 2$ the minimization problem \eqref{ala:primal} amounts to
solve an implicit time step of the $d$-dimensional diffusion equation with
initial value $(Y + \lambda p_{k-1} - v_k)$ and time step size $\lambda$. This can
be solved by a simple (sparse) matrix inversion. 

For the case $p =1$, $TV_1$ is better known as \emph{total-variation semi-norm}.
It was shown in \cite{frick:MamGee97} (see also \cite{frick:Gra07} for similar
results in the continuous setting) that the \emph{taut-string algorithm} (as
introduced in \cite{frick:DK01}) constitutes an efficient solution method for
\eqref{ala:primal} in the case $d = 1$. In the general case $d\geq 1$, we employ
the fixed point approach for solving the Euler-Lagrange equations for
\eqref{ala:primal} described in \cite{frick:DV97} (see also \cite[Chap.
8.2.4]{frick:Vog02}). We finally note that the functional $TV_1$ fails to be
differentiable; a fact that leads to serious numerical problems when trying to
compute the Euler-Lagrange conditions for \eqref{ala:primal}. Hence, we will use
in our simulations a regularized version of $TV_1$ defined by
\begin{equation}\label{appl:tvreg}
  TV_1^\beta(u) = \sum_{\vec\nu\in X} \sqrt{ (\D u_{\vec\nu})_i^2 + \beta^2}
\end{equation}
for a small constant $\beta > 0$.

\subsubsection*{Evaluation.} In order to evaluate the performance of SMREs,
we will employ various distance measures between an estimator $\hat u$ and the
true signal $u^0$. On the one hand, we will use standard measures such as
\emph{mean integrated squared error (MISE)} and the \emph{mean integrated
absolute error (MIAE)} which are given by
\begin{equation*}
  \text{MISE} = \E{\frac{1}{m^d}\sum_{\vec\nu\in X} (\hat u_{\vec\nu} -
  u^0_{\vec\nu})^2}\quad\text{ and }\quad
  \text{MIAE} = \E{\frac{1}{m^d}\sum_{\vec\nu\in X} \abs{\hat u_{\vec\nu} -
  u^0_{\vec\nu}}},
\end{equation*}
respectively. On the other hand, we also intend to measure how well an
estimator $\hat u$ matches the ``smoothness'' of the true signal $u^0$, where
smoothness is characterized by the regularization functional $J$. To this end, we
introduce the \emph{symmetric Bregman divergence}
\begin{equation*}
  D^{\text{sym}}_J(\hat u,u^0) = \frac{1}{m^d}\sum_{\vec\nu\in X} \left(\nabla
  J(\hat u)_{\vec\nu} - \nabla J(u^0)_{\vec\nu}\right)\left(\hat
  u_{\vec\nu}-u^0_{\vec\nu}\right),
\end{equation*}
where $\nabla J$ denotes the gradient of the regularization functional $J$. Clearly,
$D^{\text{sym}}_J(\hat u,u^0)$ is symmetric and since $J$ is assumed to be
convex, also non-negative. However, the symmetric Bregman divergence usually
does not satisfy the triangle inequality and hence in general does not define a
(semi-) metric on $U$ \cite{frick:Csi91}. The following
examples shed some light on how the Bregman divergence incorporates the
functional $J$ in order to measures the distance of $\hat u$ and $u^0$.
\begin{example}\label{appl:exbreg} 
Let $J(u) = TV_p$ as in
\eqref{appl:tvpenalty}. \begin{enumerate}[i)]
    \item If $p=2$ , then
	\begin{equation*}
  	D^{\text{sym}}_J(\hat u,u^0) = \sum_{\vec\nu\in X} \abs{\D \hat u_{\vec\nu} - \D
  	u^0_{\vec\nu}}_2^2. 
    \end{equation*}    
    In other words, the symmetric Bregman distance w.r.t. to $TV_2$ is the mean
    squared distance of the \emph{derivatives} of $\hat u$ and $u^0$. 
    \item If $p=1$, then
	\begin{equation*}
          \begin{split}
  D^{\text{sym}}_J(\hat u, u^0) & = \frac{1}{m^2}\sum_{\vec\nu \in X}
  \left(\frac{\D\hat u_{\vec\nu}}{\abs{\D \hat u_{\vec\nu}}} - \frac{\D
  u^0_{\vec\nu}}{\abs{\D u^0_{\vec\nu}}}\right)\cdot\left(\D
  \hat u_{\vec\nu} - \D u^0_{\vec\nu}\right) \\
  & = \frac{1}{m^2}\sum_{\vec\nu \in X} \left(\abs{\D \hat u_{\vec\nu}}
  + \abs{\D u^0}\right)\left(1-\frac{\D\hat u_{\vec\nu}}{\abs{\D \hat
  u_{\vec\nu}}} \cdot \frac{\D
  u^0_{\vec\nu}}{\abs{\D u^0_{\vec\nu}}}\right) \\ & =
  \frac{1}{m^2}\sum_{\vec\nu \in X} \left(\abs{\D \hat u_{\vec\nu}} + \abs{\D u^0}\right)(1-\cos \gamma_{\vec\nu}),
          \end{split}
    \end{equation*}   
    where $\gamma_{\vec\nu}$ denotes the angle between the level lines of $\hat
    u$ and $u^0$ at the point $x_{\vec\nu}$. Put differently, the symmetric
    Bregman divergence w.r.t the total variation semi-norm $TV_1$ is small if
    for sufficiently many points $x_{\vec\nu}$ either both $\hat u$ and $u^0$
    are constant in a neighborhood of $x_{\vec\nu}$ or the level lines of
    $\hat u$ and $u^0$ through $x_{\vec\nu}$ are parallel. In practice rather $TV_1^\beta$ in \eqref{appl:tvreg} (for a small $\beta >
    0$) instead of $TV_1$ is used  in order to avoid singularities. Then, the above formulas are
    slightly more complicated. 
  \end{enumerate}
\end{example}

We will use the mean symmetric Bregman divergence (MSB) given by
\begin{equation*}
  \text{MSB} = \E{D^{\text{sym}}_J(\hat u,u^0)}
\end{equation*}
as an additional evaluation method. In all our simulations we approximate the
expectations above by the empirical means of $500$ trials. 

\subsubsection*{Comparison with other methods.} We will compare the SMREs to
other regression methods. Firstly, we will consider estimators obtained by the
\emph{global} penalized least squares method:
\begin{equation}\label{appl:rof}
  \hat u(\lambda) := \argmin_{u\in H} \frac{1}{2} \sum_{\vec\nu\in
  X}(u_{\vec\nu} - Y)^2 + \lambda J(u),\quad \lambda > 0.
\end{equation}
In particular, we focus on estimators $\hat u (\lambda)$ that are closest (in
some sense) to the true function $u^0$. We call such estimators \emph{oracles}.
We define the $\Ls{2}$- and Bregman-oracle by $\hat u_{\Ls{2}} = \hat
u(\lambda_2)$ and $\hat u_{\text{B}} = \hat u(\lambda_{\text{B}})$, where
\begin{equation*}
  \lambda_2 := \E{\argmin_{\lambda > 0} \norm{ u^0 - \hat u(\lambda)}}
  \quad\text{ and }\quad  \lambda_{\text{B}} := \E{\argmin_{\lambda > 0}
  D_J^{\text{sym}}(u^0,\hat u(\lambda))} 
\end{equation*}
respectively. Of course, oracles are not available in practice, since the
true signal $u^0$ is unknown. However, they represent ideal instances within
the class of estimators given by \eqref{appl:rof} that usually perform better than any
data-driven parameter choices (such as cross-validation) and hence may serve as
a reference.

Secondly, we also compare our approach to \emph{adaptive weights smoothing
(AWS)} \cite{frick:PolSpo00} which constitutes a benchmark technique for data-driven,
spatially adaptive regression. We compute these estimators by means of the
official R-package\footnote{available
at\url{http://cran.r-project.org/web/packages/aws/index.html}} and denote them by
$\hat u_{\text{aws}}^{\ker}$, where $\ker\in\set{\text{Gaussian},
\text{Triangle}}$ decodes the shape of the underlying regression kernel.

\subsection{Non-parametric Regression}\label{appl:reg}

In this section we apply the SMRE technique to a nonparametric 
regression problem in $d=1$ dimensions, i.e. the noise model
\eqref{intro:lineqn} becomes
\begin{equation}\label{appl:regr}
  Y_{\vec\nu} = u^0_{\vec\nu} + \eps_{\vec\nu}\quad \vec\nu=1,\ldots,m,
\end{equation}
where we assume that $\eps_{\vec\nu}$ are independently and normally distributed
r.v. with $\E{\eps_\nu} = 0$ and $\E{\eps_\nu^2}=\sigma^2$. The upper left
image in Figure \ref{appl:fig:one} depicts the true signal $u^0$ (solid line)
and the data $Y$, with $m=1024$ and $\sigma = 0.5$. The application we have in
mind with this example arises in NMR spectroscopy, where the NMR spectra
provide structural information on the number and type of chemical entities in a
molecule. In this context, we suggest to choose $J = TV_2$, since the true
signal $u^0$ is rather smooth (see \cite{frick:DavKovMei09} for examples where
$J$ is chosen to be the total variation of the first and second derivative).

Finally, we discuss the MR-statistic $T$ in \eqref{intro:mrstateqn}. We choose
$\Lambda = \id$ and the index set $\S$ to consist of all discrete intervals
with side lengths ranging from $1$ to $100$. For an interval $S\in \S$ we set
$\omega^S=(\#S)^{-1\slash 2}\chi_S$. Thus, each SMRE solves the
constrained optimization problem
\begin{equation}\label{appl:regrsmre}
  \inf_{u\in U} TV_2(u)\quad\text{ s.t. }\quad \frac{1}{\sqrt{\#
  S}}\abs{\sum_{\vec\nu\in S} (Y-u)_{\vec\nu}} \leq q\quad\forall(S\in \S).
\end{equation}
We choose $q$ to be the $\alpha$-quantile of the MR-statistic $T$, that is
\begin{equation}\label{appl:quantile} 
  q_\alpha = \inf\set{q\in\R ~:~ \Prob\left(\max_{S\in\S}\frac{1}{\sqrt{\#
  S}}\abs{\sum_{\vec\nu\in S} \eps_{\vec\nu}}\leq q\right)\geq \alpha }\quad
  \alpha\in (0,1).
\end{equation}
We note that except for few special cases (cf.
\cite{frick:HotMarStiDavKabMun12, frick:Kab10}) closed form expressions for the
distribution of the MT-statistic $T$ are usually not at hand. In practice one
rather considers the empirical distribution of $T$ where the variance
$\sigma^2$ can be estimated at a rate $\sqrt{md}$ (cf.
\cite{frick:MunBisWagFre05}).  

We will henceforth denote by $\hat u_\alpha$ a solution of \eqref{appl:regrsmre}
with $q = q_\alpha$. As argued in Section \ref{intro:smre}, $\hat u_\alpha$ is
smoother (i.e. has smaller value $TV_2$) than the true signal $u^0$ with a
probability of at least $\alpha$ while it satisfies the constraint that the
multiresolution statistic $T$ does not exceed $q_\alpha$.  This is a sound
statistical interpretation of the regularization parameter $\alpha$.
 
\begin{figure}[h]
\begin{center}
\includegraphics[width = 0.3\imwidth]{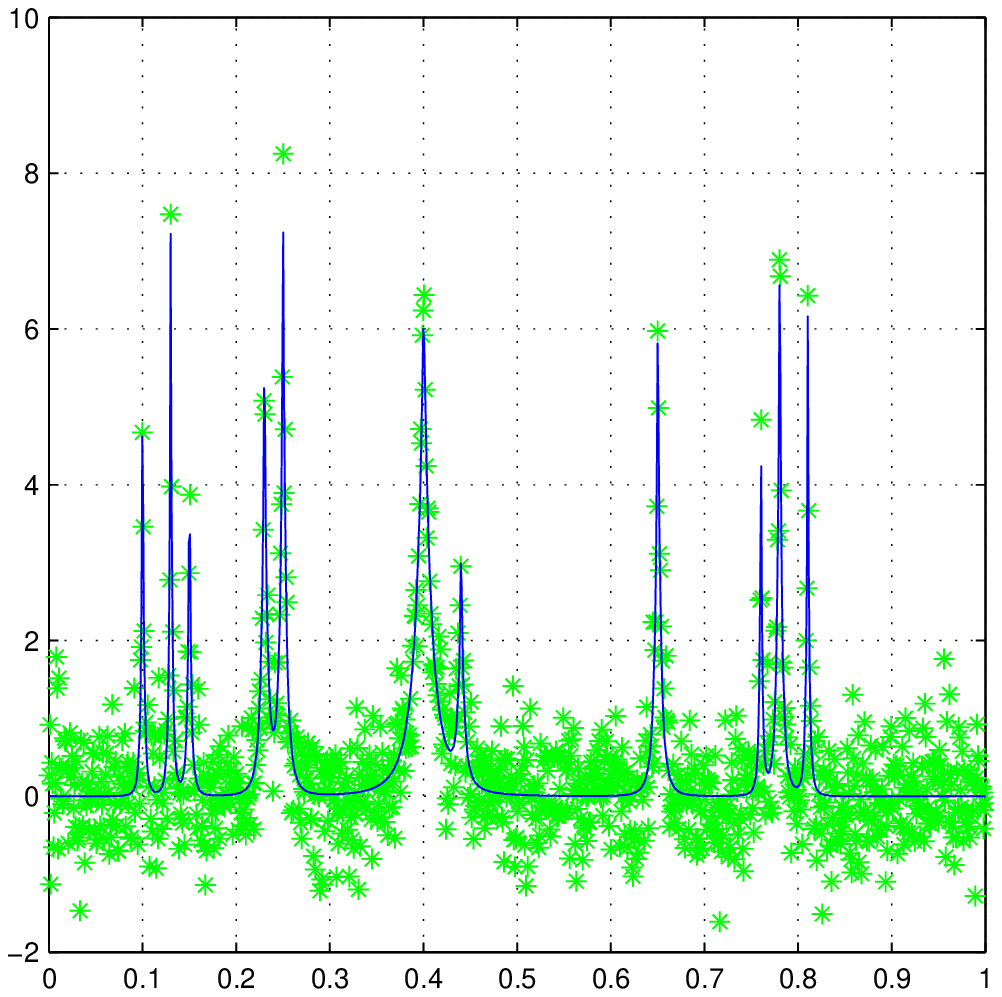}
\hspace{0.01\imwidth}
\includegraphics[width = 0.3\imwidth]{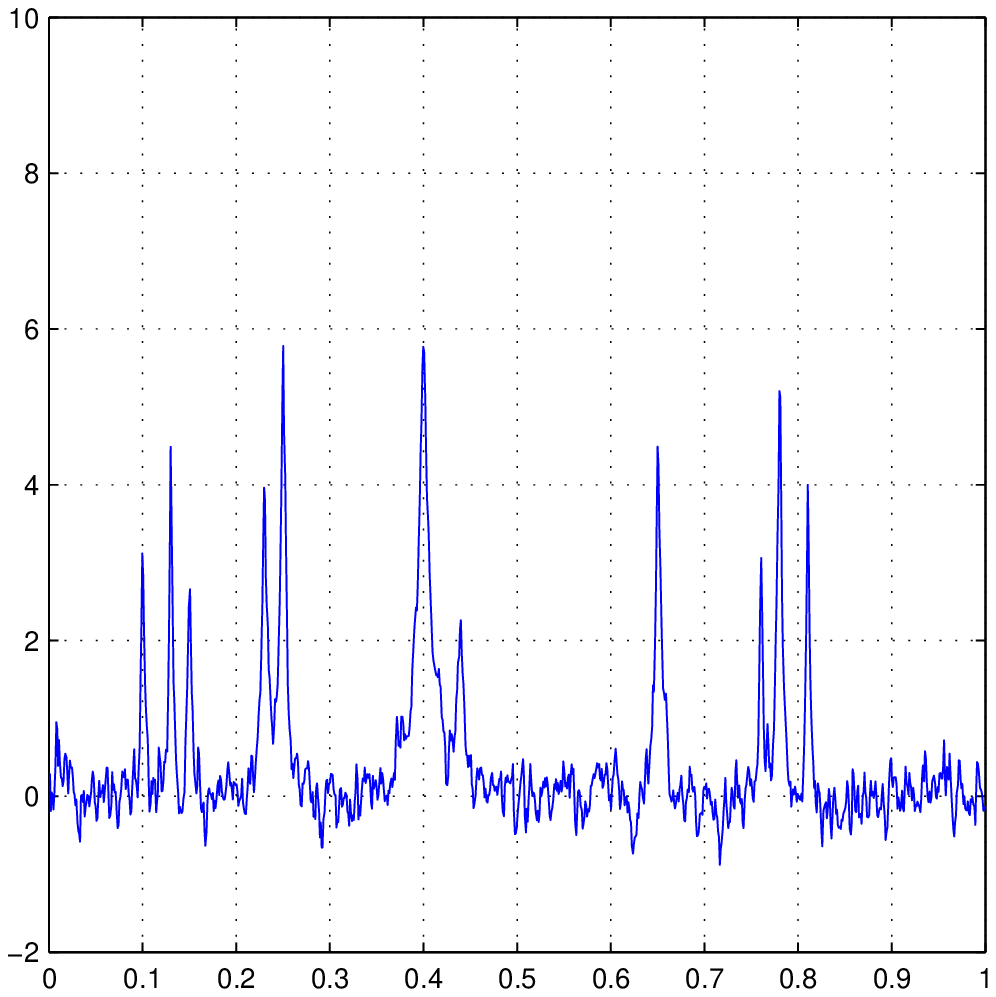}
\hspace{0.01\imwidth}
\includegraphics[width = 0.3\imwidth]{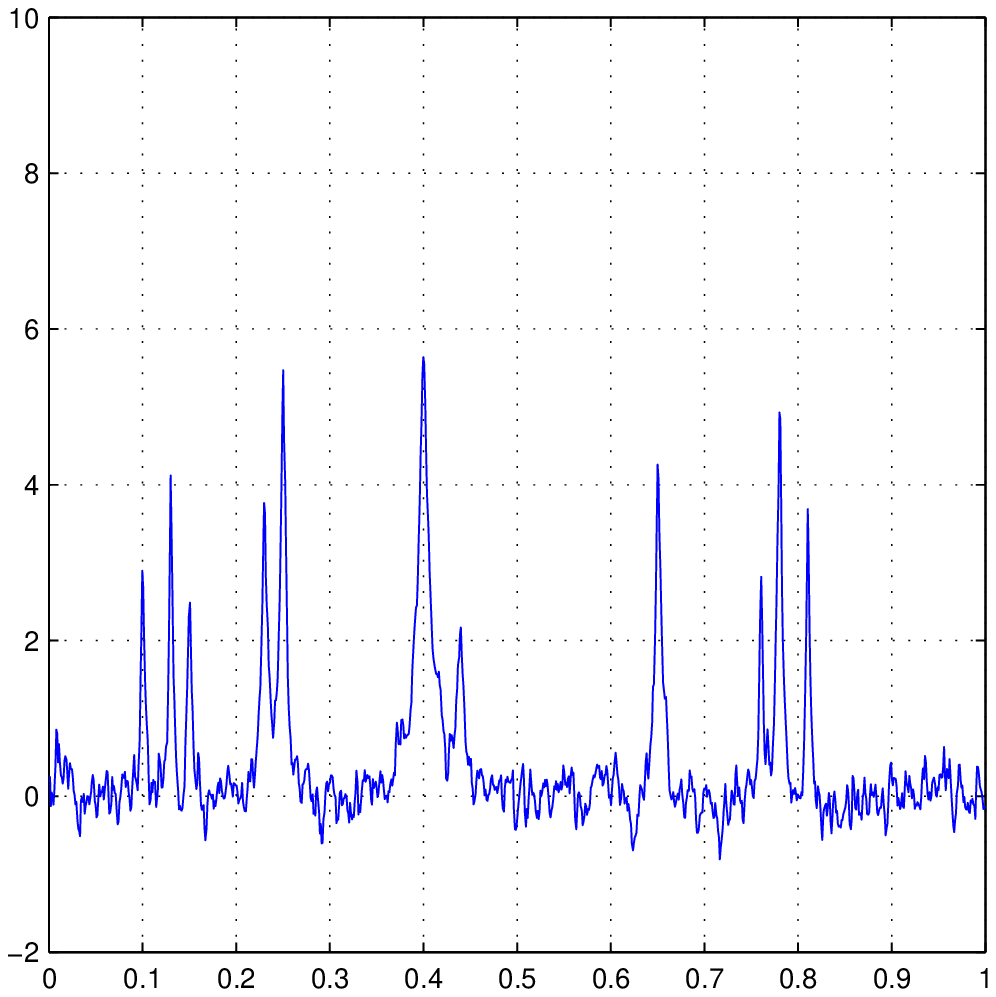}

\vspace{0.01\imwidth}

\includegraphics[width = 0.3\imwidth]{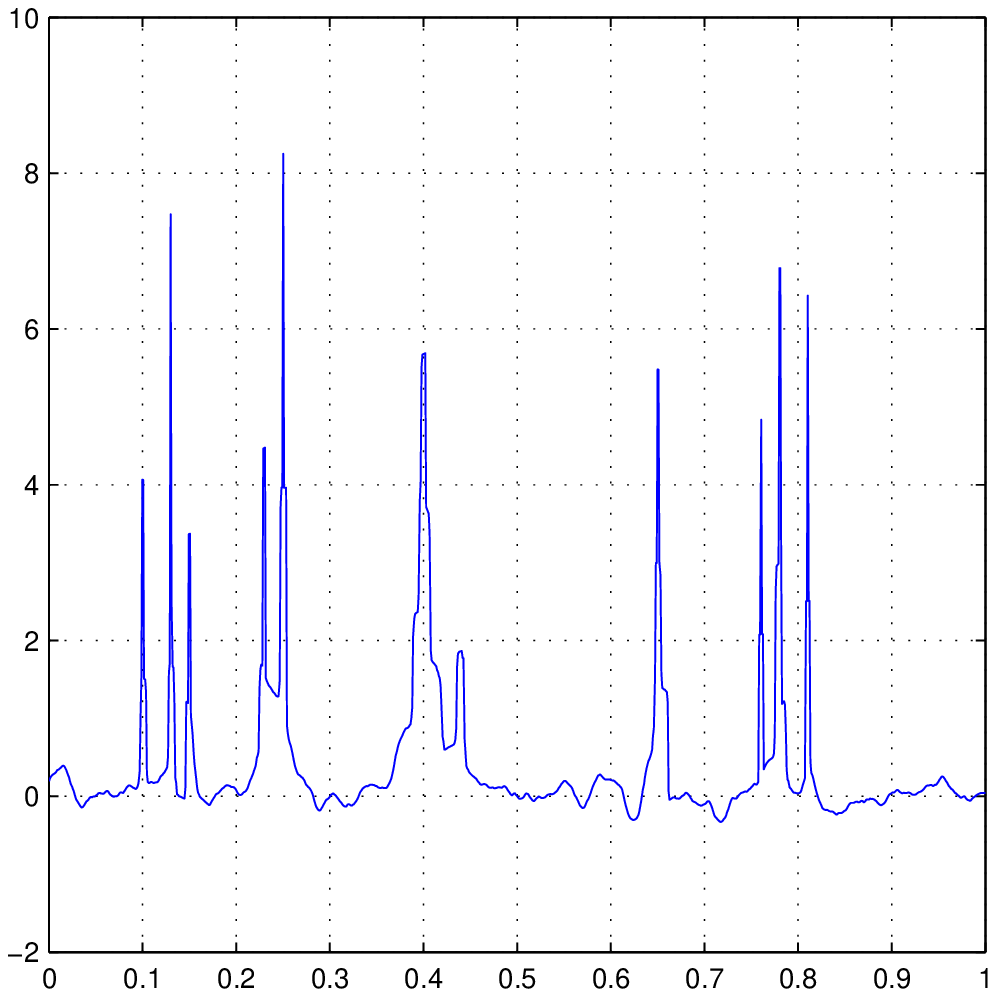}
\hspace{0.01\imwidth}
\includegraphics[width = 0.3\imwidth]{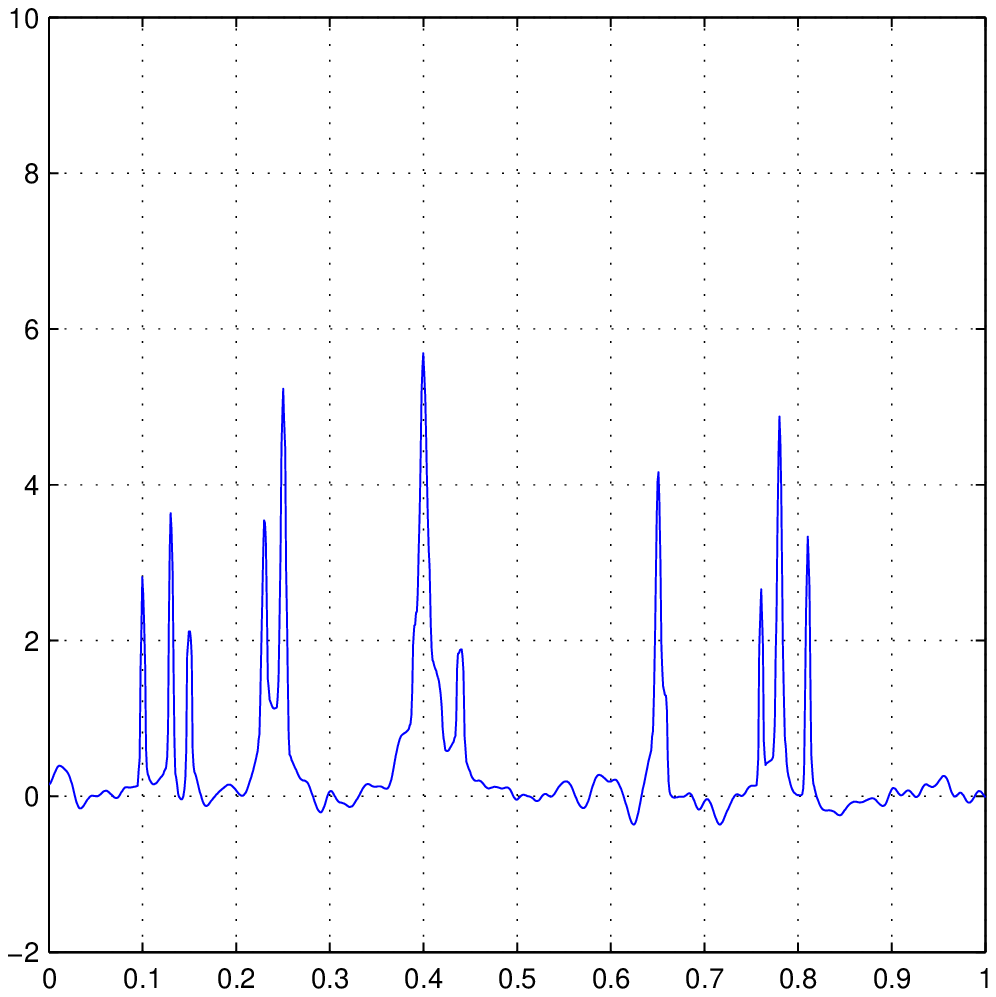}

\vspace{0.01\imwidth}

\includegraphics[width = 0.3\imwidth]{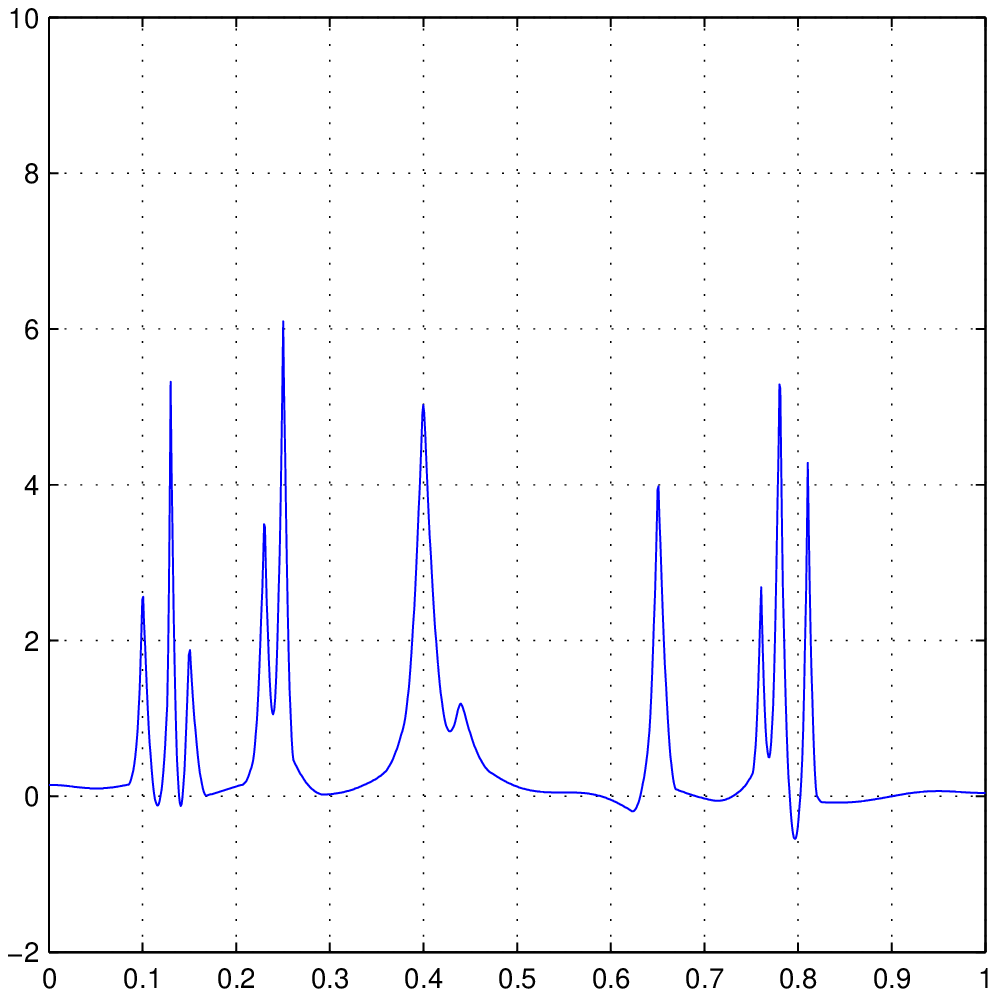}
\hspace{0.01\imwidth}
\includegraphics[width = 0.3\imwidth]{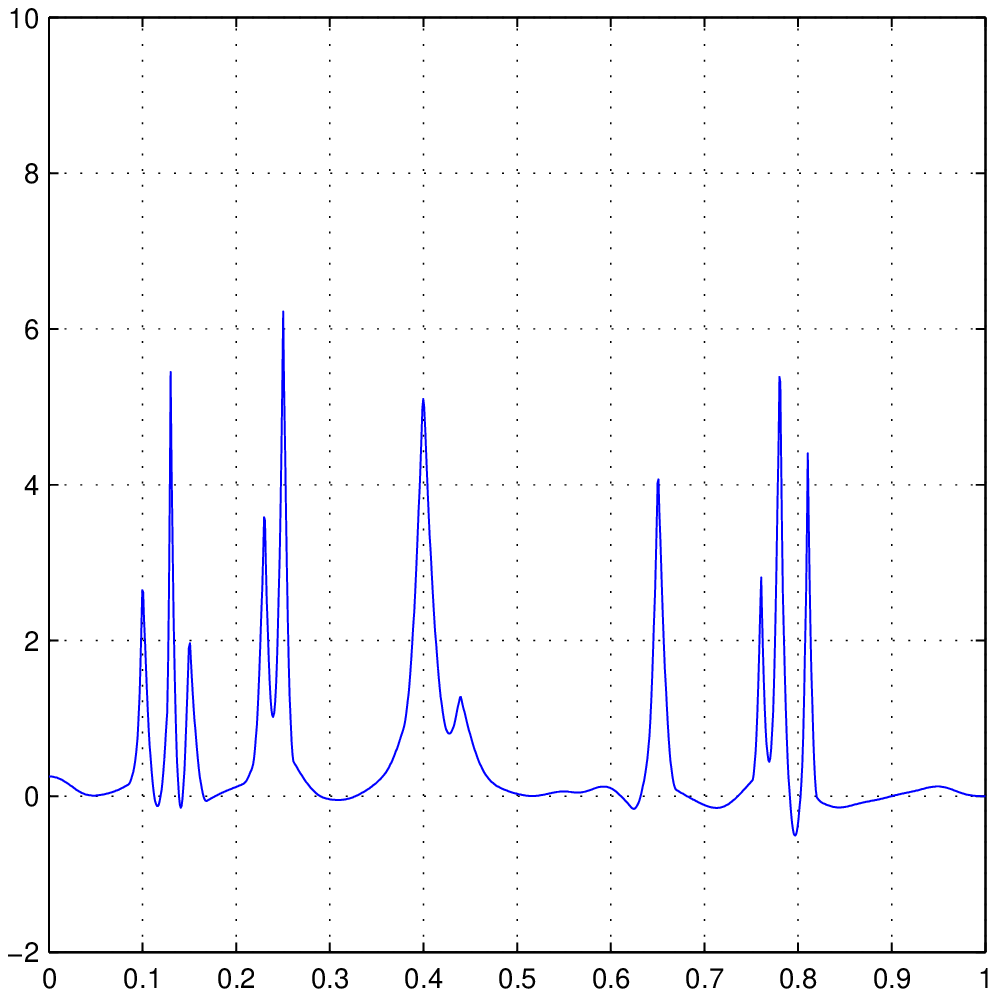}
\hspace{0.01\imwidth}
\includegraphics[width = 0.3\imwidth]{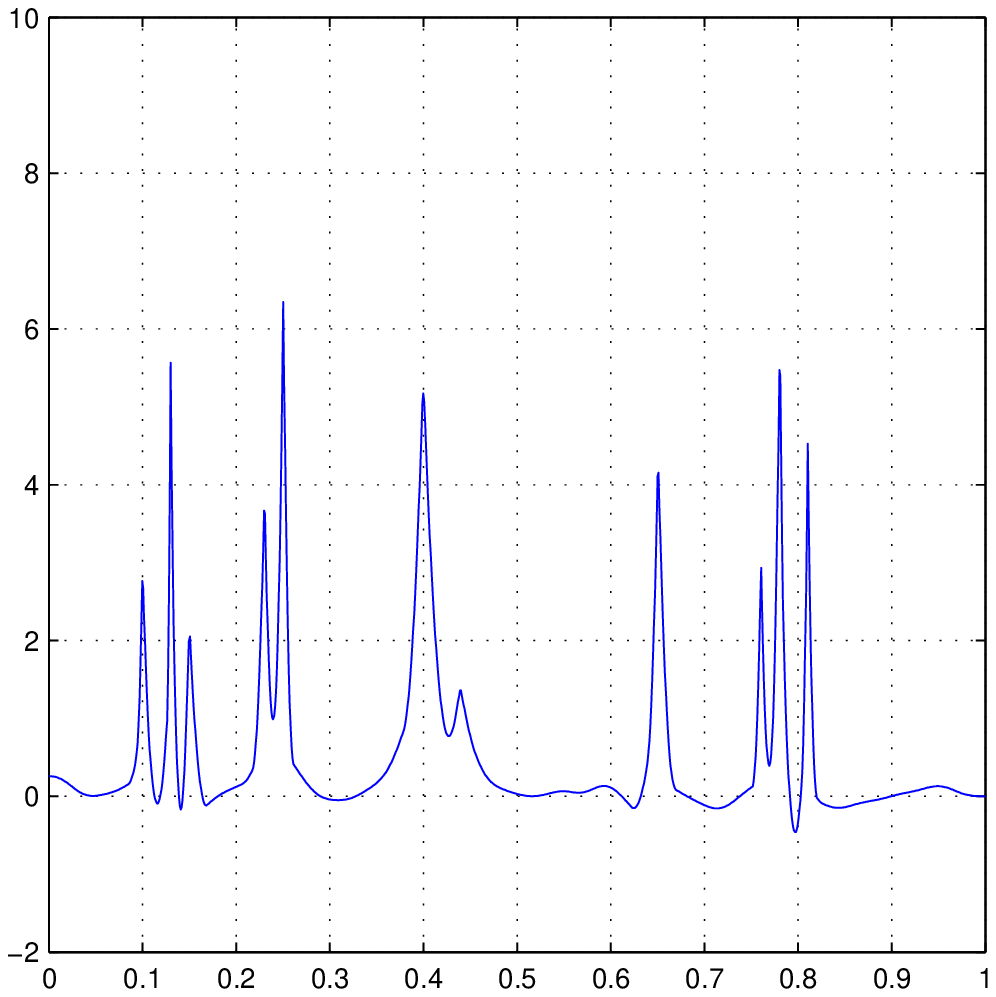}
\end{center}
\caption{Row-wise from top left: true signal $u^0$ (solid) with data $Y$, oracles
$\hat u_{\Ls{2}}$ and $\hat u_{\text{B}}$, AWS estimators $\hat
u_{\text{aws}}^{\text{Triangle}}$ and $\hat u_{\text{aws}}^{\text{Gaussian}}$,
and the SMREs $\hat u_{0.9}$, $\hat u_{0.75}$ and $\hat
u_{0.5}$.}\label{appl:fig:one}
\end{figure}
 
\subsubsection*{Numerical results and simulations.} In Figure \ref{appl:fig:one} the
oracles $\hat u_{\Ls{2}}$ and $\hat u_{\text{B}}$, the AWS-estimators $\hat u_{\text{aws}}^{\text{Triangle}}$ and and
$\hat u_{\text{aws}}^{\text{Gauss}}$ as well as the SMRE $\hat u_{0.9}$, $\hat
u_{0.75}$ and $\hat u_{0.5}$ are depicted. It is evident that the SMRE matches
the smoothness of the true object much better than the other estimators while
the essential features of the signal (such as peak location and
peak height) are preserved. In particular, almost no additional local extrema
are generated by our approach which stays in obvious contrast to the other
methods.  Moreover, we point out that the SMRE are quite  robust w.r.t.\ the
choice of the  confidence level $\alpha$.

We verify this behavior by a simulation study in Table \ref{appl:sim1}. For
different noise levels ($\sigma = 0.1, 0.3$ and $0,5$) we compare the MISE, MIAE
and MSB. Additionally, we compute the \emph{mean number of local maxima (MLM)} of
$\hat u$ relative to the number of local maxima in $u^0$ (which is $11$). Here
$\hat u$ is any of the above estimators. Note that the latter measure (similar to
the MSB) takes into account the smoothness of the estimators where a value
$\text{MLM}\gg 1$ indicates too many local maxima and hence a lack of regularity
whereas $\text{MLM} < 1$ implies severe oversmoothing.

 \begin{table}[h!]
{\scriptsize
\begin{tabular}{|l|c|c|c|c||c|c|c|c|}
\hline
 & \multicolumn{4}{|c||}{$\sigma = 0.1$} & \multicolumn{4}{|c|}{$\sigma =
 0.3$} \\ 
 \hline\hline
 &  MISE  & MSB & MIAE & MLM & MISE  & MSB & MIAE & MLM   \\
 \hline
 $\hat u_{\Ls{2}}$ & 0.009 & 0.008 & 0.071 & 11.881 & 0.046 & 0.027 &
 0.156 & 10.915  \\
 $\hat u_{\text{B}}$ & 0.009 &  0.007 &  0.070 & 11.700 & 0.048 & 
 0.026 & 0.149 & 10.359 \\
 $\hat u_{\text{aws}}^{\text{Triangle}}$ & 0.007 & 0.007 & 0.048 & 2.551 &
 0.040 & 0.035 & 0.112 & 3.053  \\
 $\hat u_{\text{aws}}^{\text{Gauss}}$ & 0.054 & 0.040 & 0.068 & 1.971 &
 0.062 & 0.041 & 0.107 & 2.230  \\
 \hline
 $\hat u_{0.9}$ & 0.008 & 0.004 & 0.047 & 1.336 & 0.056 & 0.019 &
 0.127 & 1.273  \\
 $\hat u_{0.75}$ & 0.007 & 0.004 & 0.044 & 1.342 & 0.050 & 0.018 &
  0.121 & 1.290   \\
 $\hat u_{0.5}$ & 0.007 & 0.004 &  0.043 & 1.366 & 0.046 & 0.017 &
  0.116 & 1.290    \\
 \hline
\end{tabular}

\vspace{0.01\imwidth}
\begin{tabular}{|l|c|c|c|c|}
\hline
  & \multicolumn{4}{|c|}{$\sigma = 0.5$}\\ 
 \hline\hline
 & MISE  & MSB & MIAE & MLM  \\
 \hline	
 $\hat u_{\Ls{2}}$ & 0.091 & 0.037 & 0.213 & 9.860 \\
 $\hat u_{\text{B}}$ & 0.094 & 0.036 & 0.206 & 9.135 \\
 $\hat u_{\text{aws}}^{\text{Triangle}}$ & 0.078 & 0.058 & 0.162 & 3.141 \\
 $\hat u_{\text{aws}}^{\text{Gauss}}$ & 0.079 & 0.043 & 0.149 & 2.330 \\
 \hline
 $\hat u_{0.9}$  & 0.134 & 0.034 & 0.207  & 1.194 \\
 $\hat u_{0.75}$ & 0.120 & 0.032 & 0.196 & 1.241  \\
 $\hat u_{0.5}$  & 0.109 & 0.030 & 0.186  & 1.238  \\
 \hline
\end{tabular}
\caption{Simulation studies for one dimensional peak data
set.}\label{appl:sim1}}
\end{table}

As it becomes apparent from Table \ref{appl:sim1}, the SMREs are
performing similarly well when compared to the reference estimators as far as
the standard measures MISE and MIAE are concerned. For small noise levels ($\sigma
= 0.1$) SMREs even prove to be superior. The distance measures MSB
and MLM, however, are significantly smaller for SMREs which
indicates that these meet the smoothness of the true object $u^0$ much better
than the reference estimators (cf. Example \ref{appl:exbreg} i)). All in all,
the simulation results confirm our visual impressions above.

\subsubsection*{Implementation Details.} 

The current index set $\S$ results in an overall number of 
constraints in \eqref{appl:regrsmre} of
\begin{equation*}
  \# \S = \sum_{i=1}^{100} (1024 - i + 1) = 97450.
\end{equation*} 
As pointed out in Section \ref{impl:proj}, the efficiency of Dykstra's Algorithm
can be increased by grouping independent side-conditions, that is side-conditions
corresponding to intervals in $\S$ with empty intersection. For example, the
system $\mathcal{S}$ can be grouped such that the intersection of the 
corresponding sets $D_1,\ldots,D_M$ in \eqref{impl:regroup} form
$\mathcal{C}$  with
\begin{equation*}
  M = \sum_{i=1}^{100} i = 5050.
\end{equation*}
In all our simulations we set $\tau = 10^{-4}$ and $\lambda = 1.0$ in Algorithm
\ref{impl:ala} which results in $k[\tau] \approx 100$ iterations and an overall
computation time of approximately $20$ minutes for each SMRE. We note, however,
that more than $95\%$ of the computation time is needed for the projection step
\eqref{ala:noise} and that a considerable speed up for the latter could be
achieved by parallelization.

\subsection{Image denoising}\label{appl:denoising}

In this section we apply the SMRE technique to the problem of image denoising,
that is non-parametric regression in $d=2$ dimensions. In other words, we consider
the noise model \eqref{appl:regr} as in Section \ref{appl:reg}, where
the index $\vec\nu$ ranges over the discrete square $\set{1,\ldots,m}^2$. In
Figure \ref{appl:test_images} two typical examples for images $u^0$ and noisy
observations $Y$ are depicted ($m=512$ and $\sigma=0.1$, where $u^0$ is scaled
between $0$ (black) and $1$ (white)). 

\begin{figure}[h!]
\begin{center}
\includegraphics[width = 0.4\imwidth]{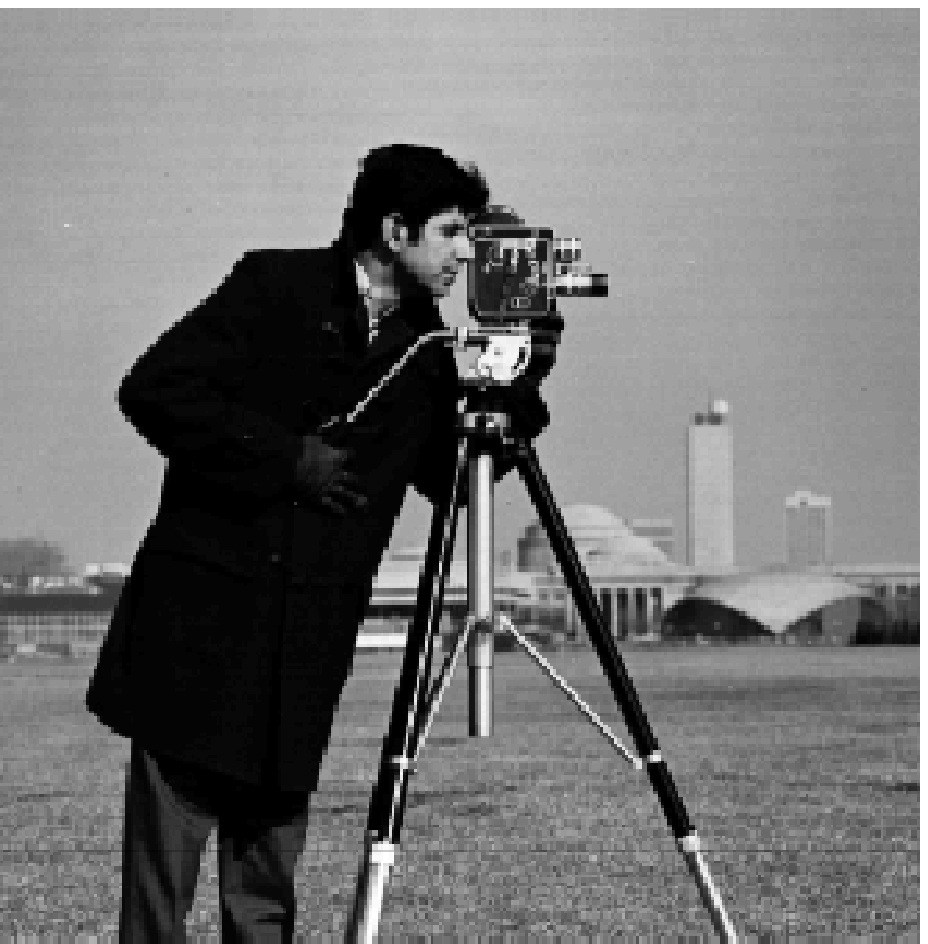}
\hspace{0.1cm}
\includegraphics[width = 0.4\imwidth]{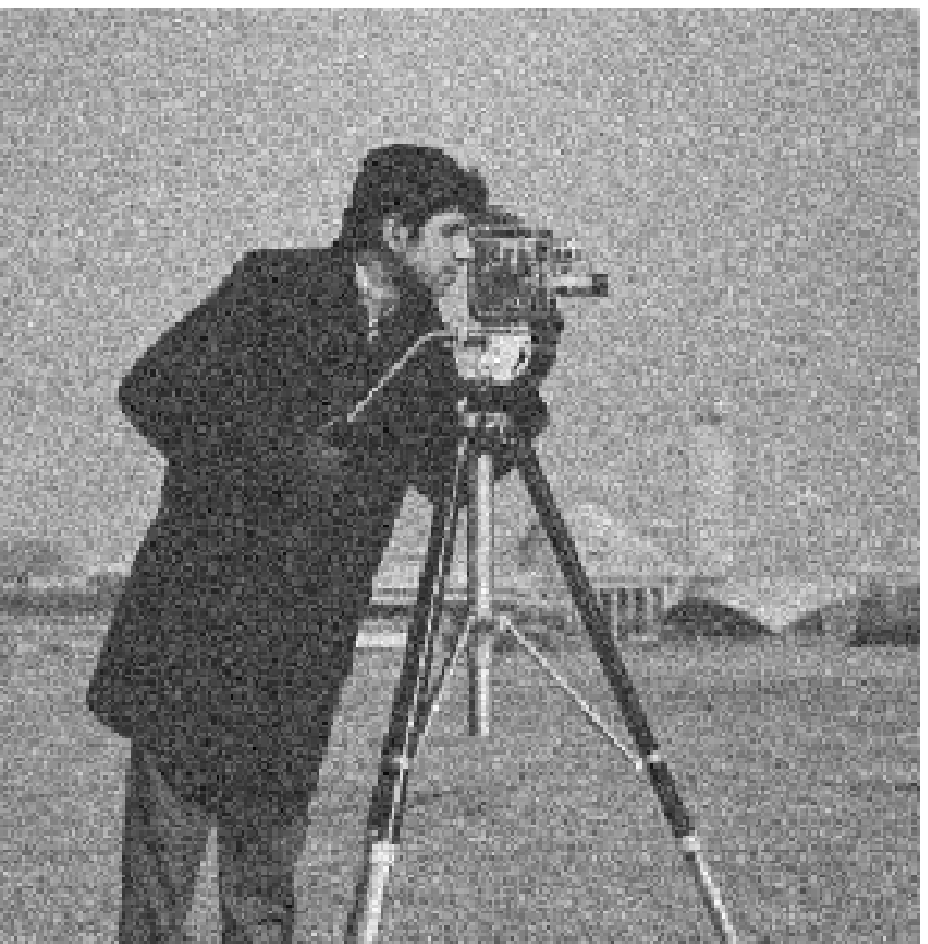}

\vspace{0.1cm}
\includegraphics[width = 0.4\imwidth]{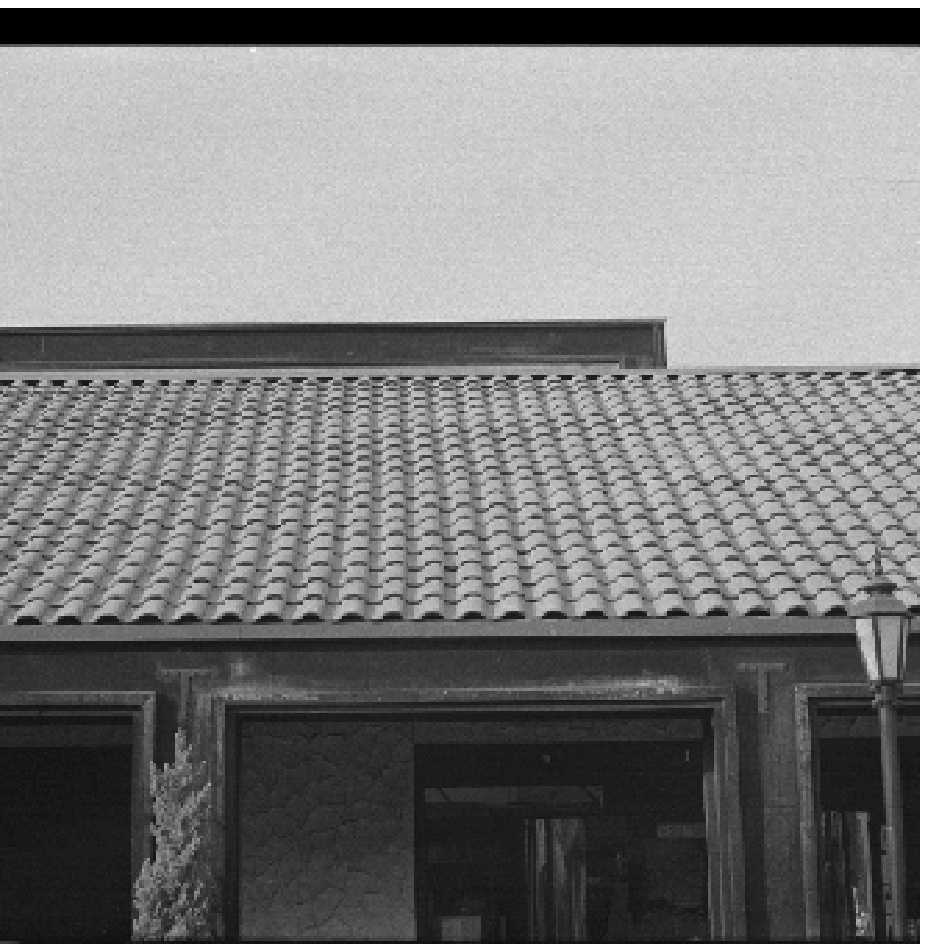}
\hspace{0.1cm}
\includegraphics[width = 0.4\imwidth]{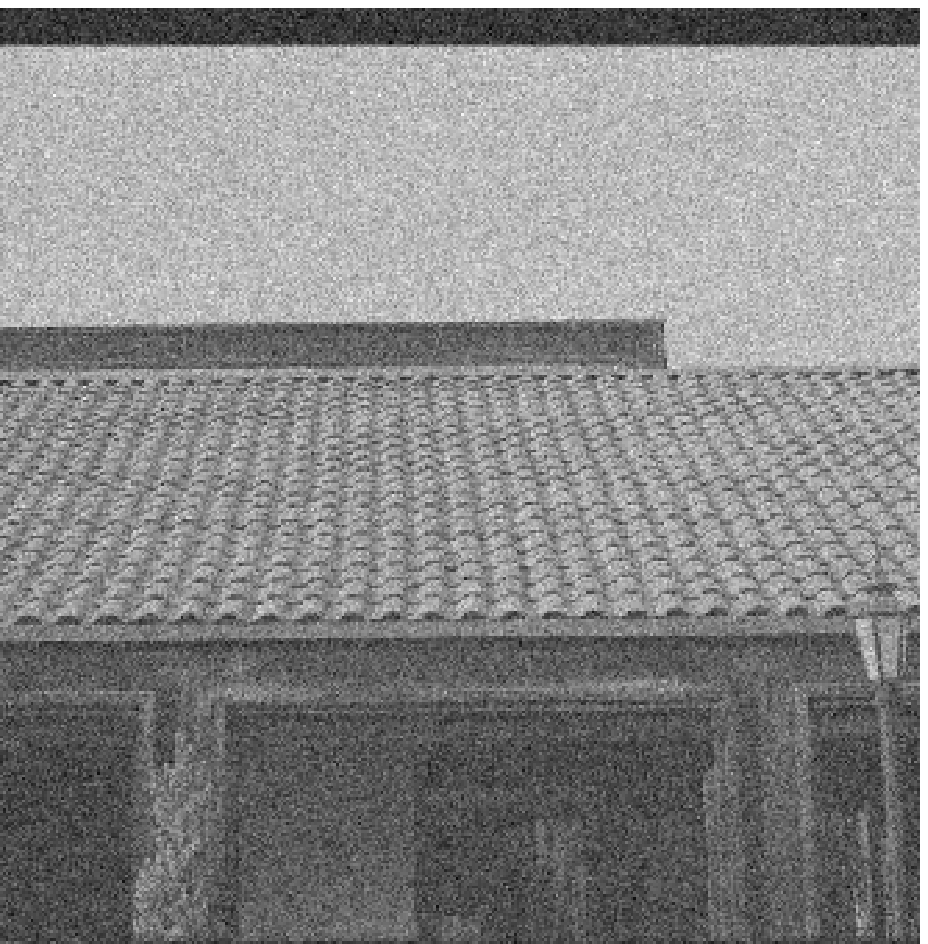}
\caption{Standard test images ``cameraman'' (top)
 and ``roof'' (bottom) and their noisy counterparts.}\label{appl:test_images}
\end{center}
\end{figure}

We will use the total-variation semi-norm $J=TV_1^\beta$ as regularization functional
($\beta = 10^{-8}$). Moreover, we choose $\Lambda$ to be defined as
\begin{equation}\label{appl:lambdasq}
  \Lambda(v)_{\vec\nu} = v_{\vec\nu}^2,\quad \forall(\vec\nu\in {1,\ldots,m}^2).
\end{equation}
The index set $\S$ is defined to be the collection of all discrete squares with
side lengths ranging from $1$ to $25$ and we set $\omega^S = c_S\chi_S$
with yet to be defined constants $c_S$. Thus, each SMREs solves the
constrained optimization problem
\begin{equation}\label{appl:regrsmre2d}
  \inf_{u\in U} TV_1^\beta(u) \quad\text{ s.t. }\quad \sum_{\vec\nu\in S}
  c_S (Y-u)_{\vec\nu}^2 \leq q \quad\forall(S\in \S).
\end{equation}
We agree upon $q = 1$ and specify the constants $c_S$. To this end, compute for
$s=1,\ldots,25$ the quantile values 
\begin{equation*}
  q_{\alpha,s} = \inf\set{q\in\R ~:~ 
  \Prob\left(\max_{\substack{S\in\S \\ \#S = s}}\sum_{\vec\nu\in S}
  \eps_{\vec\nu}^2\leq q\right)\geq 1-\alpha }\quad \alpha\in (0,1)
\end{equation*}
and set $c_S = q_{\alpha,\# S}^{-1}$. In other words, the definition of $c_S$
implies that the true signal $u^0$ satisfies the constraints  in
\eqref{appl:regrsmre2d} \emph{for squares of a fixed side length $s$} with
probability at least $\alpha$. We will henceforth denote by $\hat u_\alpha$ a
solution of \eqref{appl:regrsmre2d}. We remark on this particular choice
of the parameters $\omega_S$ below. 

\subsubsection*{Numerical results and simulations.} In Figures
\ref{appl:resultscamera} and \ref{appl:resultsroof} the oracles $\hat u_{\Ls{2}}$
and $\hat u_{\text{B}}$, the AWS-estimators $\hat
u_{\text{aws}}^{\text{Triangle}}$ and $\hat u_{\text{aws}}^{\text{Gauss}}$ as
well as the SMRE $\hat u_{0.9}$  are depicted (for the ``cameraman'' and ``roof''
test image respectively). It is rather obvious that the $\Ls{2}$-oracles are not
favorable: although relevant details in the image are preserved, smooth parts
(as e.g. the sky) still contain random structures. In contrast, the
estimator $\hat u_{\text{aws}}^{\text{Gauss}}$ preserves smooth areas but looses
essential details. The aws-estimator with triangular kernel performs much better,
however, it gives piecewise constant reconstructions of smoothly varying portions
of the image, which is clearly undesirable. The SMRE and the
Bregman-oracle visually perform superior to the other methods. The good
performance of the Bregman-oracle indicates that the symmetric Bregman distance
is a good measure for comparing images. In contrast to the Bregman-oracle, the
SMRE adapts the amount of smoothing to the underlying image structure: constant
image areas are smoothed nicely (e.g. sky portions), while oscillating patterns
(e.g. the grass part in the ``cameraman'' image or the roof tiles in the ``roof''
image) are recovered.

\begin{figure}[h!]
\begin{center}
\includegraphics[width = 0.4\imwidth]{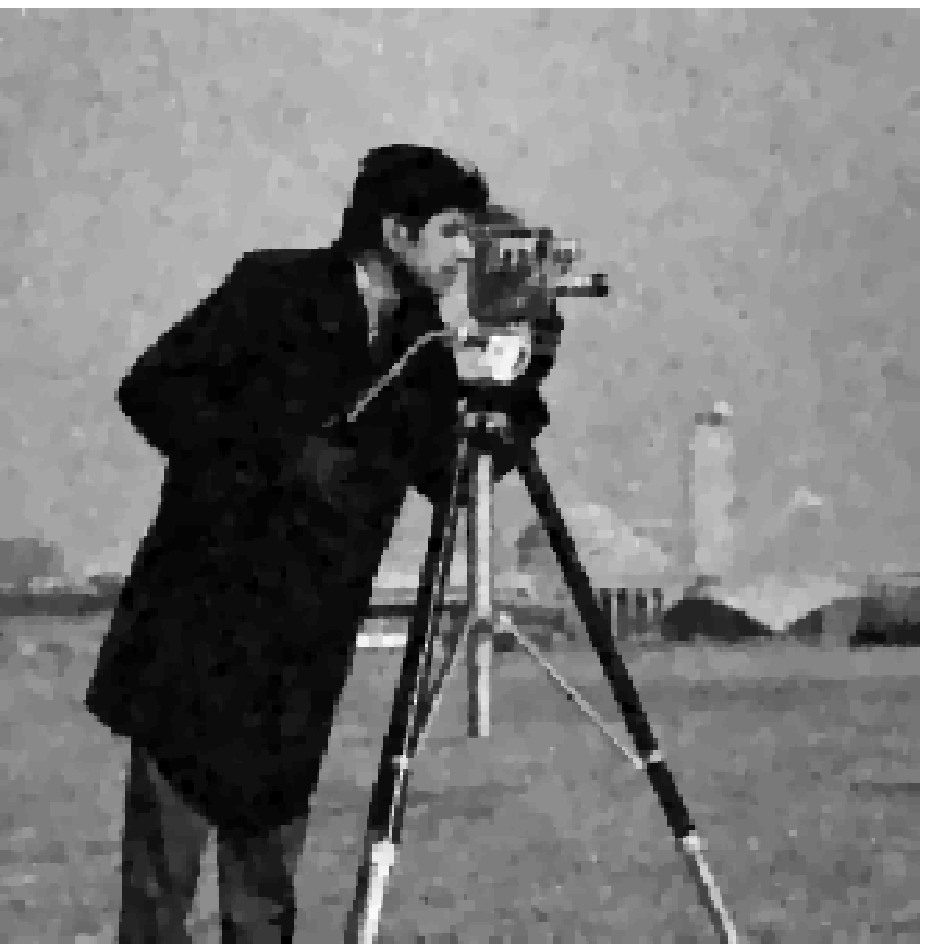}
\hspace{0.01\imwidth}
\includegraphics[width = 0.4\imwidth]{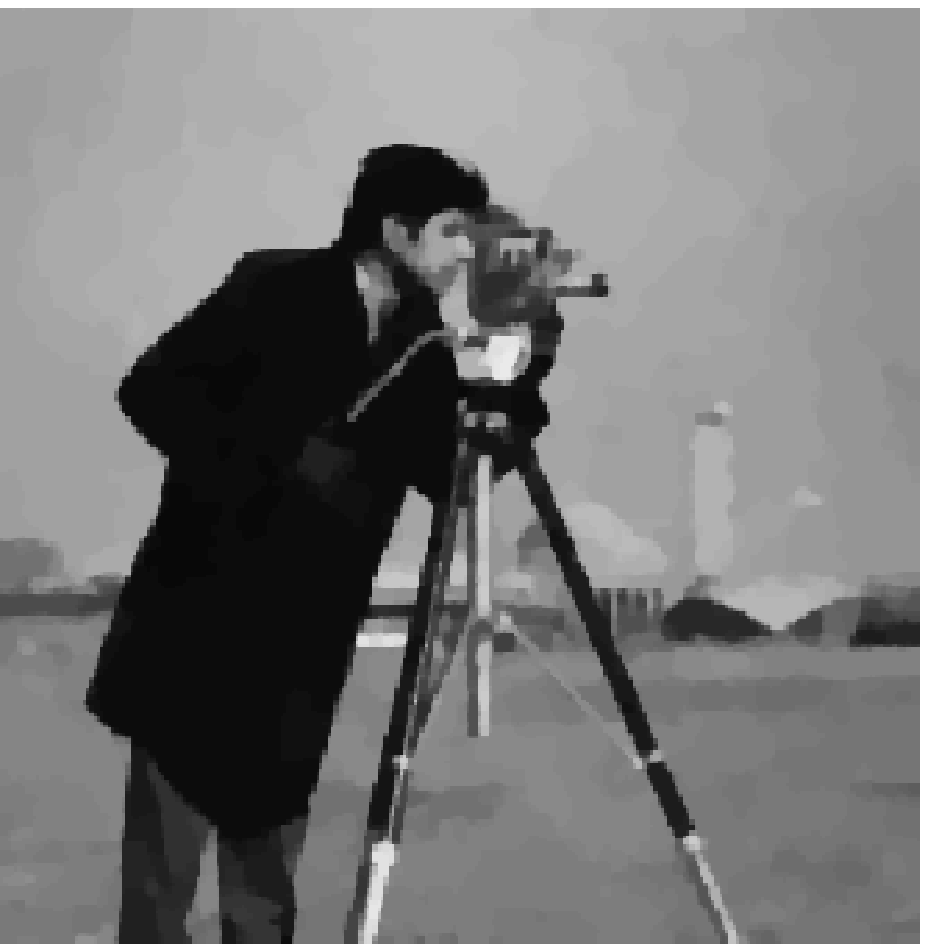}

\vspace{0.01\imwidth}

\includegraphics[width = 0.4\imwidth]{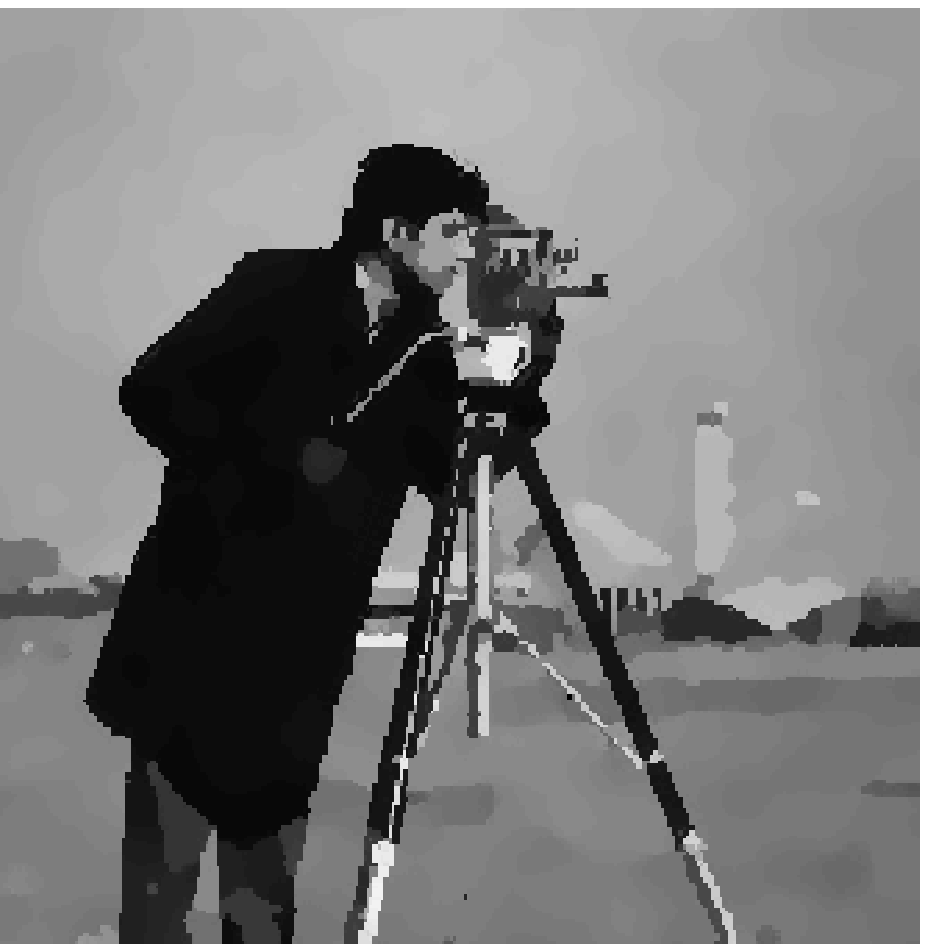}
\hspace{0.01\imwidth}
\includegraphics[width = 0.4\imwidth]{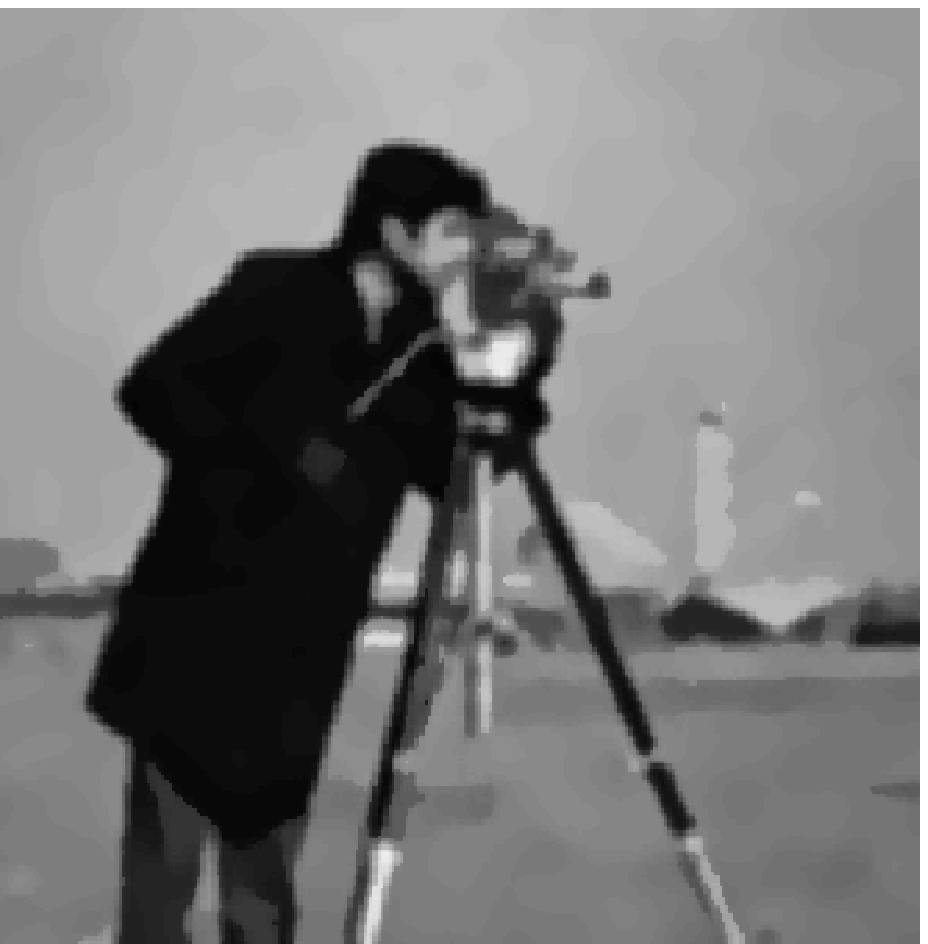}

\vspace{0.01\imwidth}

\includegraphics[width =
0.4\imwidth]{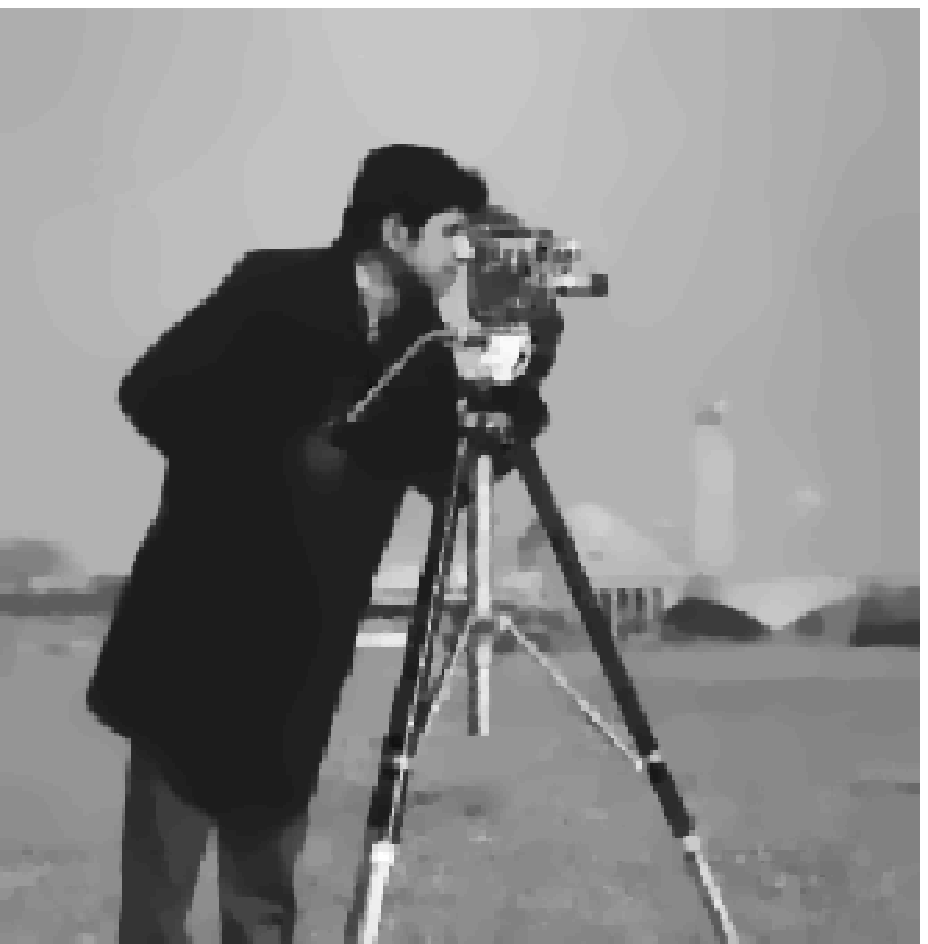}
\caption{Reconstructions ``cameraman'' (row-wise from top left): $\Ls{2}$-oracle
$\hat u_{\Ls{2}}$, Bregman-oracle $\hat u_B$, AWS estimators $\hat u_{\text{aws}}^{\text{Triangle}}$ and $\hat u_{\text{aws}}^{\text{Gaussian}}$,
and SMRE $\hat u_{0.9}$.}\label{appl:resultscamera}
\end{center}
\end{figure}

\begin{figure}[h!]
\begin{center}
\includegraphics[width = 0.4\imwidth]{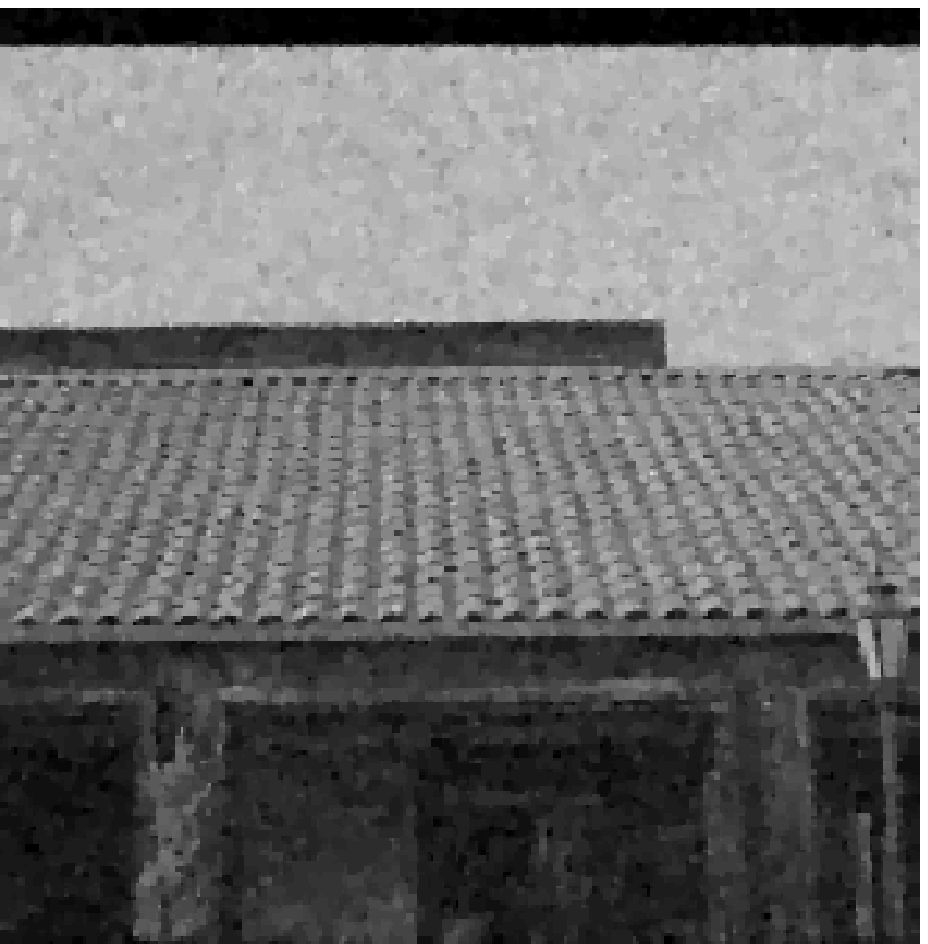}
\hspace{0.01\imwidth}
\includegraphics[width = 0.4\imwidth]{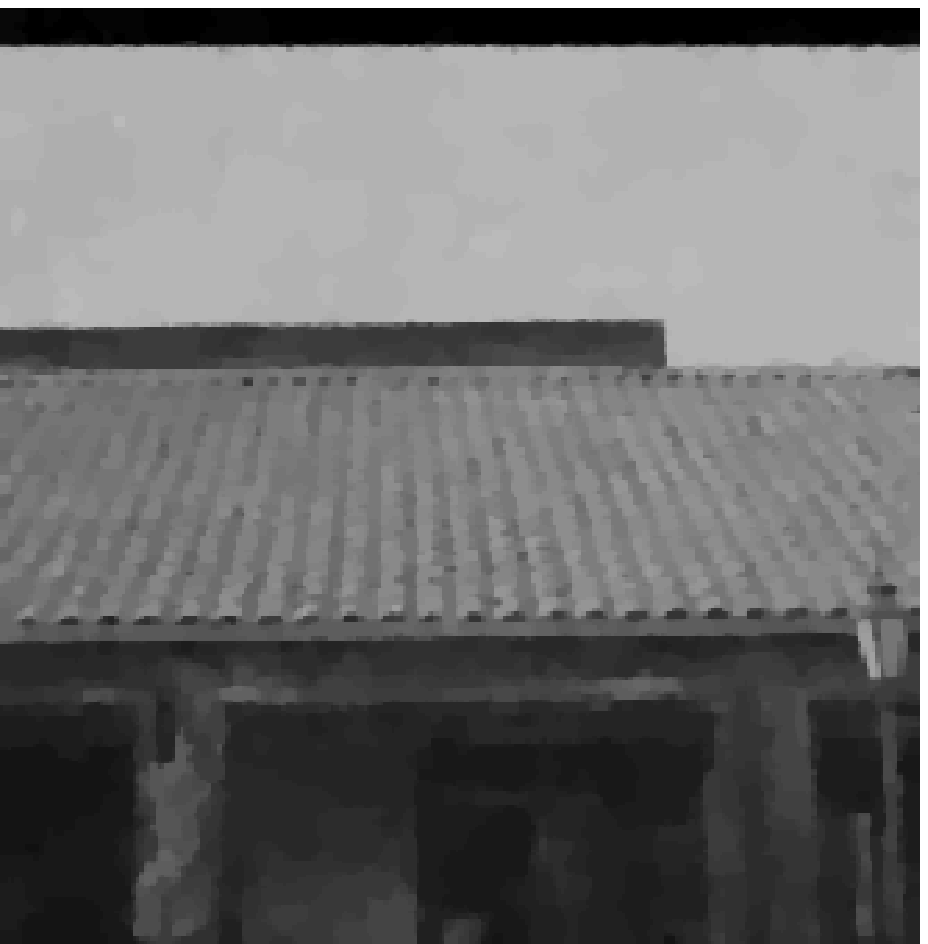}

\vspace{0.01\imwidth}

\includegraphics[width = 0.4\imwidth]{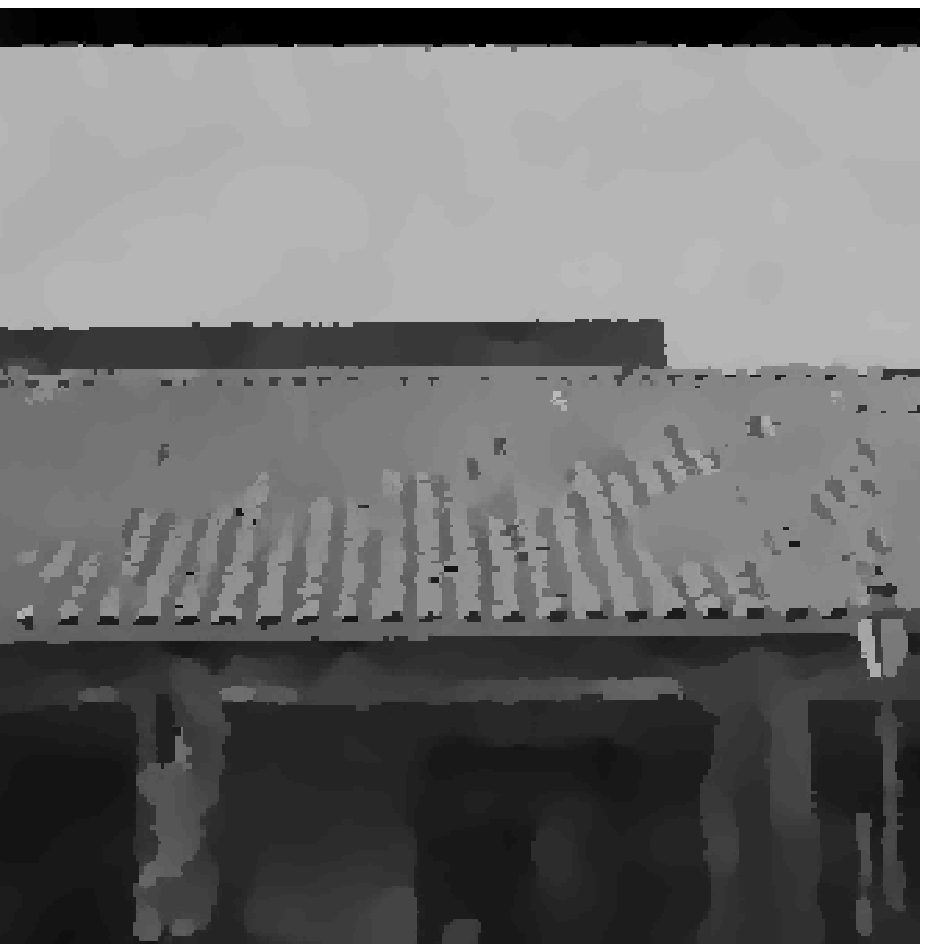}
\hspace{0.01\imwidth}
\includegraphics[width = 0.4\imwidth]{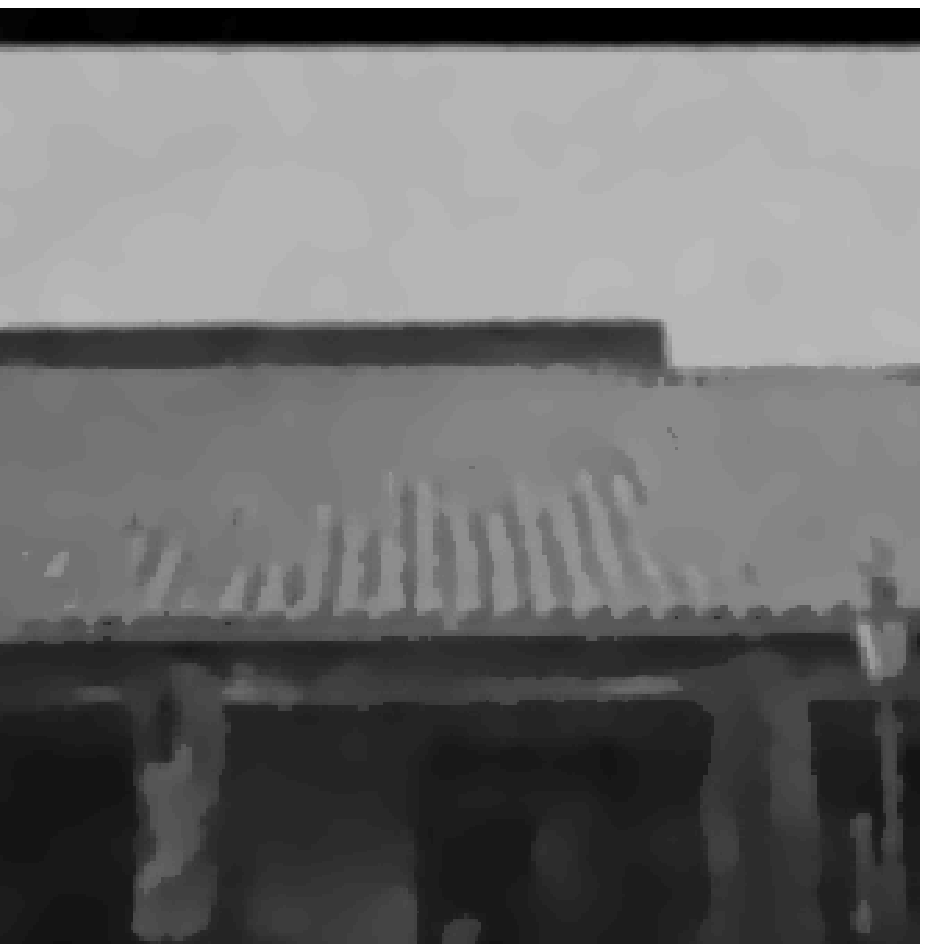}

\vspace{0.01\imwidth}

\includegraphics[width =
0.4\imwidth]{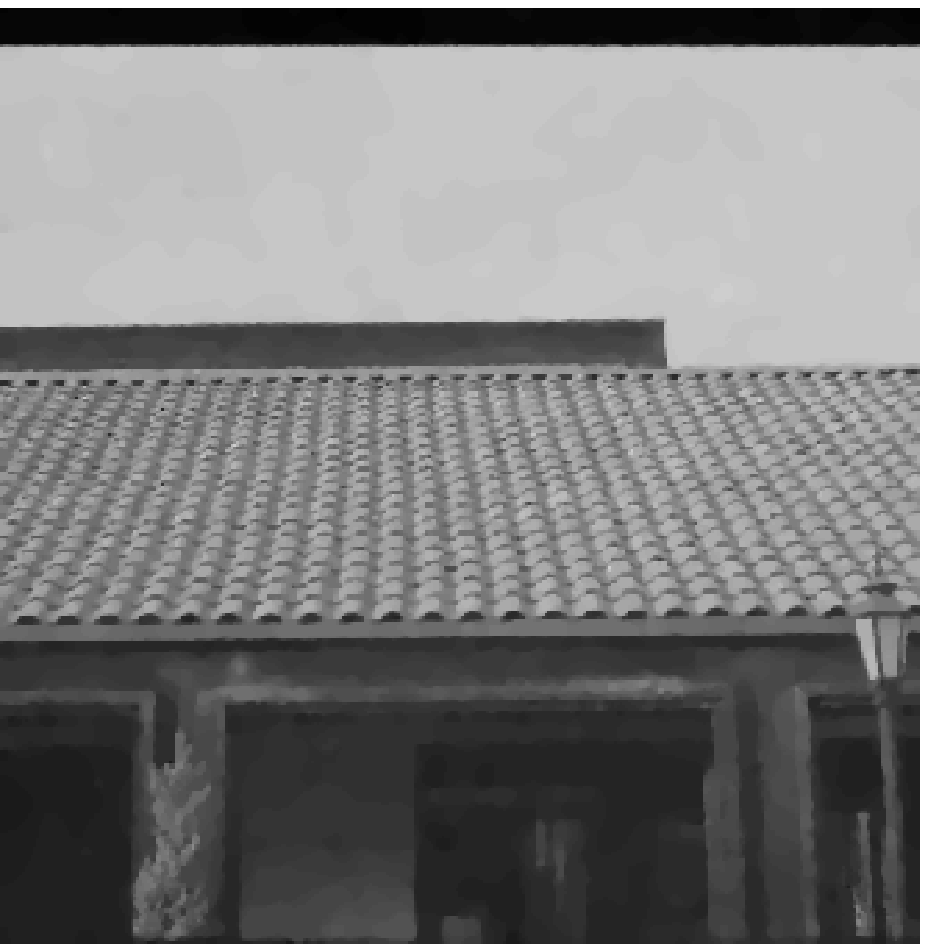}
\caption{Reconstructions ``roof'' (row-wise from top left):
$\Ls{2}$-oracle $\hat u_{\Ls{2}}$, Bregman-oracle $\hat u_B$, AWS estimators
$\hat u_{\text{aws}}^{\text{Triangle}}$ and $\hat u_{\text{aws}}^{\text{Gaussian}}$,
and SMRE $\hat u_{0.9}$.}\label{appl:resultsroof}
\end{center}
\end{figure}   

We evaluate the performance of the SMREs by means of a simulation
study. To this end, we compute the MISE, MIAE and MSB and compare these values
with the reference estimators. We note, however, that in particular the MISE
and MIAE are not well suited in order to measure the distance of images for
they are inconsistent with human eye perception. In \cite{frick:WanBovSheSim04}
the \emph{structural similarity index (SSIM)} was introduced for image quality    
assessment that takes into account luminance, contrast and structure of the
images at the same time. We use the author's implementation \footnote{available 
at \url{https://www.ece.uwaterloo.ca/~z70wang/research/ssim/}} which is
normalized such that the SSIM lies in the interval $[-1, 1]$ and is $1$ in case
of a perfect match. We denote by MSSIM the empirical mean of the SSIM in our
simulations. 

In Table \ref{appl:sim2} the simulation results are listed. A first striking
fact is the good performance of the $\Ls{2}$-oracle w.r.t.\ the MISE and
MIAE which is supposed to imply reconstruction properties superior to the
other methods. Keeping in mind the visual comparison in Figures
\ref{appl:resultscamera} and \ref{appl:resultsroof}, however, this is rather
questionable. On the other hand, it becomes evident that the $\Ls{2}$-oracle
has a rather poor performance w.r.t. the MSB which is more suited for measuring
image distances. It is therefore remarkable that the SMRE performs equally good
as the Bregman-oracle which, in contrast to the SMRE, is not accessible (since
$u^0$ is usually unknown). As far as the structural similarity measure MSSIM is
concerned our approach proves to be superior to all others. Finally, the
simulation results indicate that aws estimation is not  favourable for
denoising of natural images.

\begin{table}
{\scriptsize
\begin{center} 
\begin{tabular}{|l|c|c|c|c||c|c|c|c|}
\hline
 & \multicolumn{4}{|c||}{``cameraman''} & \multicolumn{4}{|c|}{``roof''} \\
 \hline\hline
 &  MISE  & MSB & MIAE  & MSSIM & MISE & MSB & MIAE & MSSIM \\
 \hline
 $\hat u_{\Ls{2}}$ & 0.0017 & 0.0314 & 0.0276  & 0.7739 & 0.0029 &
 0.0499  & 0.0383  & 0.6700   \\ 
 $\hat u_{\text{B}}$ & 0.0023 & 0.0256 & 0.0275  & 0.7995 &
 0.0038 & 0.0405  & 0.0391   & 0.6607 \\ 
 $\hat u_{\text{aws}}^{\text{Triangle}}$ & 0.0032 & 0.0482 & 0.0308   & 0.7657
 & 0.0046 & 0.0702 & 0.0416  & 0.6205   \\ 
 $\hat u_{\text{aws}}^{\text{Gauss}}$ & 0.0046 & 0.0470 & 0.0360 & 
 0.7284 & 0.0053 & 0.0686 & 0.0457  & 0.5668 \\
 \hline
 $\hat u_{0.9}$ & 0.0021 & 0.0252 & 0.0297  & 0.8024 & 0.0033  & 0.0374 
 & 0.0407   & 0.7003 \\
 \hline
\end{tabular}
\caption{Simulation studies for the test images ``cameraman'' and
``roof''.}\label{appl:sim2}
\end{center}}
\end{table}

\subsubsection*{Notes on the choice of $\Lambda$ and $\omega^S$.} 

In general, a proper choice of the transformation $\Lambda$ and of the
weight-functions $\omega^S$ can be achieved by including prior structural
information on the true image to be estimated. Substantial parts of natural images, such
as photographs, consists of oscillating patterns (as e.g. fabric, wood, hair,
grass etc.). This becomes obvious in the standard test images depicted in
Figure \ref{appl:test_images}. We claim that for signals that exhibit
oscillating patterns, a quadratic transformation $\Lambda$ as in
\eqref{appl:lambdasq} is favorable, since it yields (compared to the linear
statistic studied in Section \ref{appl:reg}) a larger power of the local test
statistic on small scales. 

In order to illustrate this, we simulate noisy observations $Y$ of the test
images $u$ in Figure \ref{appl:test_images} as in \eqref{appl:regr} with $\sigma
= 0.1$  and compute a \emph{global} estimator $\hat u$ by computing a minimizer
of the ROF-functional \eqref{appl:rof} (with $\lambda = 0.1$). We intend to
examine how well over-smoothed regions in $\hat u$ are detected by the
MR-statistic $T(Y - \hat u)$ as in \eqref{intro:mrstateqn} with two different
average functions (cf. \eqref{intro:mean})
\begin{equation*}
  \mu_{1,S}(v) = \abs{\sum_{\vec\nu \in S} v_{\nu}} \quad\text{ and }\quad
  \mu_{2,S}(v) = \sum_{\vec\nu \in S} v_{\nu}^2
\end{equation*}
respectively. For the sake of simplicity we restrict for the moment our
considerations on the index set $\S$ of all $5\times 5$ sub-squares in
$\set{1,\ldots,m}^2$. In Figure \ref{appl:local_means} the local means
$\mu_{i,S}$  of the residuals $v = Y - \hat u$ for the
``roof''-image are depicted. To be more precise, the center coordinate of each
square $S\in\S$ is colored according to $\mu_{i,S}$, Hence, large values
indicate locations where the estimator $\hat u$ is considered over-smoothed
according to the statistic. It becomes visually clear that the localization of
oversmoothed regions is better for $\mu_{2,S}$. This is a good motivation for
incorporating the local means of the squared residuals in the SMRE model \eqref{intro:smreeqnp}. 

\begin{figure}[!ht]
\begin{center}
\includegraphics[height = 0.35\imwidth]{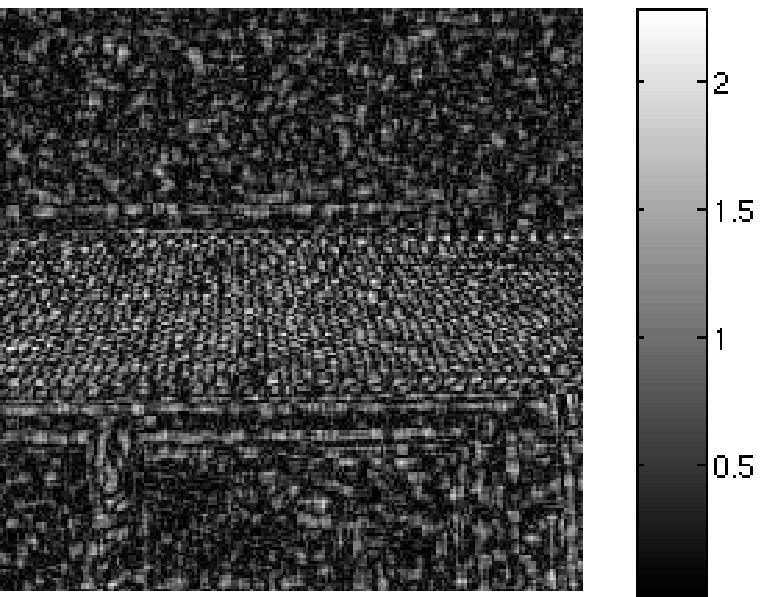}
\hspace{0.01\imwidth}
\includegraphics[height = 0.35\imwidth]{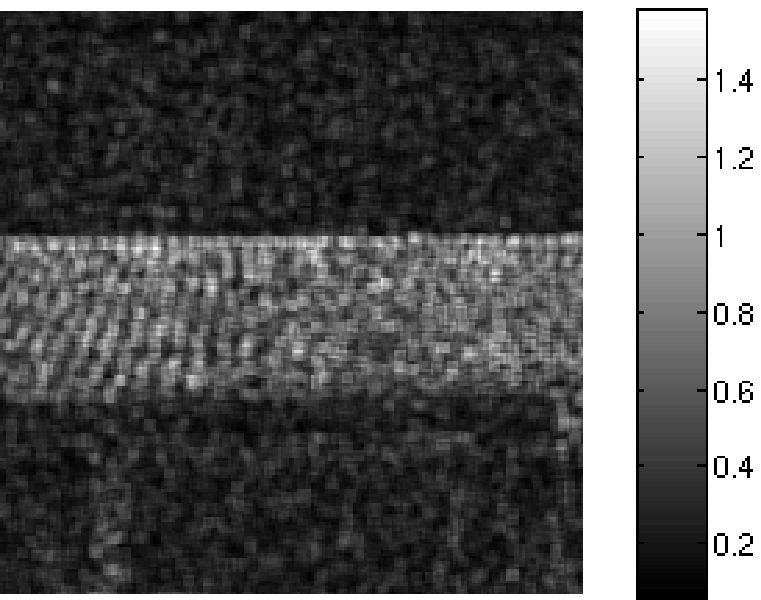}
\caption{Local means $\mu_{1,S}$ (left) and $\mu_{2,S}$ (right)) of the
residuals for ``roof'' image.}\label{appl:local_means}
\end{center}
\end{figure}

We finally comment on the choice of $c_S$. Since $\eps_{\vec\nu}$ are
independent and normally distributed random variables, the (scaled) average function 
\begin{equation*}
  \sigma^{-2}\mu_S(\eps) = \sum_{\vec\nu \in S}
  \left(\frac{\eps_{\vec\nu}}{\sigma}\right)^2
\end{equation*}
is $\chi^2$ distributed with $\# S$ degrees of freedom. Note that the
distribution of $\sigma^{-2}\mu_S(\eps)$ is identical only for sets $S$ of the
same scale $\# S$. As a consequence of this, it is likely that
certain scales dominate the supremum in the MR-statistic $T$ which spoils the
multiscale properties of our approach. As a way out, we compute normalizing
constants \emph{for each scale separately.} 

An alternative approach would be to search for transformations that turn
$\mu_S(\eps)$ into almost identically distributed random variables. Logarithmic
and $p$-root transformations are often employed for this purpose (see e.g.
\cite{frick:HawWix86}). This will be investigated separately.

\subsubsection*{Implementation Details.} 

The current index set $\S$ results in an overall number of 
constraints in \eqref{appl:regrsmre2d} of
\begin{equation*}
  \# \S = \sum_{i=1}^{25} (512-i+1)^2 = 6251300.
\end{equation*} 
Again by grouping independent side-conditions, the
system $\mathcal{S}$ can be grouped such that the intersection of the 
corresponding sets $D_1,\ldots,D_M$ in \eqref{impl:regroup} form
$\mathcal{C}$  with
\begin{equation*}
  M = \sum_{i=1}^{25} i^2 = 5525.
\end{equation*}
In all our simulations we set $\tau = 10^{-4}$ and $\lambda = 0.25$ in Algorithm
\ref{impl:ala} which results in $k[\tau] \approx 30$ iterations and a overall
computation time of approximately $2$ hours for each SMRE. Hence,
parallelization is clearly desirable in this case. 

\subsection{Deconvolution}\label{appl:deblurring}

Another interesting class of problems which can be approached by means of
SMREs are deconvolution problems. To be more precise, we assume that
$K$ is a convolution operator, that is
\begin{equation*}
  (Ku)_{\vec\nu} = (k \ast u)_{\vec\nu} = \sum_{\vec m \in \R^d} k_{\vec\nu - \vec
  m} u_{\vec m}
\end{equation*}
where $k$ is a square-summable kernel on the lattice $\Z^d$ and $u\in H$ is
extended by zero-padding. We will focus on the situation where $k$ is a
circular Gaussian kernel with standard deviation $\sigma$ given by
\begin{equation}\label{appl:defgauss}
  k_{\vec\nu} = \frac{1}{(\sqrt{2 \pi} \sigma)^d} e^{-
  \frac{\sum_{i=1}^d \nu_i^2}{2 \sigma^2}}.
\end{equation}
With $Z = Y + \lambda p_{k-1}+v_k$, the
primal step \eqref{ala:primal} in Algorithm \ref{impl:ala} amounts to solve
\begin{equation*}
  u_k \leftarrow \argmin_{u\in H} \frac{1}{2}\sum_{\vec\nu\in
  X}((Ku)_{\vec\nu} - Z_{\vec\nu})^2 + \lambda J(u),
\end{equation*}
where we choose $J$ to be as in \eqref{appl:tvpenalty} and apply the
techniques described in \cite{frick:Vog02} for the numerical solution.

In order to illustrate the performance of our approach in practical applications,
we give an example from confocal microscopy, nowadays a standard technique in
fluorescence microscopy (cf. \cite{frick:Paw06}). When recording images with
this kind of microscope, the original object gets blurred by a Gaussian kernel (in first order). The observations
(photon counts) can be modeled as a Poisson process, i.e.
\begin{equation}\label{appl:poiss}
  Y_{\vec\nu} = \text{Poiss}((Ku^0)_{\vec\nu}),\quad \vec\nu \in X.
\end{equation}  

The image depicted in Figure \ref{appl:cytodata} shows a recording of a PtK2 cell
taken from the kidney of \emph{potorous tridactylus}. Before the recording, the protein $\beta$-tubulin
was tagged with a fluorescent marker such that it can be traced by the
microscope. The image in \ref{appl:cytodata} shows an area of $18\times 18$
$\tcmu \text{m}^2$ at a resolution of $798\times 798$ pixel. The point spread
function of the optical system can be modeled as a Gaussian kernel with full
width at half maximum of $230$nm,  which corresponds to $\sigma = 4.3422$ in
\eqref{appl:defgauss}.  

Note that \eqref{appl:poiss} does not fall immediately into the range of models  
covered by \eqref{intro:lineqn}. We will adapt the present situation to the SMRE
methodology described in Section \ref{intro} by standardization and consider
instead of \eqref{intro:smreeqn} the modified problem
\begin{equation}\label{appl:smreeqnpoiss}
  \inf_{u\in U} J(u) \quad\text{ s.t.}\quad
  T\left(\frac{Y-Ku}{\sqrt{Ku}}\right)\leq 1
\end{equation}
where the division is understood pointwise. Clearly, the problem of finding a
solution of \eqref{appl:smreeqnpoiss} is much more involved than solving
\eqref{intro:smreeqn} for the constraints being \emph{nonconvex}: firstly, the
functional $G$ as defined in \eqref{impl:feasible} is nonconvex as a
consequence of which the convergence result in Theorem \ref{impl:alaconv} does
not apply and secondly Dykstra's projection algorithm as described in Section
\ref{impl:dyk} cannot be employed.  

We propose the following ansatz in order to circumvent this problem: instead of
projecting onto the intersection $\mathcal{C}$ of sets $C_S$ as described in
\eqref{impl:sidecond}, we now project in the $k$-th step of Algorithm \ref{impl:ala}
onto
\begin{equation*}
  \mathcal{C}_P[k] = \bigcap_{n=1}^N C_{P,S}[k] \quad \text{where} \quad
  C_{P,S}[k] = \set{ v \in H~:~ \mu_{S}\left(v\slash \sqrt{Ku_k}\right)
  \leq q}.
\end{equation*}
with a pointwise division by the square root of $K u_k$. Put differently, in
the $k$-th step of Algorithm \ref{impl:ala} we use the previous estimate $u_k$ of
$u^0$ as a \emph{lagged standardization} in order to approximate the
constraints in \eqref{appl:smreeqnpoiss}. In fact, we use $\sqrt{\max(Ku_k,
\eps)}$ with a small number $\eps>0$ for standardization, in order to avoid
instabilities.

We note that while with this modification Dykstra's algorithm becomes applicable
again, the projection problem \eqref{ala:noise} now changes in each iteration
step of Algorithm \ref{impl:ala}. As a consequence, Theorem \ref{impl:alaconv}
does not hold anymore after this modification, either. So far, we have not come
up with a similar convergence analysis.

We compute the SMRE $\hat u_{0.9}$ by employing Algorithm \ref{impl:ala} with
the modifications described above. As in the denoising examples in Section
\ref{appl:denoising} the index set $\S$ consists of all squares with the
side-lengths $\set{1,\ldots,25}$ and we choose $\omega^S = \chi_S$ and $\Lambda =
\id$. We note, that this results in an overall number of $\#\S = 95~436~200$
inequality constraints. The constant $q$ are chosen as in
\eqref{appl:quantile}, where we assume that $\eps_{\vec\nu}$ are independent and standard normally distributed r.v. 

In Algorithm \ref{impl:ala} we set $\lambda = 0.05$ and compute $100$ steps. We
observe that after a few iterations ($\sim 15$) the error $\tau$ falls below
$10^{-3}$ and almost stagnates thereafter, which is due to the fact that we do
not increase the accuracy in the subroutines for \eqref{ala:primal} and
\eqref{ala:noise}. Each iteration step in Algorithm \ref{impl:ala} approximately
takes $10$ minutes, where $90\%$ of the computation time is needed for
\eqref{ala:primal}. The result is depicted in Figure \ref{appl:cytoresult}.

\begin{figure}[h!]
\begin{center}
\subfigure[Fluorescence microscopy data of a PtK2 cell in \emph{potorous
tridactylus} kidney. The bright filaments indicate the location of the protein
$\beta$-tubulin.]{\includegraphics[height =
0.4\imwidth]{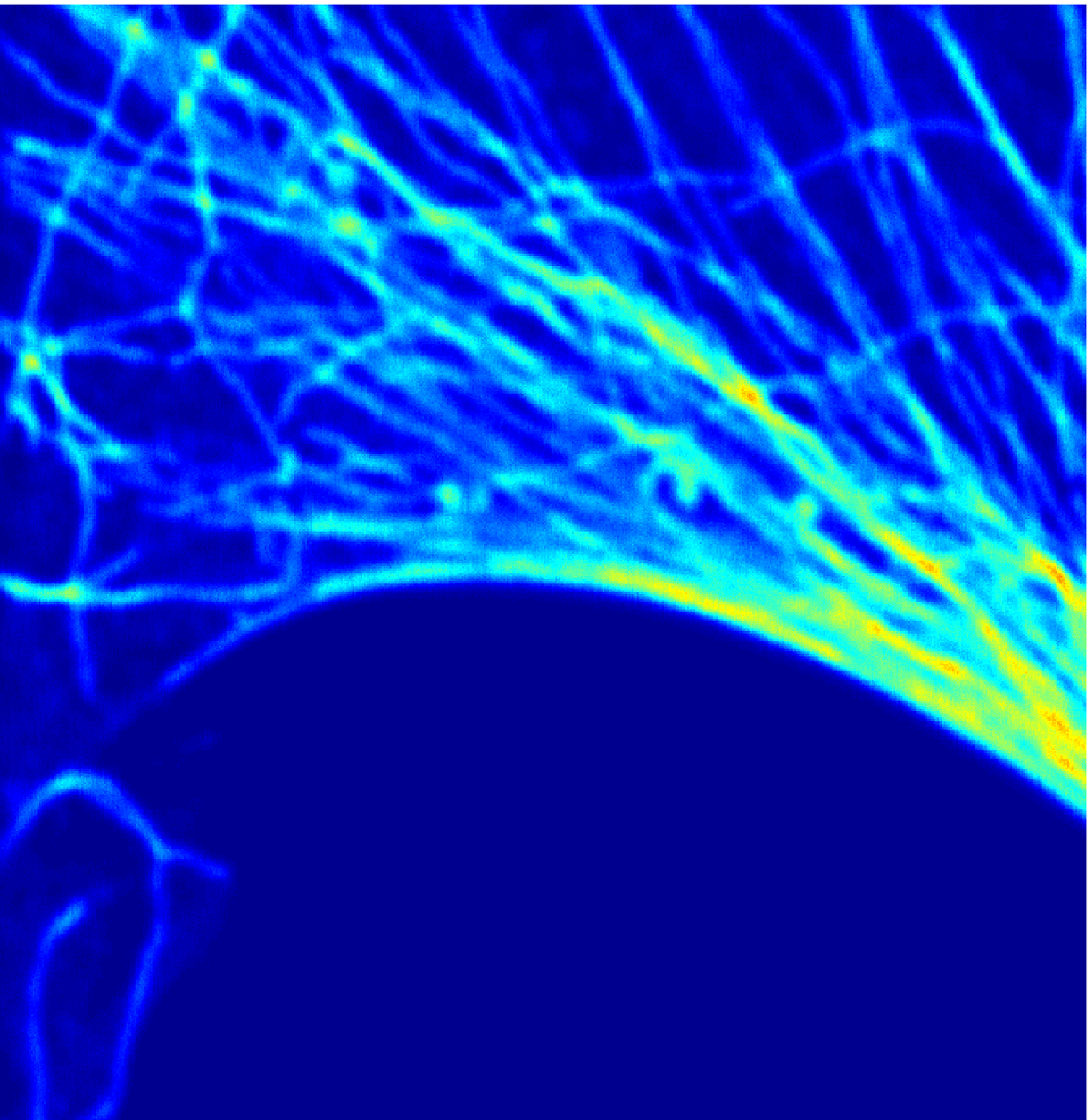}\label{appl:cytodata}}
\hspace{0.05\imwidth}
\subfigure[SMRE $\hat u_{0.9} $: fully automated and locally adaptive
deconvolution of microscopy data.]{\includegraphics[height =
0.4\imwidth]{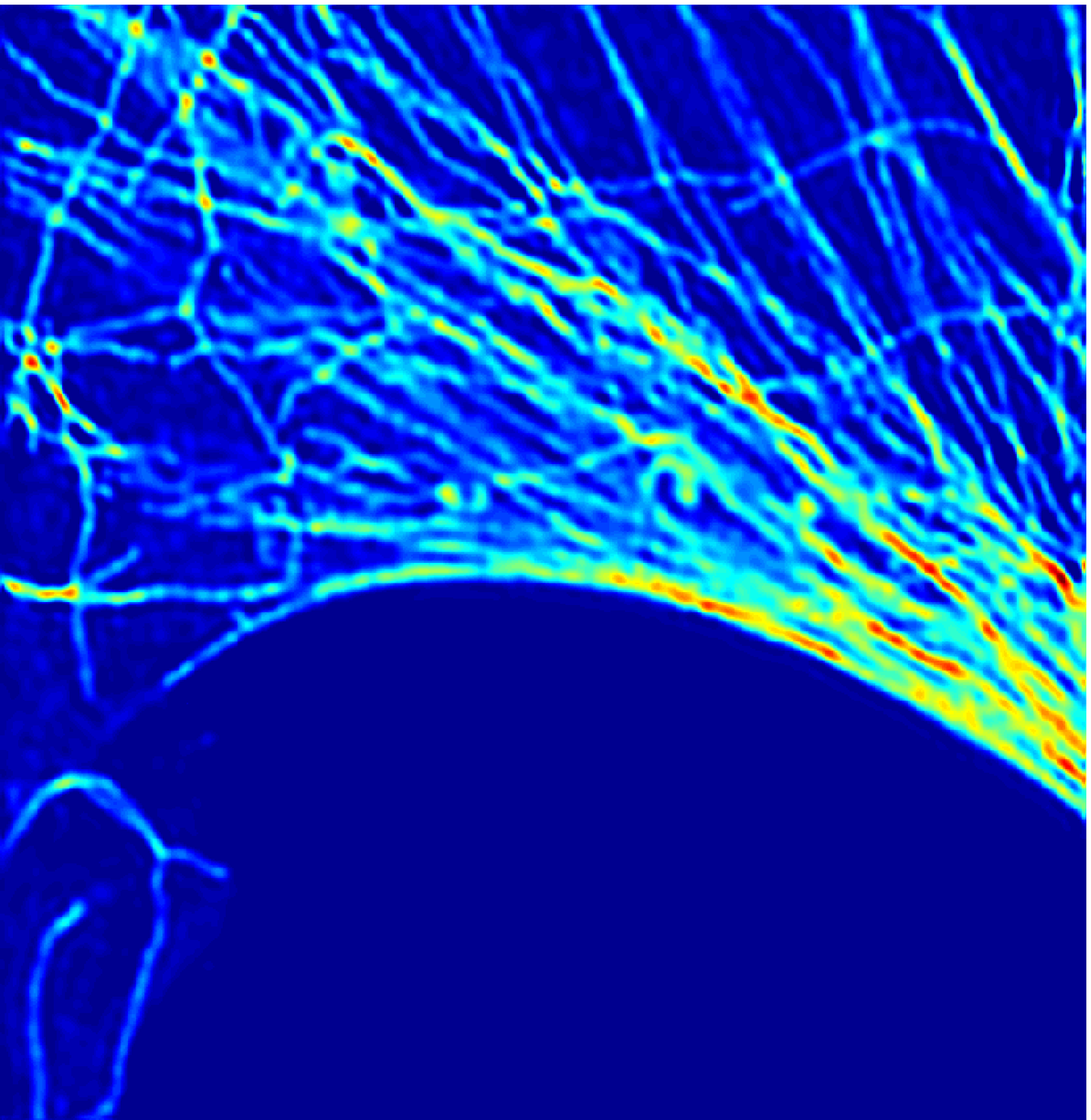}\label{appl:cytoresult}}
\caption{Reconstruction of confocal microscopy data.}
\end{center}
\end{figure}

The benefits of our method are twofold:
\begin{enumerate}[i)]
  \item The amount of regularization is chosen in a \emph{completely automatic
  way}. The only parameter to be selected is the level $\alpha$ in \eqref{appl:quantile}. Note that the parameter $\lambda$ in Algorithm
  \ref{impl:ala} has no effect on the output (though it has an effect on the number of iterations needed and the numerical stability).
  \item The reconstruction has an appealing locally adaptive behavior which in
  the present example mainly concerns the gaps between the protein filaments:
  whereas the marked $\beta$-tubulin is concentrated in regions of basically
  one scale, the gaps in between actually make up the multiscale nature of the
  image.
\end{enumerate}

\begin{figure}[h!]
\begin{center}
\subfigure[STED microscopy recording of the PtK2 cell
data set.]{\includegraphics[height =
0.4\imwidth]{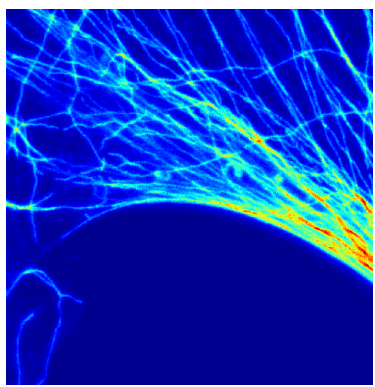}\label{appl:cytosted}}
\hspace{0.05\imwidth}
\subfigure[Detail comparison between confocal recording
(left), SMRE $\hat u_{0.9} $ (middle) and STED recording (right).
]{\includegraphics[height =
0.4\imwidth]{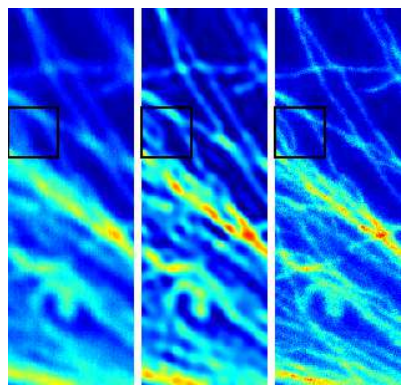}\label{appl:cytocomp}}
\caption{Comparison with high-resolution STED microscopy data.}
\end{center}
\end{figure}   

In the present situation we are in the comfortable position to have a reference
image at hand by means of which we can evaluate the result of our method: STED
(STimulated Emission Depletion) microscopy constitutes a relatively new method, that is
capable of recording images at a physically $5$-$10$ times higher resolution as
confocal microscopy (see \cite{frick:Hel94,frick:Hel07}). Hence a STED
image of this object may serve as ``gold standard'' reference image. 

Figure \ref{appl:cytosted} depicts a STED recording of the PtK2 cell data set in
Figure \ref{appl:cytodata}. The comparison of the SMRE $\hat u_{0.9}$ with the
STED recording in Figure \ref{appl:cytocomp} shows that our SMRE technique
chooses a reasonable amount of regularization: no artifacts due to
under-regularization are generated and on the other hand almost all relevant
geometrical features that are present in the high-resolution STED recording
become visible in the reconstruction. In particular, we note that filament
bifurcations (one such bifurcation is marked by a black box in Figure
\ref{appl:cytocomp}) become apparent in our reconstruction that are not visible
in the recorded data.

Finally, we mention that aside to standardization, other transformations of the
Poisson data \eqref{appl:poiss} could possibly be considered. For example
\emph{Anscombe's transformation} is known to yield reasonable approximations to
normality even for low Poisson-intensities and hence has a particular
appeal for e.g. microscopy data with low photon-counts. We are currently
investigating SMREs that employ Anscombe's transform, where in particular the
arising projection problems are challenging.

\section{Conclusion and Outlook}\label{outlook} 

In this work, we propose a general estimation technique for nonparametric
inverse regression problems in the white noise model \eqref{intro:lineqn} based
on the convex program \eqref{intro:smreeqnp}. It amounts to finding a
minimizer of a convex regularization functional $J(u)$ over a set of feasible
estimators that satisfy the fidelty condition $T(Y-Ku)\leq q$, where $T$ is
assumed to be the maximum over simple convex constraints and $q$ is some
quantile of the statistic $T(\eps)$. Any such minimizer we call
\emph{statistical multiresolution estimator (SMRE)}. This approach covers well
known uni-scale techniques, such as the Dantzig selector, but with a vast field
of potentially new application areas, such as locally adaptive imaging. The
particular appeal of the multi-scale generalization arises for those situations
where a ``neighboring relationship'' within the signal can be employed to
gain additional information by ``averaging'' neighboring residuals. We
demonstrate in various examples that this improvement is drastic. 

We approach the numerical solution of the problem by the ADMM (cf. Algorithm
\ref{impl:ala}) that decomposes the problem into two subproblems: A
$J$-penalized least squares problem, independent of $T$, and an orthogonal
projection problem onto the feasible set of \eqref{intro:smreeqnp} that is
independent of $J$. The first problem is well studied and for most typical
choices of $J$ fast and reliable numerical approaches are at hand. The
projection problem, however, is computational demanding, in particular for image
denoising applications. We propose Dykstra's cyclic projection method for its
approximate solution. Finally, by extensive numerical experiments, we illustrate
the performance of our estimation scheme (in nonparametric regression, image
denoising and deblurring problems) and the applicability of our algorithmic
approach.

Summarizing, this paper is meant to introduce a novel class of statistical
estimators, to provide a general algorithmic approach for their numerical computation and to
evaluate their performance by numerical simulations. The inherent questions on
the asymptotic behaviour of these estimators (such as consistency, convergence rates or oracle
inequalities) remain ---to a large extent--- unanswered. This opens an
interesting area for future research.

A first attempt has been made in \cite{frick:FriMarMun10} where it is assumed
that the model space $U\ni u^0$ is some Hilbert-space of real valued functions on some domain $\Omega$ and that
$K:U\ra\L{2}$ is linear and bounded. The error model \eqref{intro:lineqn} then
has to be adapted accordingly. When $Y$ is a Gaussian process on $\L{2}$ with
mean $Ku^0$ and variance $\sigma^2>0$, consistency and
convergence rates for SMREs as $\sigma\ra 0^+$ have been proved in
\cite{frick:FriMarMun10} for the case when $\Lambda = \id$. However, in order to
extend these results to the present setting, one would rather work with a
discrete sample of $Ku_0$ on the grid $X$ and then consider the case when the
number of observations $N = md$ tends to infinity. The previous analysis in
\cite{frick:FriMarMun10} indicates two major aspects that have to be considered
in the asymptotic analysis for SMREs: \begin{enumerate}[(a)]
  \item As $N\ra\infty$ usually the cardinality of the index set $\mathcal{S}$
  (and hence of the set of weight functions $\mathcal{W}$) gets unbounded. Thus,
  the mutliresolution statistic $T(\eps) = T_N(\eps)$ in \eqref{intro:mrstateqn}
  is likely to degenerate unless it is properly normalized and $\W$ satisfies
  some \emph{entropy condition}. In the linear case ($\Lambda = \id$) we
  utilized a result from \cite{frick:DueSpo01} that guarantees a.s. boundedness
  of $T_N(\eps)$. \item In order to derive convergence rates (or risk bounds) it
  is well known that the true signal $u^0$ has to satisfy some apriori
  regularity conditions. When using general convex regularization functionals
  $J$, this is usually expressed by the \emph{source condition}
  \begin{equation*}
  K^* p^0 \in \partial J(u^0), \text{ for some } p^0 \in \L{2}.
  \end{equation*}
  Here $K^*$ denotes the adjoint of $K$ and $\partial J$ the (generalized)
  derivative of $J$. For example, if $J(u) = \frac{1}{2}\norm{u}^2$, then this
  conditions means that $u^0 \in \ran(K^*)$. 
\end{enumerate}
It would be of great interest to transfer and extend the results in
\cite{frick:FriMarMun10} to the present situation. It is to be expected that
 (a) and (a) above are necessary assumptions for this purpose. 
 
 As stressed by the referees, other extensions are of interest and will be
 postponed to future work. In contrast to imaging, in many other applications
 the design $X$ is random, rather than fixed. In these situations an obvious way
 to extend our algorithmic framework would be to select suitable partitions $\S$
 according to the design density, i.e. with finer resolution at locations with a
 high concentration of design points. It also remains an open issue how to
 extend the SMRE methodology to density estimation rather than regression, in
 particular in a deconvolution setup. For $d=1$ a first step in this direction
 has been taken in \cite{frick:DavKov04} and it will be of great interest to
 explore whether our approach allow this to be extended to $d\geq 2$.
 
\section*{Acknowledgement}

K.F. and A.M. are supported by the DFG-SNF Research Group FOR916
\emph{Statistical Regularization and Qualitative constraints}. P.M is supported
by the BMBF project $03$MUPAH$6$ \emph{INVERS}. A.M  and P.M. are supported by the
SFB755 \emph{Photonic Imaging on the Nanoscale} and the SFB803
\emph{Functionality Controlled by Organization in and between Membranes}. 

We thank S.~Hell, A.~Egner and A.~Schoenle  (Department of
NanoBiophotonics, Max Planck Institute for Biophysical Chemistry, 
G{\"o}ttingen) for providing the microscopy data and L.~D{\"u}mbgen
(University of Bern) for stimulating discussions. Finally, we are grateful to an
associate editor and to an anonymous referee for their helpful comments. 
 
\appendix

\section{Proofs}\label{app}

In this section we shall give the proofs of Theorems \ref{impl:alaconv} and
\ref{impl:alaconvcor} as well as Corollary \ref{impl:alaconvcortwo}. We note,
that convergence of Algorithm \ref{impl:ala} is a classical subject in
optimization theory and a proof can e.g. be found in \cite[Chap. III Thm.
4.1]{frick:FG83}. However, in order to apply these results, it is necessary that
certain regularity conditions for $J$ hold, that are not realistic for our
purposes (as e.g. in the case of total-variation regularization). The assertions
of Theorems \ref{impl:alaconv} and \ref{impl:alaconvcor} are modifications of
the standard results. 

Moreover, we will allow for \emph{approximate solution} of the subproblems
\eqref{ala:noise} and \eqref{ala:primal}. To this end, we rewrite these two
subproblems as variational inequalities, i.e. given $(u_{k-1}, v_{k-1},
p_{k-1})$ we find $(u_k, v_k, p_k)$ such that
\begin{subequations}
\begin{gather}
G(v) - G(v_k) + \lambda^{-1}\inner{Ku_{k-1} + v_k - Y - \lambda p_{k-1}}{v -
v_k} \geq -\eps_k,\;\forall v\in H \label{app:noise}\\
J(u) - J(u_k) + \lambda^{-1}\inner{Ku_k
+ v_k - Y - \lambda p_{k-1}}{Ku - Ku_k} \geq
-\delta_k,\;\forall u\in U \label{app:primal}\\ 
p_k = p_{k-1} - (Ku_k + v_k - Y)\slash \lambda\label{app:dual},
\end{gather}
\end{subequations}
where we assume that $\set{\eps_1,\eps_2,\ldots}$ and
$\set{\delta_1,\delta_2,\ldots}$ are given sequences of positive numbers. Note
that \eqref{app:noise} implies that $G(v_k) = 0$ and hence
$v_k\in\mathcal{C}$ and that for $\eps_k = \delta_k = 0$ \eqref{app:noise} and
\eqref{app:primal} are equivalent to \eqref{ala:noise} and \eqref{ala:primal}, respectively.

Finally, we remind the reader of the definition of the \emph{subdifferential}
(or generalized derivative) $\partial F$ of a convex function $F:V\ra \R$ on a real Hilbert-space $V$:
\begin{equation*}
  \xi\in\partial F(v)\quad\Leftrightarrow\quad F(w)\geq F(v) +
  \inner{\xi}{w-v}_V\;\forall(w\in V). 
\end{equation*}
If $\xi\in\partial F(v)$, then $\xi$ is called \emph{subgradient} of $F$ at
$v$. It follows from \cite[Chap III, Prop. 3.1 and Prop. 4.1]{frick:ET76}
that the Lagrangian $L$ (and hence also the augmented Lagrangian $L_\lambda$)
has a saddle-point $(\hat u, \hat v, \hat p)\in U\times H\times H$ if and only if
\begin{equation}\label{impl:kkt}
  K\hat u + \hat v = Y,\quad K^*\hat p\in \partial J(\hat u)\quad\text{ and
  }\quad \hat p\in \partial G(\hat v).
\end{equation} 
We will henceforth assume that $\set{(u_k, v_k, p_k)}_{k\in\N}$ is a
  sequence generated by iteratively repeating the steps \eqref{app:noise} -
  \eqref{app:dual}. Further, we introduce the notation
  \begin{equation*}
    \bar u_k := u_k - \hat u,\quad\bar v_k := v_k - \hat v\quad\text{ and }\quad
    \bar p_k := p_k - \hat p.
  \end{equation*}
  We start with the following 
  
\begin{lem}\label{app:lemma}
For all $k\geq 1$ we have that
\begin{multline}\label{app:aux85}
  \left( \norm{\bar p_{k-1}}^2 + \lambda^{-2} \norm{K\bar u_{k-1}}^2 \right) -
  \left( \norm{\bar p_k}^2 + \lambda^{-2} \norm{K\bar u_k}^2 \right)
  \\ \geq \lambda^{-2}\left( \norm{K\bar u_k + \bar v_k}^2 + \norm{K \bar
  u_{k-1} - K \bar u_k}^2\right) - 2\lambda^{-2}(\delta_k + \delta_{k-1}) -
  \delta_k - \eps_k
\end{multline}
\end{lem}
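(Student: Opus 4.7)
The strategy is to establish a descent-type inequality for the Lyapunov-like functional $\Phi_k := \norm{\bar p_k}^2 + \lambda^{-2}\norm{K\bar u_k}^2$, following the standard pattern of ADMM convergence analyses but adapted to the approximate subproblem solutions governed by the slack sequences $\set{\eps_k}$, $\set{\delta_k}$. Throughout, I will exploit the KKT characterisation \eqref{impl:kkt} of the saddle point $(\hat u,\hat v,\hat p)$, in particular $K\hat u+\hat v=Y$, $K^*\hat p\in\partial J(\hat u)$ and $\hat p\in\partial G(\hat v)$.

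First I would derive the fundamental identity from the dual update \eqref{app:dual} by subtracting $\hat p=\hat p -\lambda^{-1}(K\hat u+\hat v-Y)$ (which holds vacuously because of primal feasibility of the saddle point): this yields $\bar p_k=\bar p_{k-1}-\lambda^{-1}(K\bar u_k+\bar v_k)$, and expanding the square gives
\begin{equation*}
\norm{\bar p_{k-1}}^2-\norm{\bar p_k}^2=2\lambda^{-1}\inner{\bar p_{k-1}}{K\bar u_k+\bar v_k}-\lambda^{-2}\norm{K\bar u_k+\bar v_k}^2.
\end{equation*}
Next I would bound the inner product $\inner{\bar p_{k-1}}{K\bar u_k+\bar v_k}$ from below. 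Plugging $u=\hat u$ into \eqref{app:primal} and using the subgradient inequality $J(u_k)\geq J(\hat u)+\inner{\hat p}{K\bar u_k}$ (from $K^*\hat p\in\partial J(\hat u)$), then substituting $Ku_k+v_k-Y=K\bar u_k+\bar v_k$ and $p_{k-1}=\hat p+\bar p_{k-1}$, leads after cancellation to $\inner{\bar p_{k-1}}{K\bar u_k}\geq \lambda^{-1}\inner{K\bar u_k+\bar v_k}{K\bar u_k}-\delta_k$. Doing the analogous manipulation for \eqref{app:noise} with $v=\hat v$ and $\hat p\in\partial G(\hat v)$ yields $\inner{\bar p_{k-1}}{\bar v_k}\geq\lambda^{-1}\inner{K\bar u_{k-1}+\bar v_k}{\bar v_k}-\eps_k$. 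Summing these, multiplying by $2\lambda^{-1}$ and combining with the dual identity above gives a preliminary estimate in which the awkward cross-term $2\lambda^{-2}\inner{K\bar u_{k-1}-K\bar u_k}{\bar v_k}$ appears.

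The key step for producing the desired progress term $\lambda^{-2}\norm{K\bar u_{k-1}-K\bar u_k}^2$ is the cross-substitution trick in the $u$-subproblem: apply \eqref{app:primal} at index $k$ with $u=u_{k-1}$, apply it at index $k-1$ with $u=u_k$, add the two inequalities (the $J$-values telescope and cancel), and substitute the dual update to eliminate $v_{k-1},v_k$. What falls out is the monotonicity estimate
\begin{equation*}
\inner{K(\bar u_k-\bar u_{k-1})}{\bar p_k-\bar p_{k-1}}\geq -\delta_k-\delta_{k-1}.
\end{equation*}
Since $K\bar u_k+\bar v_k=\lambda(\bar p_{k-1}-\bar p_k)$ by the dual update, this exactly controls the piece $\inner{K\bar u_{k-1}-K\bar u_k}{K\bar u_k+\bar v_k}$. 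Writing $\inner{K\bar u_{k-1}-K\bar u_k}{\bar v_k}=\inner{K\bar u_{k-1}-K\bar u_k}{K\bar u_k+\bar v_k}-\inner{K\bar u_{k-1}-K\bar u_k}{K\bar u_k}$ then lets me invoke the monotonicity estimate on the first part, while on the second part I apply the polarization identity
\begin{equation*}
2\inner{K\bar u_k-K\bar u_{k-1}}{K\bar u_k}=\norm{K\bar u_{k-1}-K\bar u_k}^2-\bigl(\norm{K\bar u_{k-1}}^2-\norm{K\bar u_k}^2\bigr),
\end{equation*}
which transfers the $\lambda^{-2}\norm{K\bar u_{k-1}}^2-\lambda^{-2}\norm{K\bar u_k}^2$ difference to the left-hand side of the estimate while leaving the non-negative progress term $\lambda^{-2}\norm{K\bar u_{k-1}-K\bar u_k}^2$ on the right. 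Collecting terms yields \eqref{app:aux85}.

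The main obstacle is the appearance of the cross-term $\inner{K\bar u_{k-1}-K\bar u_k}{\bar v_k}$ left over from the two-sided bound on $\inner{\bar p_{k-1}}{K\bar u_k+\bar v_k}$: it has no obvious sign, and producing the quadratic progress term from it requires precisely the cross-substitution in the $u$-subproblem together with the dual-update identification $K\bar u_k+\bar v_k=\lambda(\bar p_{k-1}-\bar p_k)$. Once that maneuver is carried out, everything else is linear algebra; the price paid for allowing approximate solutions is the four error contributions $\delta_k,\delta_{k-1},\delta_k,\eps_k$ that accumulate on the right-hand side.
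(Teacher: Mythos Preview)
Your proposal is correct and follows precisely the standard ADMM Lyapunov argument from Fortin--Glowinski \cite[Chap.~III, Thm.~4.1]{frick:FG83}, which is exactly what the paper invokes (its ``proof'' simply says to repeat steps (5.6)--(5.25) of that reference with the exact variational inequalities replaced by the approximate ones \eqref{app:noise}--\eqref{app:primal}). The only minor point is that your bookkeeping of the $\lambda$-powers on the error terms yields $-2\lambda^{-1}(\delta_k+\delta_{k-1})-2\lambda^{-1}(\delta_k+\eps_k)$ rather than the constants printed in \eqref{app:aux85}, but this is inconsequential for the summability argument downstream and may well reflect a typo in the paper's stated lemma.
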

\begin{proof}
The assertion follows by repeating the steps (5.6)-(5.25) in the proof of
\cite[Chap. III Thm. 4.1]{frick:FG83} after replacing (5.9) and (5.10)
by \eqref{app:noise} and \eqref{app:primal} respectively.
\end{proof}

We continue with the proof of Theorem \ref{impl:alaconv}. More precisely, we
prove the following generalized version
\begin{prop}\label{app:alaconv}
Assume that the sums $\sum_{k=1}^\infty \delta_k$ and $\sum_{k=1}^\infty \eps_k$
are finite. Then, the sequence $\set{(u_k, v_k)}_{k\geq1}$ is bounded in
$U\times H$  and every weak cluster point is a solution of \eqref{impl:linconstr}. Moreover,
  \begin{equation*}
    \sum_{k\in\N} \norm{Ku_k + v_k - Y}^2 + \norm{K(u_k - u_{k-1})}^2 < \infty.
  \end{equation*}
\end{prop}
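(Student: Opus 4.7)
The plan is to start from the descent estimate of Lemma~\ref{app:lemma}, telescope it to extract summability and boundedness of the iterates, and then identify any weak cluster point as a solution via lower semi-continuity and weak closedness. Throughout, fix a saddle point $(\hat u, \hat v, \hat p)$, which exists by Theorem~\ref{impl:kktthm}, and record the KKT identity $K\hat u + \hat v = Y$ from \eqref{impl:kkt}, so that $K\bar u_k + \bar v_k = Ku_k + v_k - Y$.

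\emph{Step 1 (telescoping).} Summing \eqref{app:aux85} for $k=1,\ldots,N$ and exploiting the telescoping on the left-hand side, one arrives at
\begin{equation*}
\|\bar p_N\|^2 + \lambda^{-2}\|K\bar u_N\|^2 + \lambda^{-2}\sum_{k=1}^{N}\left(\|Ku_k+v_k-Y\|^2 + \|K(u_k-u_{k-1})\|^2\right) \leq C_0 + E_N,
\end{equation*}
where $C_0 = \|\bar p_0\|^2 + \lambda^{-2}\|K\bar u_0\|^2$ and $E_N = \sum_{k=1}^N \bigl(2\lambda^{-2}(\delta_k+\delta_{k-1}) + \delta_k + \eps_k\bigr)$. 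By the hypothesis $\sum_k\delta_k, \sum_k\eps_k <\infty$, the quantity $E_N$ stays bounded as $N\to\infty$. Since every term on the left is non-negative, I immediately obtain the series bound $\sum_k \|Ku_k+v_k-Y\|^2 + \|K(u_k-u_{k-1})\|^2 < \infty$ (the last assertion of the proposition), and also that $\{\|\bar p_k\|\}$ and $\{\|K\bar u_k\|\}$ are bounded. In particular, $r_k := Ku_k+v_k-Y \to 0$ strongly in $H$, and $\{p_k\}, \{Ku_k\}$ are bounded.

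\emph{Step 2 (boundedness of $\{(u_k,v_k)\}$ in $U\times H$).} Boundedness of $\{v_k\}$ in $H$ follows from $\|v_k\|\leq \|Ku_k-Y\| + \|r_k\|$. For $\{u_k\}$, I test the primal variational inequality \eqref{app:primal} with $u=\hat u$ to obtain
\begin{equation*}
J(u_k) \leq J(\hat u) + \lambda^{-1}\inner{r_k - \lambda p_{k-1}}{K(\hat u - u_k)} + \delta_k,
\end{equation*}
and every factor on the right-hand side is bounded by the preceding step, so $\sup_k J(u_k) < \infty$. To turn this into an $U$-bound I invoke Assumption~\ref{review:assex}(ii) with $y=Y$: since $v_k\in\mathcal{C}$ gives $\max_S \mu_S(v_k)\leq q$ and $Ku_k-Y = -v_k + r_k$ with $r_k\to 0$, continuity of $\Lambda$ (together with boundedness of $v_k$) yields an eventual uniform bound on $\max_S\mu_S(Ku_k-Y)$. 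Combined with the bound on $J(u_k)$, this places the tail of $\{u_k\}$ in a set of the form $\{u : \max_S\mu_S(Ku-Y)+J(u)\leq c\}$, which is bounded by Assumption~\ref{review:assex}(ii); hence $\{u_k\}$ is bounded in $U$.

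\emph{Step 3 (identification of cluster points).} Let $(u^*, v^*)$ be a weak cluster point along a subsequence. From $r_k\to 0$ strongly one has $Ku^* + v^* = Y$; since $\mathcal{C}$ is closed and convex, hence weakly closed, also $v^*\in\mathcal{C}$, so $(u^*,v^*)$ is feasible for \eqref{impl:linconstr}. To verify optimality, fix any feasible competitor $(u,v)$ with $Ku+v=Y$ and $v\in\mathcal{C}$, and pass to the weak limit in \eqref{app:primal}. Using the weak lower semi-continuity of $J$, boundedness of $\{p_k\}$, the convergence $r_k\to 0$, and the identity $p_k - p_{k-1} = -r_k/\lambda$ from \eqref{app:dual}, one obtains $J(u^*) \leq J(u)$ for every such $u$. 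Since $G(v^*)=G(v)=0$, this shows $(u^*,v^*)$ solves \eqref{impl:linconstr}.

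\emph{Main obstacle.} I expect Step~2 to be the delicate point: translating the bound on $J(u_k)$ into an $U$-bound requires controlling $\max_S\mu_S(Ku_k-Y)$ uniformly from feasibility of $v_k$ together with the vanishing of $r_k$. Because $\Lambda$ (and hence $\mu_S$) is only assumed continuous, the uniformity across $S\in\S$ has to be handled with care; this is where the structural hypothesis in Assumption~\ref{review:assex}(ii) becomes essential.
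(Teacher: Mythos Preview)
Your Steps~1 and~2 are correct and match the paper's argument closely; contrary to your expectation, Step~2 is not the delicate point. The genuine gap is in Step~3.

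When you test \eqref{app:primal} with a feasible pair $(u,v)$ (so $Ku+v=Y$, $v\in\mathcal{C}$), you obtain
\[
J(u_k)\;\leq\; J(u)\;+\;\lambda^{-1}\inner{r_k}{K(u-u_k)}\;-\;\inner{p_{k-1}}{K(u-u_k)}\;+\;\delta_k.
\]
Using $Ku-Ku_k = v_k - v - r_k$, the residual terms vanish and you are left with the multiplier cross-term $-\inner{p_{k-1}}{v_k-v}$. The ingredients you list (boundedness of $\{p_k\}$, $r_k\to 0$, and $p_k-p_{k-1}=-r_k/\lambda$) do \emph{not} control this term: $\{p_{k-1}\}$ is only bounded, $\{v_k\}$ is only weakly convergent, and the product of a bounded sequence with a weakly null sequence need not tend to zero. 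The dual update merely says $p_k-p_{k-1}\to 0$, which is of no help here.

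What is missing is the \emph{projection} inequality \eqref{app:noise}. Testing \eqref{app:noise} with $v\in\mathcal{C}$ (so $G(v)=G(v_k)=0$) gives
\[
-\inner{p_{k-1}}{v_k-v}\;\leq\;\eps_k+\lambda^{-1}\inner{Ku_{k-1}+v_k-Y}{\,v-v_k},
\]
and since $Ku_{k-1}+v_k-Y = r_k + K(u_{k-1}-u_k)\to 0$ strongly while $\{v_k\}$ is bounded, the right-hand side is $o(1)$. This is exactly how the paper closes the argument (see \eqref{app:aux10}--\eqref{app:aux12}, where the comparison is made with the saddle point $(\hat u,\hat v)$, which suffices since $J(\hat u)$ is already optimal). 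Once you insert this step, your $\limsup_k J(u_k)\leq J(u)$ follows, and weak lower semi-continuity of $J$ together with weak closedness of $\mathcal{C}$ finishes Step~3 as you intended.
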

        
\begin{proof} Let $k\geq1$ and define $D =
\sum_{k=1}^\infty \delta_k$ and $E = \sum_{k=1}^\infty \eps_k$. 
Summing up Inequality \eqref{app:aux85} over $k$ and
keeping in mind that $K\bar u_k + \bar v_k = K u_k + v_k -Y$ and $K \bar u_{k-1} - K \bar u_k = K u_{k-1} -  K u_k$ shows
\begin{multline*}
  \sum_{k=1}^\infty  \norm{K u_k + v_k - Y}^2 + \norm{K u_{k-1} - K u_k}^2 
  \\ \leq \lambda^2\norm{\hat p}^2 + \norm{K \hat u}^2 + (4\lambda^{-2}+1)D + E<
  \infty
\end{multline*}
where we have used that $\bar u_0 = \hat u$ and $\bar p_0 = \hat p$. 
Furthermore, it follows again from \eqref{app:aux85}  that 
\begin{equation}\label{app:aux9}
\norm{\bar p_k}^2 + \lambda^{-2}\norm{K\bar u}^2 \leq \norm{\hat p}^2 +
\lambda^{-2}\norm{K\hat u}^2 + (4\lambda^{-2}+1)D + E < \infty
\end{equation}
This together with  the fact that $\norm{K u_k + v_k - Y} \ra 0$
shows that
\begin{equation*}
  \max( \norm{Ku_k}, \norm{v_k}, \norm{p_k} ) = \bigo(1).
\end{equation*}
Together with the optimality condition for \eqref{ala:primal} this in
turn implies that for an arbitrary $u\in H$
\begin{equation}\label{app:opt}
  J(u_k) \leq J(u) + \lambda^{-1} \inner{K u_k + v_k - Y - \lambda
  p_{k-1}}{K u - K u_k} + \delta_k = \bigo(1).
\end{equation}
Summarizing, we find that 
\begin{equation*}
  \max_{S\in \S} \mu_{S}(Ku_k-Y)+J(u_k) \leq \max_{S\in
  \S} \norm{\omega^S}\norm{\Lambda(Ku_k-Y)} + J(u_k) \leq c< \infty
\end{equation*}
for a suitably chosen constant $c\in\R$, since $\Lambda$ is supposed to be
continuous. Thus, it follows from Assumption \ref{review:assex} that
$\set{u_k}_{k\in\N}$ is bounded and hence sequentially weakly compact. Now, let
$(\tilde u, \tilde v, \tilde p)$ be a weak cluster point of $\set{(u_k,v_k,p_k)}_{k\in\N}$ and recall that $(\hat u, \hat v, \hat p)$ was
assumed to be a saddle point of the augmented Lagrangian $L_\lambda$. Setting $u
= \hat u$ in \eqref{app:opt} thus results in
\begin{equation}\label{app:aux10}
  \begin{split}
  J(u_k) & \leq J(\hat u) + \lambda^{-1}\inner{K u_k + v_k - Y}{K\hat u
  - K u_k} + \inner{p_{k-1}}{K u_k - K \hat u} + \delta_k\\
  &  = J(\hat u) +
  \inner{p_{k-1}}{K u_k - K \hat u} + \smallo(1)
\end{split}
\end{equation}
Using the relation $K\hat u  + \hat v = Y$ we
further find
\begin{multline}\label{app:aux11}
  \inner{p_{k-1}}{K u_k - K \hat u} =  \inner{p_{k-1}}{K u_k - Y + \hat v} \\
  = \inner{p_{k-1}}{K u_k +v_k - Y} - \inner{p_{k-1}}{ v_k - \hat v} =
  \smallo(1) - \inner{p_{k-1}}{ v_k - \hat v}
\end{multline}
From the definition of $v_k$ in \eqref{app:noise} and from the fact that $\hat
v,v_k \in \mathcal{C}$ it follows that
\begin{equation*}
  \inner{Y + \lambda p_{k-1} - (K u_{k-1} + v_k)}{\hat v - v_k} \leq \eps_k
\end{equation*}
which in turn implies that
\begin{multline}\label{app:aux12}
  - \inner{p_{k-1}}{ v_k - \hat v} \leq \lambda^{-1} \inner{Y  - (K u_{k-1}
  + v_k)}{v_k- \hat v} + \eps_k \\ = \lambda^{-1} \inner{Y  - (K u_{k}
  + v_k)}{v_k- \hat v} + \lambda^{-1} \inner{Ku_k - K u_{k-1}}{v_k- \hat v} +
  \eps_k = \smallo(1)
\end{multline}
Combining \eqref{app:aux10}, \eqref{app:aux11} and \eqref{app:aux12} gives
\begin{equation*}
  \limsup_{k\ra\infty} J(u_k) \leq J(\hat u).
\end{equation*}
Now, choose a subsequence $\set{u_{\rho(k)}}_{k\in\N}$ such that $u_{\rho(k)}
\rightharpoonup \tilde u$. Since $J$ is convex and lower semi-continuous it is
also weakly lower semi-continuous and hence the previous estimate yields
\begin{equation*}
  J(\tilde u) \leq\liminf_{k\ra\infty} J(u_{\rho(k)}) \leq J(\hat u).
\end{equation*}
Moreover, we have that $v_{\rho(k)} \in \mathcal{C}$ for all $k\in\N$. Since
$\mathcal{C}$ is closed  we conclude that $\hat v \in \mathcal{C}$.
Since $K\tilde u + \tilde v = Y$ this shows that $(\tilde u, \tilde v)$ solves
\eqref{impl:linconstr} and thus $J(\tilde u) = J(\hat u)$.
\end{proof}

We proceed with the proof of Theorem \ref{impl:alaconvcor}. Again we present a
generalized version. To this end, let $D$ and $E$ be as in Proposition
\ref{app:alaconv}. 

\begin{prop}\label{app:alaconvcor} 
There exists a constant $C = C(\lambda, \hat u, \hat v, \hat p, E, D)$ such that 
\begin{equation*}
  0\leq J(u[\tau]) -  J(\hat u) - \inner{K^*\hat p}{u[\tau] - \hat u }_U\leq 
  C \tau + \delta_{k[\tau]} + \eps_{k[\tau]}\quad\forall(\tau >
  0).
\end{equation*}
\end{prop}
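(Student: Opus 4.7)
The lower bound is immediate: since $(\hat u, \hat v, \hat p)$ is a saddle point, the Karush-Kuhn-Tucker relations \eqref{impl:kkt} give $K^*\hat p \in \partial J(\hat u)$, and the definition of the subdifferential yields $J(u[\tau]) - J(\hat u) - \inner{K^*\hat p}{u[\tau] - \hat u}_U \geq 0$. So all the work is in the upper bound.

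My plan is to specialize the variational inequality \eqref{app:primal} at $k = k[\tau]$, testing against $u = \hat u$. Writing $d_k := Ku_k + v_k - Y$ and using $K\hat u = Y - \hat v$, one has the algebraic identity $K\hat u - Ku_k = (v_k - \hat v) - d_k$. Substituting this into \eqref{app:primal} and then subtracting the term $\inner{K^*\hat p}{u_k - \hat u}_U = \inner{\hat p}{d_k - (v_k - \hat v)}$ from both sides produces the master inequality
\begin{equation*}
 J(u_k) - J(\hat u) - \inner{K^*\hat p}{u_k - \hat u}_U \leq -\lambda^{-1}\norm{d_k}^2 + \lambda^{-1}\inner{d_k}{v_k - \hat v} + \inner{p_{k-1} - \hat p}{d_k} - \inner{p_{k-1} - \hat p}{v_k - \hat v} + \delta_k.
\end{equation*}
The breaking condition of Algorithm \ref{impl:ala} at $k = k[\tau]$ ensures $\norm{d_{k[\tau]}} \leq \tau$ and $\norm{K(u_{k[\tau]} - u_{k[\tau]-1})} \leq \tau$, so the first two terms on the right-hand side are $O(\tau)$ as soon as $\norm{v_{k[\tau]} - \hat v}$ is bounded, and the third is $O(\tau)$ once $\norm{p_{k[\tau]-1}}$ is bounded.

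Both boundedness statements come from inequality \eqref{app:aux9} established in Lemma \ref{app:lemma}: it gives a constant $C_1 = C_1(\lambda,\hat u,\hat p,D,E)$ with $\norm{K\bar u_k}, \lambda\norm{\bar p_k} \leq C_1$ uniformly in $k$, whence via the identity $v_k - \hat v = d_k - K\bar u_k$ we also control $\norm{v_k - \hat v} \leq \tau + C_1$ at the stopping index. The genuinely delicate term is $-\inner{p_{k-1} - \hat p}{v_k - \hat v}$, since $v_k$ itself is not forced close to $\hat v$ by the stopping rule. Here I use the other variational inequality \eqref{app:noise} tested against $v = \hat v$: because $G(v_k) = G(\hat v) = 0$, rearrangement gives
\begin{equation*}
 -\inner{p_{k-1}}{v_k - \hat v} \leq \eps_k - \lambda^{-1}\inner{d_k - K(u_k - u_{k-1})}{v_k - \hat v},
\end{equation*}
and the subgradient relation $\hat p \in \partial G(\hat v)$ from \eqref{impl:kkt}, combined with $G(v_k) = G(\hat v) = 0$, gives $\inner{\hat p}{v_k - \hat v} \leq 0$. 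Summing these two estimates controls the problematic term by $\eps_k + 2\lambda^{-1}\tau(\tau + C_1)$.

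Substituting these bounds back into the master inequality, discarding the non-positive $-\lambda^{-1}\norm{d_k}^2$, and collecting the terms quadratic and linear in $\tau$ (bounding $\tau^2 \leq \tau$ for $\tau$ below some fixed threshold, otherwise absorbing into the constant), yields the claimed estimate with an explicit $C$ depending on $\lambda$, $\norm{K\hat u}$, $\norm{\hat p}$, $D$ and $E$ through $C_1$. The main conceptual obstacle is exactly the coupling between the two variational inequalities via $v_k - \hat v$, which the stopping criterion does not directly control; using \eqref{app:noise} to trade this quantity for the small residuals $d_k$ and $K(u_k - u_{k-1})$ is the key trick. Corollary \ref{impl:alaconvcortwo} then follows in the quadratic case $J(u) = \tfrac12 \norm{Lu}_V^2$ from the three-point identity $J(u) - J(\hat u) - \inner{L^*L\hat u}{u - \hat u}_U = \tfrac12 \norm{L(u - \hat u)}_V^2$ and the fact that $K^*\hat p \in \partial J(\hat u) = \{L^*L\hat u\}$.
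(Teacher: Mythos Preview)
Your proposal is correct and follows essentially the same route as the paper's proof: both test \eqref{app:primal} against $u=\hat u$, use the identity $K\hat u - Ku_k = (v_k-\hat v) - d_k$, invoke the uniform bounds \eqref{app:aux9} on $\norm{K\bar u_k}$ and $\norm{\bar p_k}$, control the cross-term $-\inner{p_{k-1}}{v_k-\hat v}$ via \eqref{app:noise} tested against $v=\hat v$, and use $\inner{\hat p}{v_k-\hat v}\leq 0$ from $\hat p\in\partial G(\hat v)$. The only cosmetic difference is that you subtract $\inner{K^*\hat p}{u_k-\hat u}_U$ at the outset to obtain a single ``master inequality'', whereas the paper first bounds $J(u_k)-J(\hat u)$ and then absorbs the extra $\norm{\hat p}\tau$ separately; the ingredients and the resulting constant are the same (note also that \eqref{app:aux9} is derived in the proof of Proposition~\ref{app:alaconv}, not in Lemma~\ref{app:lemma} itself).
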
 

\begin{proof}
  Define $B^2 = (4\lambda^{-1}+1)D + E$. Then it follows from \eqref{app:aux9}
  that
  \begin{gather}
    \lambda^{-1}\norm{Ku_k - K\hat u} \leq \norm{\hat p} +
    \lambda^{-1}\norm{K\hat u} + B \label{app:aux456} \\
     \norm{p_k} \leq 2\norm{\hat p} + \lambda^{-1} \norm{K\hat u} +
     B,\label{app:aux789}
  \end{gather}
  where $(\hat u, \hat v, \hat p)$ is an arbitrary saddle point of
  $L_\lambda(u,v,p)$. Assume that $\tau > 0$ and that $k = k[\tau]$ is such that
  \begin{equation*}
    \max(\norm{K u_k + v_k - Y}, \norm{K u_{k-1} - K u_k}) \leq \tau.
  \end{equation*}
  Then, it follows from \eqref{app:aux10}, \eqref{app:aux12},
  \eqref{app:aux456} and \eqref{app:aux789} and that
  \begin{equation}\label{app:aux123}
      \begin{split}
    J(u_k) & \leq J(\hat u) + \lambda^{-1}\inner{K u_k + v_k - Y}{K\hat u
  - K u_k} + \inner{p_{k-1}}{K u_k - K \hat u} + \delta_k\\
  & \leq J(\hat u) + \tau(\norm{\hat p} +
    \lambda^{-1}\norm{K\hat u} + B) + \norm{p_{k-1}}\tau + \inner{p_{k-1}}{\hat
    v - v_k} + \delta_k \\ & \leq J(\hat u) + \tau(3\norm{\hat p} +
    2\lambda^{-1}\norm{K\hat u} + 2B) + 2\lambda^{-1}\tau
      \norm{v_k - \hat v} + \delta_k + \eps_k
  	\end{split}
  \end{equation}
  After observing that $Ku_k - K\hat u + v_k -
  \hat v = Y$ it follows that $\norm{v_k - \hat v} \leq \tau + \norm{Ku_k -
  K\hat u}$ and combining \eqref{app:aux123} and \eqref{app:aux456} gives
  \begin{equation*}
  J(u_k) \leq J(\hat u) + \tau\left( 5\norm{\hat p} + 4\lambda^{-1}\norm{K\hat
  u} + 4B +  2\lambda^{-1}\tau \right) + \delta_k + \eps_k.
  \end{equation*}
  Now, observe that from the definition of the subgradient and \eqref{impl:kkt},
  it follows that $J(u_k) \geq J(\hat u) + \inner{K^*\hat p}{u_k-\hat u}$
  and that $\inner{\hat p}{v_k -\hat v} \leq 0$. This and the fact that $K\hat u
  + \hat v = Y$ implies that 
  \begin{equation}\label{app:aux1000}
      \begin{split}
      0& \leq J(u_k)  -J(\hat u) - \inner{K^*\hat p}{u_k-\hat u} \\
      & = J(u_k)  -J(\hat u) - \inner{\hat p}{Ku_k + v_k -Y} + \inner{\hat
      p}{K\hat u + \hat v - Y} + \inner{\hat p}{v_k - \hat v} \\
      & \leq J(u_k)  -J(\hat u) + \norm{\hat p} \tau \\
      & \leq \tau\left( 6\norm{\hat p} + 4\lambda^{-1}\norm{K\hat
  u} + 4B +  2\lambda^{-1}\tau \right) + \delta_k + \eps_k.
      \end{split}
  \end{equation}
  This together with \eqref{app:aux123} finally proves the first part 
 of the assertion.
 \end{proof}
 
 \begin{rem}
 For the case when $D = E = 0$, it is seen from the proof of Proposition
 \eqref{app:alaconvcor} that the constant $C$ takes the simple form
 \begin{equation*}
 C = \tau\left( 6\norm{\hat p} + \frac{4\norm{K\hat
  u} + 2\tau}{\lambda} \right).
 \end{equation*} 
 \end{rem}
 
 \begin{proof}[Proof of Corollary \ref{impl:alaconvcortwo}]
 Assume that $J(u) = \frac{1}{2}\norm{Lu}^2_V$. Then it follows (see e.g.
 \cite[Lem. 2.4]{frick:FriSch10}) that the subdifferential $\partial J(\hat u)$
 consists of the single element $L^*L \hat u$. Hence the extremality relations
 \eqref{impl:kkt} imply that $K^*\hat p = L^*L \hat u$. Now it is easy to
 observe that  
 \begin{equation*}
    J(u_k) - J(\hat u ) - \inner{K^*\hat p}{u_k - \hat u} =
    \frac{1}{2}\norm{L(u_k - \hat u)}^2_V.
 \end{equation*} 
\end{proof}

\bibliographystyle{abbrv}
\bibliography{literature}
     
\end{document}